\documentclass[a4paper,reqno,10pt]{amsart}
\makeatletter
\newcommand*{\rom}[1]{\expandafter\@slowromancap\romannumeral #1@}
\makeatother
\usepackage{epsf,graphicx,epsfig}
\usepackage{amscd}
\usepackage{amsmath,latexsym,amssymb,amsthm}
\usepackage{fdsymbol}
\usepackage[nospace,noadjust]{cite}
\usepackage{textcomp}
\usepackage{setspace,cite}
\usepackage{mathrsfs,amsmath}
\usepackage{chemarrow}
\usepackage{lscape,fancyhdr,fancybox}
\usepackage{stmaryrd}

\usepackage[all,cmtip]{xy}
\usepackage{tikz-cd}
\usepackage{cancel}

\usepackage{graphicx}
\usepackage[usestackEOL]{stackengine}
\usetikzlibrary{shapes, arrows, decorations.markings}
\usepackage{graphicx}

\usepackage{color}
\usepackage{bbm, dsfont}
\usepackage{xcolor}
\usepackage{url}
\usepackage{enumerate}
\usepackage[mathscr]{euscript}

  \theoremstyle{plain}
\swapnumbers
    \newtheorem{thm}{Theorem}[section]
    \newtheorem{proposition}[thm]{Proposition}
     \newtheorem{theorem}[thm]{Theorem}
   \newtheorem{lemma}[thm]{Lemma}

    \newtheorem{subsec}[thm]{}
\theoremstyle{definition}
    \newtheorem{definition}[thm]{Definition}
        \newtheorem{remark}[thm]{Remark}
    \newtheorem{exam}[thm]{Example}

\theoremstyle{remark}

\newcommand{\Z}{\mathbb{Z}}

\newcommand{\C}{\mathbb{C}}
\newcommand{\GG}{\mathcal{G}}
\newcommand{\D}{\mathcal{D}}
\newcommand{\M}{\mathcal{M}}
\newcommand{\E}{\mathcal{E}}
\newcommand{\p}{\partial}
\newcommand{\la}{\lambda}

\title{}
\author{}
\date{}
\usepackage{amssymb}

\usepackage{hyperref}
\hypersetup{
colorlinks,
citecolor=blue,
filecolor=black,
linkcolor=blue,
urlcolor=black
}

\usepackage[left=25mm,right=25mm,top=25mm,bottom=25mm,paper=a4paper]{geometry}

\begin{document}



\title[Diassociative conformal algebras and their deformation cohomology]{Diassociative conformal algebras and their deformation cohomology}

\author{Anupam Sahoo}
\address{Department of Mathematics,
Indian Institute of Technology, Kharagpur 721302, West Bengal, India.}
\email{anupamsahoo23@gmail.com, anupamsahoo135@kgpian.iitkgp.ac.in}

\author{Apurba Das}
\address{Department of Mathematics,
Indian Institute of Technology, Kharagpur 721302, West Bengal, India.}
\email{apurbadas348@gmail.com, apurbadas348@maths.iitkgp.ac.in}

\begin{abstract}
In this paper, we introduce diassociative conformal algebras that unify both associative conformal algebras and diassociative algebras. We show that the category of diassociative conformal algebras is equivalent to the category of equivalence classes of formal distribution diassociative algebras. By employing its Maurer-Cartan characterization, we then define the cohomology of a diassociative conformal algebra and show that it carries a Gerstenhaber algebra structure. Our cohomology can be used to study deformations and extensions of the structure. Finally, we define averaging operators on associative conformal algebras and find their close relations with diassociative conformal algebras.
\end{abstract}

\maketitle

\begin{center}
    {\em To the memory of Dr. Anita Naolekar (Majumdar)}
\end{center}

\medskip

\medskip

\noindent {\sf 2020 MSC classification.} 16E40, 16S70, 16S80, 81T05.

\noindent {\sf Keywords.} Diassociative algebras, Diassociative conformal algebras, Cohomology, Gerstenhaber algebras, Averaging operators.

\noindent

\thispagestyle{empty}

\tableofcontents


\section{Introduction}\label{sec1}
Conformal algebras, introduced by Kac, provide an algebraic framework for the operator product expansion of chiral fields in conformal field theory \cite{ bor, bpz, kac-1}. In this regard, the notion of Lie conformal algebras was introduced as a $\C[\p]$-module $\mathfrak{g}$ equipped with a {conformal sesquilinear} map (called the $\la$-bracket) $[\cdot_{\la}\cdot]: \mathfrak{g} \otimes \mathfrak{g} \to \mathfrak{g}[\la]$ satisfying the following version of the skew-symmetry and the conformal Jacobi identity:
\begin{align*}
[a_{\la}b] = -[b_{-\la -\p}a] ~~ \text{ and } ~~  [a_{\la}[b_{\mu}c]]=[[a_{\la}b]_{\la +\mu} c] +[b_{\mu}[a_{\la} c]], \text{ for } a,b,c \in \mathfrak{g}.
\end{align*}
Lie conformal algebras play an important role in studying infinite-dimensional Lie algebras satisfying the locality property \cite{andrea-kac, kac-2}. More precisely, the category of finite Lie conformal algebras is equivalent to the category of infinite-dimensional Lie algebras spanned by Fourier coefficients of a finite number of pairwise local fields (or formal distributions) that are closed under the operator product expansion.

The notion of associative conformal algebras naturally arises from the study of representations of Lie conformal algebras. Explicitly, an associative conformal algebra is given by a $\C[\p]$-module $\mathcal{A}$ equipped with a conformal sesquilinear map (called the $\la$-product) $\cdot_{\la} \cdot: \mathcal{A}\otimes \mathcal{A} \to \mathcal{A}[\la]$ satisfying the conformal associativity condition:
\begin{align*}
 a_{\la}(b_{\mu}c)= (a_{\la}b)_{\la + \mu } c, \text{ for }a, b, c \in \mathcal{A}.
\end{align*}
This structure induces a Lie conformal algebra structure on $\mathcal{A}$ via the  $\la$-bracket $[a_{\la}b] = a_{\la}b - b_{-\la - \p} a$. In addition to the conformal associativity, several other identities hold in any associative conformal algebra $(\mathcal{A}, \cdot_{\la} \cdot)$, for example, 
\begin{align}
    a_{\la}(b_{-\p -\mu}c) =~& (a_{\la}b)_{-\p -\mu}c,\label{equation a}\\
    a_{-\p -\la}(b_{\mu}c) =~& (a_{-\p -\mu}b)_{-\p +\mu -\la }c,\label{equation b}\\
    a_{-\p -\la}(b_{-\p -\mu}c) =~& (a_{-\p +\mu -\la}b)_{-\p -\mu }c \label{equation c},
\end{align} 
for all $a,b,c \in \mathcal{A}$ \cite{andrea-kac}. Structure theory, representations and cohomology of Lie conformal algebras and associative conformal algebras are explored in several articles \cite{andrea-kac, fattori-kac, hong-su,Kolesnikov, roitman-1}. In \cite{bakalov-kac-voronov}, the authors used the cohomology of conformal algebras to study their extensions and deformations. Viewing conformal algebras as a generalization of classical algebras, many studies on conformal algebras have been developed inspired by the classical cases. For example, the authors in \cite{Hou-Shen-Zhao} have recently found a Gerstenhaber algebra structure on the cohomology of an associative conformal algebra. That is, there is a degree $0$ graded commutative associative product and a degree $-1$ graded Lie bracket that are compatible in terms of a Leibniz rule. This result is a generalization from the classical context \cite{gers-ring,gers-voro}. See also \cite{bakalov-kac-voronov, bpz, andrea-kac,fattori-kac,kac-2,hong-su,hong-bai,Kolesnikov,koli-Sart,roitman-1,sole-kac} for some other results about Lie and associative conformal algebras. On the other hand, the notion of Leibniz conformal algebras was introduced in \cite{bakalov-kac-voronov} as a noncommutative generalization of Lie conformal algebras. In \cite{zhang}, the author has shown that the category of Leibniz conformal algebras is equivalent to the category of equivalence classes of formal distribution Leibniz algebras.

\medskip

 In the study of classical algebraic structures, the notion of diassociative algebras was introduced by Loday in his study of Leibniz algebras \cite{loday1}. A diassociative algebra is a triple $(D,\dashv,~ \! \vdash)$ consisting of a vector space $D$ equipped with two bilinear operations $\dashv,~ \! \vdash: D \times D \to D $ such that for all $a, b, c \in D$, the following identities hold:
\begin{align}
a \dashv (b \dashv c) =~& (a \dashv b ) \dashv c, \label{di-1}\\
a \dashv (b \vdash c) =~& (a \dashv b ) \dashv c, \label{di-2}\\
a \vdash (b \dashv c) =~& (a \vdash b) \dashv c, \label{di-3}\\
a \vdash (b \vdash c) =~& (a \dashv b) \vdash c, \label{di-4}\\
a \vdash (b \vdash c) =~& (a \vdash b) \vdash c \label{di-5}.
\end{align}
Diassociative algebras are related to Leibniz algebras in the same way associative algebras are related to Lie algebras \cite{loday-pirashvili}. In \cite{Frabetti,loday1}, the authors developed the cohomology theory of diassociative algebras. It has been observed that the combinatorics of planar binary trees play a crucial role in their construction of cohomology. In \cite{yau,maj-muk,Das}, the authors showed that the cohomology of a diassociative algebra also carries a Gerstenhaber algebra structure by constructing certain combinatorial operations on planar binary trees. The cohomology of a diassociative algebra is closely related to extensions and formal deformations of the structure \cite{majumdar-mukherjee}. Averaging operators on associative algebras are closely related to diassociative algebras \cite{loday1}. In \cite{das-cont}, the author extends this result further to the cohomological set-up. See also \cite{Frabetti,loday1, majumdar-mukherjee} for more studies on diassociative algebras. The Koszul dual of the operad of diassociative algebras is the operad of dendriform algebras \cite{loday1}. They are equally important due to their close connections with Rota-Baxter operators, splitting of operations, pre-Lie algebras and the Yang-Baxter equation (see \cite{bai-guo-ni} and the references therein). 
   

\medskip

In this paper, we introduce and study diassociative conformal algebras. By definition, a diassociative conformal algebra unifies associative conformal algebras on the one hand and diassociative algebras on the other. We show that a diassociative conformal algebra gives rise to a Leibniz conformal algebra in the same way a diassociative algebra produces a Leibniz algebra (cf. Proposition \ref{prop-di-leib}). We show that diassociative conformal algebras appear naturally in the study of formal distribution diassociative algebras. More precisely, the category of diassociative conformal algebras is equivalent to the category of equivalence classes of formal distribution diassociative algebras (cf. Theorem \ref{thm-categ-equiv}). Next, given a $\C[\p]$-module $\D$, we construct a nonsymmetric operad $\mathcal{O}$ that yields a graded Lie algebra whose Maurer-Cartan elements correspond to diassociative conformal algebra structures on $\D$ (cf. Theorem \ref{thm-operad}, Proposition \ref{1-1}). Then we define the cohomology of a diassociative conformal algebra and show that the cohomology inherits a Gerstenhaber algebra structure (cf. Theorem \ref{thm-gers}). We show that the cohomology of a diassociative conformal algebra is isomorphic to the cohomology of a suitable subcomplex of its annihilation diassociative algebra (cf. Theorem \ref{th-39}). As a consequence, we describe the cohomology of a current diassociative conformal algebra in terms of a suitable polynomial diassociative algebra (cf. Proposition \ref{prop-cur}). Finally, we will use our cohomology to describe extensions and deformations of the diassociative conformal structure (cf. Theorem \ref{thm-ext}, Theorem \ref{thm-def}).

\medskip

The notion of averaging operators on Lie conformal algebras was considered in \cite{Koles} while studying the singular part of a solution of the classical Yang-Baxter equation on a Lie algebra. In this paper, we consider averaging operators on associative conformal algebras. A more general notion could be a relative averaging operator on an associative conformal algebra with respect to a conformal bimodule. We observe that a (relative) averaging operator on an associative conformal algebra induces a diassociative conformal algebra structure (cf. Proposition \ref{prop*}). Conversely, given any diassociative conformal algebra, one can construct a relative averaging operator so that the induced diassociative conformal structure is the prescribed one (cf. Theorem \ref{thm-diass-to-avg}, Proposition \ref{prop-left}). Motivated by the theory of averaging operators in the classical setting \cite{das-cont}, we find a Maurer-Cartan characterization and define cohomology of relative averaging operators on an associative conformal algebra (cf. Theorem \ref{thm-mc-avg}).

\medskip

The paper is organized as follows. In Section \ref{sec2}, we introduce and study diassociative conformal algebras. Among other results, we show that the category of diassociative conformal algebras is equivalent to the category of equivalence classes of formal distribution diassociative algebras. Section \ref{sec3} is devoted to the Maurer-Cartan characterization and cohomology of diassociative conformal algebras. We show that the cohomology carries a natural Gerstenhaber algebra structure. In Section \ref{sec4}, we study extensions and deformations of diassociative conformal algebras using the cohomology theory developed earlier. Finally, Section \ref{sec5} considers (relative) averaging operators, their relations with diassociative conformal algebras and cohomology associated with such operators.

\medskip

\section{Diassociative Conformal Algebras}\label{sec2}
In this section, we first introduce diassociative conformal algebras and provide various examples. We establish some useful identities in a diassociative conformal algebra and show that a suitable skew-symmetrization of this structure yields a Leibniz conformal algebra. We end this section by showing that the category of diassociative conformal algebras is equivalent to the category of equivalence classes of formal distribution diassociative algebras.

\begin{definition}
 A \textbf{diassociative conformal algebra} (or simply a {\bf diconformal algebra}) is a $\C[\p]$-module $\D$ equipped with two $\C$-bilinear maps (called $\la$-products) $\dashv_{\la}, ~  \! \vdash_{\la} : \D \times \D \rightarrow \D[ \la ]$ both of which are conformal sesquilinear in the sense that
 \begin{align}
     \partial a \dashv_\la b = - \la (a \dashv_\la b), \quad a \dashv_\la \partial b = (\partial + \la) ~  \! a \dashv_\la b, \label{ses1}\\
     \partial a \vdash_\la b = - \la (a \vdash_\la b), \quad a \vdash_\la \partial b = (\partial + \la) ~  \! a \vdash_\la b, \label{ses2}
 \end{align}
 that satisfy the following list of identities: 
    \begin{align}
   a\dashv_{\la}(b \dashv_{\mu}c)=~&(a\dashv_{\la}b)\dashv_{\la + \mu}c,\label{equation 1} \\
 a \dashv_{\la}(b \vdash_{\mu}c) =~& (a\dashv_{\la}b)\dashv_{\la + \mu}c,\label{equation 2}\\
  a\vdash_{\la}(b \dashv_{\mu}c) =~&  (a\vdash_{\la}b)\dashv_{\la + \mu}c,\label{equation 3}\\
   a \vdash_{\la}(b\vdash_{\mu}c)=~&(a\dashv_{\la}b)\vdash_{\la+\mu}c,\label{equation 4}\\
  a \vdash_{\la}(b\vdash_{\mu}c)=~&(a\vdash_{\la}b)\vdash_{\la + \mu}c, \label{equation 5}
\end{align} 
for all $a,b,c \in \D$. A diassociative conformal algebra as above may be denoted by the triple $(\D, \dashv_{\la},\vdash_{\la})$ or simply by $\D$.
\end{definition}

It follows from identities (\ref{equation 1}) and (\ref{equation 5}) that each of the $\la$-products $\dashv_{\la}$ and $\vdash_{\la}$ in a diassociative conformal algebra independently satisfies the conformal associativity. Furthermore, these two $\la$-products are related by the additional conformal associativity-type identities (\ref{equation 2}), (\ref{equation 3}), and (\ref{equation 4}). Consequently, every associative conformal algebra $(\mathcal{A}, \cdot_{\la} \cdot)$ can be viewed naturally as a diassociative conformal algebra by identifying both the operations $\dashv_{\la}$ and $\vdash_{\la}$ with the associative $\la$-product $\cdot_{\la} \cdot$ on $\mathcal{A}$.

\medskip

Let $(\D,\dashv_{\la},\vdash_{\la})$ be a diassociative conformal algebra. For any $j \geq 0$, we define the $j$-th products for the associative conformal algebras $(\D, \dashv_{\la})$ and $(\D, \vdash_{\la})$ by taking  
\begin{align} \label{j}
    a \dashv_{\la} b = \sum_{j  \geq 0} ( a \dashv_{(j)} b ) ~  \! \frac{\la^j}{j!} ~~~~~ \text{ and } ~~~~~  a \vdash_{\la}b = \sum_{j \geq 0} ( a\vdash_{(j)}b ) ~ \! \frac{\la^j}{j!}.
\end{align}
Then the conformal sesquilinearity conditions (\ref{ses1}) and (\ref{ses2}) can be understood as
\begin{align*}
    \partial a \dashv_{(j)} b = - j ~ \! (a \dashv_{(j-1)} b), \quad a \dashv_{(j)} \partial b = j ~ \! ( a \dashv_{(j-1)} b) + \partial (a \dashv_{(j)} b),\\
     \partial a \vdash_{(j)} b = - j ~ \! (a \vdash_{(j-1)} b), \quad a \vdash_{(j)} \partial b = j ~ \! ( a \vdash_{(j-1)} b) + \partial (a \vdash_{(j)} b).
\end{align*}
Moreover, the diassociative conformal algebra identities (\ref{equation 1})--(\ref{equation 5}) are equivalent to the following identities expressed in terms of the $j$-th products:
\begin{align}
  a\dashv_{(m)}(b \dashv_{(n)}c) =~& \sum_{r = 0}^{m} \binom{m}{r}(a\dashv_{(r)}b) \dashv_{(m+n-r)}c,\label{equation 6}\\
 a \dashv_{(m)}(b \vdash_{(n)}c) =~& \sum_{r = 0}^{m} \binom{m}{r}(a\dashv_{(r)}b) \dashv_{(m+n-r)}c,\label{equation 7}\\
 a\vdash_{(m)}(b \dashv_{(n)}c) = ~&  \sum_{r = 0}^{m} \binom{m}{r}(a\vdash_{(r)}b)\dashv_{(m+n-r)}c,\label{equation 8}\\
 a \vdash_{(m)}(b\vdash_{(n)}c)=~&\sum_{r = 0}^{m} \binom{m}{r} (a\dashv_{(r)}b)\vdash_{(m+n-r)}c,\label{equation 9}\\
 a \vdash_{(m)}(b\vdash_{(n)}c)=~&\sum_{r = 0}^{m} \binom{m}{r} (a\vdash_{(r)}b)\vdash_{(m+n-r)}c,\label{equation 10}
\end{align}
for all $a,b,c\in\D$ and $m,n\geq 0$.

\begin{definition}
Let $(\D,\dashv_{\la},\vdash_{\la})$ and
$(\D',\dashv_{\la}',\vdash_{\la}')$ be two diassociative conformal algebras. Then a {\bf homomorphism} of diassociative conformal algebras
from $\D$ to
$\D'$
is a $\C[\p]$-linear map
$\varphi:\D\rightarrow\D'$ satisfying
\begin{align*}
\varphi(a\dashv_{\la}b)
=\varphi(a)\dashv_{\la}'\varphi(b) ~~~~ \text{ and } ~~~~
\varphi(a\vdash_{\la}b)
=\varphi(a)\vdash_{\la}'\varphi(b), \text{ for all } a, b \in \D.
\end{align*}
\end{definition}
\noindent We denote by $\mathbf{DiassC}$ the category whose objects are diassociative conformal algebras and morphisms are homomorphisms of diassociative conformal algebras.

\medskip

Let $(\D,\dashv_{\la},\vdash_{\la})$ be a diassociative conformal algebra. Then a {\bf subalgebra} of $\D$ is a $\C[\p]$-submodule $\E\subseteq\D$ that is closed under both the $\la$-products, i.e., $a\dashv_{\la}b \in\E[\la]$ and $a\vdash_{\la}b\in\E[\la]$, for all $a,b\in\E$. On the other hand, an {\bf ideal} of $\D$ is a $\C[\p]$-submodule $\mathcal{I} \subseteq \D$ such that $a\dashv_{\la}b \in \mathcal{I} [\la]$ and $ a\vdash_{\la}b \in\mathcal{I}[\la]$, whenever $a \in \mathcal{I}$ or $b \in \mathcal{I}$.


\begin{proposition}
Let $(\mathcal{D}, \dashv_\la, \vdash_\la)$ be a diassociative conformal algebra, where $\mathcal{D}=\mathbb{C}[\partial]a$ is a rank one free $\mathbb{C}[\partial]$-module. Then there exists a unique $\alpha \in \mathbb{C}$ such that
\begin{align*}
a\dashv_{\la}a=\alpha a
~~~~\text{ and }~~~~~
a\vdash_{\la}a=\alpha a.
\end{align*}
\end{proposition}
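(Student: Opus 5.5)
The approach is to reduce everything to scalar polynomial identities. Since $\D=\C[\p]a$ is free of rank one, there are unique polynomials $p(\p,\la), q(\p,\la)\in\C[\p,\la]$ with
\[
a\dashv_\la a = p(\p,\la)\,a, \qquad a\vdash_\la a = q(\p,\la)\,a .
\]
Using the sesquilinearity conditions (\ref{ses1}) and (\ref{ses2}), one has $f(\p)a\dashv_\la g(\p)a = f(-\la)g(\p+\la)\,(a\dashv_\la a)$, and similarly for $\vdash_\la$. This lets me rewrite each of (\ref{equation 1})--(\ref{equation 5}), applied to $a,a,a$, as an identity in $\C[\p,\la,\mu]$. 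Uniqueness of $\alpha$ is automatic once existence is shown, because $a$ is a free generator.

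\textbf{Step 1: $p$ and $q$ are constants.} Identity (\ref{equation 1}) becomes
\[
p(\p+\la,\mu)\,p(\p,\la) = p(-\la-\mu,\la)\,p(\p,\la+\mu).
\]
Suppose $p\neq 0$ and let $d=\deg_\p p$, with leading coefficient $c(\la)\neq 0$. The left side then has $\p$-degree $2d$ with leading coefficient $c(\mu)c(\la)\neq 0$. The right side has $\p$-degree at most $d$. Hence $d=0$, i.e. $p=p(\la)$. The identity then reads $p(\mu)p(\la)=p(\la)p(\la+\mu)$. Cancelling the nonzero polynomial $p(\la)$ in the domain $\C[\la,\mu]$ gives $p(\mu)=p(\la+\mu)$, so $p$ is a constant $\alpha$. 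The case $p=0$ gives $\alpha=0$ directly. The identical argument applied to (\ref{equation 5}) shows $q$ is a constant $\beta$.

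\textbf{Step 2: $\alpha=\beta$.} With constant structure polynomials, the mixed identities reduce to scalar equations:
\begin{itemize}
\item (\ref{equation 2}) gives $\alpha\beta=\alpha^2$;
\item (\ref{equation 4}) gives $\beta^2=\alpha\beta$.
\end{itemize}
Subtracting these yields $(\beta-\alpha)^2 = \beta^2-2\alpha\beta+\alpha^2 = 0$, so $\alpha=\beta$. Identity (\ref{equation 3}) then holds automatically.

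The only non-routine point is the degree comparison in Step 1. It needs care with the leading coefficient in $\p$ (it is a nonzero polynomial in the other variables, so the product of two such is nonzero) and with the degenerate case $p=0$. The rest is bookkeeping with sesquilinearity.
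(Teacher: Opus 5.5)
Your proof is correct and follows essentially the same route as the paper. Both reduce (\ref{equation 1}) and (\ref{equation 5}) to polynomial identities, compare $\p$-degrees to remove the $\p$-dependence, and then use (\ref{equation 2}) and (\ref{equation 4}) to get $\alpha^2=\alpha\beta=\beta^2$. The differences are cosmetic: the paper first specializes $\mu=-\la$ and then compares top coefficients in $\p$ and in $\la$, whereas you keep $\mu$ general and finish the $\la$-step by cancelling $p(\la)$ in $\C[\la,\mu]$. You also make the uniqueness of $\alpha$ explicit, which the paper leaves implicit.
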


\begin{proof}
Since $\mathcal{D}=\mathbb{C}[\partial]a$, there exist polynomials
$f(\partial,\la), \, g(\partial,\la)\in\mathbb{C}[\partial,\la]$ such that $a\dashv_{\la}a=f(\partial,\la)a$ and $a\vdash_{\la}a=g(\partial,\la)a$.
Replacing $\mu$ by $-\la$ in the diassociative conformal identity (\ref{equation 1}), while keeping $\la$ unchanged, yields
\begin{align}\label{p-eq}
f(\partial+\la,-\la)f(\partial,\la)
=f(0,\la)f(\partial,0).
\end{align}
If $f=0$, then $f$ is constant with value $\alpha=0$. Suppose now that $f\neq0$, and write $f(\partial,\lambda)=\sum \limits_{i=0}^{k}a_{i}(\la)\p^i$, where $a_i(\la) \in \C[\la]$ and $a_k(\lambda)\neq0$. For $k>0$, comparing the coefficients of $\partial^{2k}$ on both sides of (\ref{p-eq}), we obtain $a_k(\la)a_k(-\la)=0$. This is not possible as $\mathbb{C}[\la]$ is an integral domain and $a_k(\la)$ is a non-zero polynomial. Hence $k=0$, and consequently $f(\partial,\la)=a_0(\la)$. Substituting this expression into (\ref{p-eq}) gives
\begin{align}\label{eqn-lambda}
a_0(-\la)a_0(\la)=a_0(\la)a_0(0).
\end{align}
Next, let $a_0(\la)= \sum \limits_{i=0}^{t}b_i \la^i$, where $b_i \in \C$ and $b_t\neq0$. If $t>0$, then by comparing the coefficients of $\la^{2t}$ on both sides of (\ref{eqn-lambda}) gives $(-1)^t b_t^2=0$, which is not possible as $b_t\neq0$. Therefore, $t=0$ and hence $f(\partial,\la)=\alpha$, for some $\alpha\in\mathbb{C}$. Similarly, from the diassociative conformal identity (\ref{equation 5}), we obtain $g(\partial,\la)=\beta$, for some $\beta\in\mathbb{C}$. Consequently, $a\dashv_{\la}a=\alpha  a$ and $a\vdash_{\la}a=\beta a$.
Finally, using the identities (\ref{equation 2}) and (\ref{equation 4}), we obtain $\alpha^2=\alpha\beta$ and $\beta^2=\alpha\beta$. This gives $\alpha=\beta$, which proves the result.
\end{proof}

The above proposition shows that any diassociative conformal algebra structure on a rank one free $\mathbb{C}[\partial]$-module is determined by a complex number $\alpha \in \mathbb{C}$. It turns out that up to isomorphism, there are exactly two diassociative conformal algebras on a rank one free $\mathbb{C}[\partial]$-module corresponding to $\alpha = 0$ and $\alpha =1$, respectively.

\begin{exam}\label{cur-d}
Let $(D,\dashv,\vdash)$ be a diassociative algebra. Consider the $\C[\partial]$-module $\operatorname{Cur}(D)=\C[\partial]\otimes D$. We define two $\la$-products on $\operatorname{Cur}(D)$ by
\begin{align*}
\big(p(\partial)\otimes a\big)\dashv_{\la}\big(q(\partial)\otimes b\big)
&:=
\big(p(-\la)q(\la+\partial)\big)\otimes(a\dashv b),\\
\big(p(\partial)\otimes a\big)\vdash_{\la}\big(q(\partial)\otimes b\big)
&:=
\big(p(-\la)q(\la+\partial)\big)\otimes(a\vdash b),
\end{align*}
for all $p(\partial),q(\partial)\in\C[\partial]$ and $a,b\in D$. Then $(\operatorname{Cur}(D),\dashv_{\la},\vdash_{\la} )$ is a diassociative conformal algebra. This is called the
\emph{current diassociative conformal algebra} associated with the
diassociative algebra $D$.
\end{exam}

\begin{exam}
Let $\mathcal{A}$ be an associative conformal algebra and $d:\mathcal{A}\to\mathcal{A}$ be a $\C[\partial]$-linear map satisfying
\begin{align*}
d^{2}=0 ~~~ \text{ and } ~~~ d(a_{\la}b)=(da)_{\la}b+a_{\la}(db), \text{ for all }a,b\in\mathcal{A}.
\end{align*}
In this case, the pair $(\mathcal{A},d)$ is often called a differential associative conformal algebra. We define two $\la$-products on $\mathcal{A}$ by
\begin{align*}
a\dashv_{\la}b:=a_{\la}(db) ~~~ \text{ and } ~~~ 
a\vdash_{\la}b:=(da)_{\la}b, \text{ for }a,b\in\mathcal{A}.
\end{align*}
Then $(\mathcal{A},\dashv_{\la},\vdash_{\la})$ is a diassociative conformal algebra.
\end{exam}

\begin{exam}
Let $\mathcal{A}$ be an associative conformal algebra. A {\bf conformal $\mathcal{A}$-bimodule} is a $\C [\partial]$-module $\M$ equipped with two conformal sesquilinear maps (called left and right $\la$-actions) $ \mathcal{A} \times \M \rightarrow \M [\la]$, $(a, u) \mapsto a_\la u$ and $\M \times \mathcal{A} \rightarrow \M [\la]$, $(u, a) \mapsto u_\la a$ satisfying
\begin{align*}
a_{\la}(b_{\mu}u)=(a_{\la}b)_{\la+\mu}u, \quad
a_{\la}(u_{\mu}b)=(a_{\la}u)_{\la+\mu}b ~~~ \text{ and } ~~~
u_{\la}(a_{\mu}b)=(u_{\la}a)_{\la+\mu}b, \text{ for }a,b\in\mathcal{A},\, u \in\mathcal{M}.
\end{align*}
Given a conformal $\mathcal{A}$-bimodule $\M$, let $f:\M\to\mathcal{A}$ be a conformal $\mathcal{A}$-bimodule homomorphism. That is, $f$ is a $\C[\partial]$-linear map satisfying $f(a_{\la}u)=a_{\la}f(u)$ and $f(u_{\la}a)=f(u)_{\la}a$, for all $a \in \mathcal{A}$, $u \in \M$. Define two $\la$-products on $\M$ by
\begin{align*}
u\dashv_{\la}v:=u_{\la}f(v) ~~~ \text{ and } ~~~ 
u\vdash_{\la}v:=f(u)_{\la}v, \text{ for all } u,v\in\M.
\end{align*}
Then $(\M,\dashv_{\la},\vdash_{\la})$ is a diassociative conformal algebra.
\end{exam}
\begin{exam}
Let $\mathcal{A}$ be an associative conformal algebra, and take $\mathcal{D}=\underbrace{\mathcal{A}\oplus\cdots\oplus\mathcal{A}}_{n\text{ copies}}$ be the direct sum of $n$ copies of $\mathcal{A}$. We define two $\la$-products on $\mathcal{D}$ by
\begin{align*}
(a_1,\ldots,a_n)\dashv_{\la}(b_1,\ldots,b_n)
&:=
\big(
a_{1\la} (\sum_{i=1}^{n}b_i ),
\ldots,
a_{n\la} (\sum_{i=1}^{n}b_i )
\big),\\
(a_1,\ldots,a_n)\vdash_{\la}(b_1,\ldots,b_n)
&:=
\big(
(\sum_{i=1}^{n}a_i)_{\la}b_1,
\ldots,
(\sum_{i=1}^{n}a_i )_{\la}b_n
\big),
\end{align*}
for $(a_1,\ldots,a_n),(b_1,\ldots,b_n)\in\mathcal{D}$. Then $(\mathcal{D},\dashv_{\la},\vdash_{\la})$ is a diassociative conformal algebra.
\end{exam}

In the following result, we show that the defining identities (\ref{ses1})-(\ref{equation 5}) of a diassociative conformal algebra imply some other useful identities. We will use this to show that a diassociative conformal algebra yields a Leibniz conformal algebra structure.

\begin{lemma} \label{ lemma 1}
    Let $(\D, \dashv_{\la}, \vdash_{\la})$ be a diassociative conformal algebra. Then for any $a,b,c \in \D$, the following identities hold: 
    \begin{align}
         a \dashv_{-\p -\la}(b \vdash _{\mu}c) =~& (a \dashv _{-\p -\mu} b) \dashv _{-\p + \mu-\la } c, \label{edi-1}\\
        a \dashv_{-\p -\la}( b \dashv _{-\p -\mu} c) =~& (a \dashv _{-\p+ \mu -\la } b) \dashv _{-\p -\mu} c, \label{edi-2}\\
        a \vdash_{-\p - \la}(b \vdash_{\mu} c) =~& (a \dashv_{-\p -\mu}b)\vdash_{-\p + \mu-\la }c,\label{edi-3}\\
        a \vdash _{\la} (b \dashv _{-\p -\mu} c) =~& (a \vdash _{\la} b) \dashv_{ -\p -\mu} c\label{edi-4}.
    \end{align}
\end{lemma}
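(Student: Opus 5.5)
The plan is to observe that the four identities are exactly the diassociative analogues of the identities (\ref{equation a}), (\ref{equation b}), (\ref{equation c}) valid in any associative conformal algebra. The derivation of those identities in \cite{andrea-kac} does not really use that a single $\la$-product appears three times. It uses only two ingredients: conformal sesquilinearity of the products involved, and one identity of the shape
\begin{align*}
x \circ^1_{\la} (y \circ^2_{\mu} z) = (x \circ^3_{\la} y) \circ^4_{\la+\mu} z .
\end{align*}
Each of the axioms (\ref{equation 1})--(\ref{equation 4}) has precisely this shape, with $\circ^i \in \{\dashv, \vdash\}$, and all of these products are conformal sesquilinear by (\ref{ses1}) and (\ref{ses2}). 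So I would rerun that argument with the appropriate choice of products. I would derive (\ref{edi-1}) from (\ref{equation 2}), (\ref{edi-2}) from (\ref{equation 1}), (\ref{edi-3}) from (\ref{equation 4}), and (\ref{edi-4}) from (\ref{equation 3}).

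Concretely, I regard each axiom as an equality in $\D[\la,\mu]$ and expand both sides as $\sum_{i,j} \la^i \mu^j\, p_{ij}$ with $p_{ij} \in \D$. The notation $x \circ_{-\p-\la} y$ means $\sum_j \frac{(-\p-\la)^j}{j!}\, x \circ_{(j)} y$, where $\p$ acts on the whole output.
\begin{itemize}
\item For (\ref{edi-4}), I substitute $\mu \mapsto -\p-\mu$ in (\ref{equation 3}), with $\p$ placed to the left so that it acts on the total expression. This substitution preserves the equality, since it is just the $\C$-linear map $\mu^j p \mapsto (-\p-\mu)^j p$ applied to both sides.
\item On the left-hand side, the $\p$ produced inside $b \dashv_{-\p-\mu} c$ becomes, after applying $a \vdash_{\la}$, the operator $\p+\la$, by the second sesquilinearity rule.
\item On the right-hand side, the total $\p$ combines with the spectral parameter $\la+\mu$ in a way that leaves the inner product $a \vdash_\la b$ unchanged. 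Reconciling the two sides yields (\ref{edi-4}); this is the analogue of (\ref{equation a}).
\end{itemize}

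For (\ref{edi-1}) and (\ref{edi-3}), I would substitute $\la \mapsto -\p-\la$ in (\ref{equation 2}) and (\ref{equation 4}) respectively, again with $\p$ acting on the full product. The left-hand sides are then directly $a \dashv_{-\p-\la}(b \vdash_\mu c)$ and $a \vdash_{-\p-\la}(b\vdash_\mu c)$. On the right-hand sides, a $\p$ appearing in the first argument of the outer product $(\,\cdot\,)\circ_\nu c$ acts as $-\nu$, by the first sesquilinearity rule. I would then move it past the outer product and re-index the spectral parameters by the change of variables used in \cite{andrea-kac} for (\ref{equation b}). Finally, (\ref{edi-2}) is obtained by combining both substitutions in (\ref{equation 1}), namely $\la \mapsto -\p-\la$ and $\mu \mapsto -\p-\mu$, which is the analogue of (\ref{equation c}). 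Alternatively, one can apply the substitution of the previous step to (\ref{edi-4})-type identities for $\dashv$ alone.

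The main obstacle is the bookkeeping of where each $\p$ acts, on the total output versus on an inner factor, when the spectral parameters are substituted by expressions involving $\p$. To keep this rigorous, I would carry out the verification in terms of $j$-th products. That is, I would expand each side via (\ref{j}), use the sesquilinearity rules $\p a \circ_{(j)} b = -j\, a\circ_{(j-1)} b$ and $a \circ_{(j)} \p b = j\, a \circ_{(j-1)} b + \p(a\circ_{(j)} b)$, and compare coefficients using (\ref{equation 6})--(\ref{equation 9}). This reduces each identity to binomial identities that are the same as in the associative case.
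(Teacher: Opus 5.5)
Your proposal is correct. For (\ref{edi-4}) it is the paper's argument. The paper moves $(-\p-\mu)^j$ through $a\vdash_\la$ and substitutes $\nu=-\la-\mu-\p$ into (\ref{equation 3}). Your substitution $\mu\mapsto-\p-\mu$ gives $a\vdash_\la(b\dashv_{-\p-(\mu-\la)}c)=(a\vdash_\la b)\dashv_{-\p-(\mu-\la)}c$, which is the same identity after the shift $\mu\mapsto\mu+\la$; this shift is your ``reconciling'' step.

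For the other three identities your route differs. You establish the analogues of (\ref{equation a})--(\ref{equation c}) once, for every axiom of the shape $x\circ^1_\la(y\circ^2_\mu z)=(x\circ^3_\la y)\circ^4_{\la+\mu}z$, and read off (\ref{edi-1}) and (\ref{edi-3}) directly from (\ref{equation 2}) and (\ref{equation 4}). The paper stays in the single-product setting:
(\ref{edi-2}) is just (\ref{equation c}) for $(\D,\dashv_\la)$. For (\ref{edi-1}), it uses that (\ref{equation 1}) and (\ref{equation 2}) share their right-hand side, so $a\dashv_{-\p-\la}(b\vdash_\mu c)=a\dashv_{-\p-\la}(b\dashv_\mu c)$, and then applies (\ref{equation b}) for $\dashv_\la$. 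For (\ref{edi-3}), it applies (\ref{equation b}) for $\vdash_\la$ and then uses (\ref{equation 4}) and (\ref{equation 5}) to trade $(a\vdash_\nu b)\vdash_\sigma c$ for $(a\dashv_\nu b)\vdash_\sigma c$.

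The paper's route is shorter because it treats the associative identities as black boxes. Yours is uniform and self-contained, and its explicit $\p$-bookkeeping is a genuine advantage. The intermediate expressions displayed in the paper, $(a\dashv_{-\p-\la}b)\dashv_{-\p+\mu-\la}c$ and $a\vdash_{-\p-\mu}(b\vdash_{-\la+2\mu}c)$, arise from naive substitution. They are not equal to their neighbours when the inner $\p$ acts on the inner product; take $a=\p e$, $b=c=e$ in $\operatorname{Cur}(\C)$ with $e=1\otimes 1$. Only the composite equalities in the paper are valid.

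The same caution applies to your treatment of (\ref{edi-2}). Substituting $\la\mapsto-\p-\la$ and $\mu\mapsto-\p-\mu$ simultaneously into (\ref{equation 1}), with $\p$ acting on the total, does not yield $a\dashv_{-\p-\la}(b\dashv_{-\p-\mu}c)$ on the left, because inside $a\dashv_{-\p-\la}(\cdot)$ an inner $\p$ acts as $-\la$. The correct way to finish is your sequential alternative: take the (\ref{edi-4})-type identity for $\dashv_\la$ and then substitute $\la\mapsto-\p-\la$. Alternatively, simply cite (\ref{equation c}).
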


\begin{proof}
    For any $a,b,c \in \D$, we observe that
\begin{align*}
  & a \dashv_{-\p -\la}(b \vdash_{\mu}c) \stackrel{(\ref{equation 2})}{=} (a \dashv_{-\p -\la}b) \dashv_{-\p+\mu-\la}c 
   \stackrel{(\ref{equation 1})}{=}a \dashv_{-\p -\la}(b \dashv_{\mu} c)
    \stackrel{(\ref{equation b})}{=} (a \dashv_{-\p -\mu}b) \dashv_{-\p +\mu-\la} c,\\
  & \qquad \qquad \qquad \qquad \qquad a \dashv_{-\p-\la}(b \dashv_{-\p-\mu}c) \stackrel{(\ref{equation c})}{=} (a\dashv_{-\p+\mu-\la}b)\dashv_{-\p-\mu}c,\\
  &   a \vdash_{-\p - \la}(b \vdash_{\mu} c)  \stackrel{(\ref{equation b})}{=}(a \vdash_{-\p-\mu}b)\vdash_{-\p+\mu-\la}c
    \stackrel{(\ref{equation 5})}{=}a \vdash_{-\p-\mu}(b \vdash_{-\la+2\mu}c)
     \stackrel{(\ref{equation 4})}{=}(a\dashv_{-\p-\mu}b)\vdash_{-\p+\mu-\la}c.
\end{align*}
This proves the first three identities. Finally, to prove the last identity, we consider the conformal sesquilinearity condition $a \vdash_{\la}\left( (-\p-\mu)^{j} b \right)=(-\p-\la-\mu)^{j} (a \vdash_{\la}b)$. Thus,

\begin{align*}
a \vdash_{\la}(b \dashv_{-\partial-\mu}c)
= a \vdash_\la \big(  \sum_{j \geq 0} \frac{(-\partial - \mu)^j}{j!} ~ \! b \dashv_{(j)} c  \big) =~& 
\sum_{j\geq 0}
\frac{(-\partial-\la-\mu)^j}{j!}
a \vdash_{\la}(b \dashv_{(j)}c) \\
=~&
a \vdash_{\la}(b \dashv_{\nu}c)\big|_{\nu=-\la-\mu-\partial} \\
=~& (a \vdash _{\la} b) \dashv_{ -\p -\mu} c \quad (\text{by } (\ref{equation 3})).
\end{align*}
This concludes the result.
\end{proof}

In \cite{loday1}, Loday showed that a diassociative algebra naturally gives rise to a Leibniz algebra, in the same way an associative algebra gives a Lie algebra via skew-symmetrization. To extend Loday's result to the conformal context, we first recall the following.

\begin{definition}
    (\cite{bakalov-kac-voronov,zhang}) A {\bf Leibniz conformal algebra} is a pair $(\mathcal{L},[\cdot_{\la}\cdot])$ consisting of a $\C[\p]$-module $\mathcal{L}$ with a conformal sesquilinear $\la$-bracket $[\cdot_{\la}\cdot]:\mathcal{L} \times \mathcal{L} \to \mathcal{L}[   \la ] $ that satisfies
\begin{align}\label{conf-leib}
 [a_{\la}[b_{\mu}c]] =[[a_{\la}b]_{\la+\mu}c]
+ [b_{\mu}[a_{\la}c]], \text{ for all } a,b,c \in \mathcal{L}. 
\end{align}
\end{definition}

A Leibniz conformal algebra for which the $\la$-bracket is skew-symmetric turns out to be a Lie conformal algebra. A homomorphism between two Leibniz conformal algebras is a $\C [\p]$-linear map preserving the $\la$-brackets. See \cite{das-sahoo, zhang,zhou} for further studies on Leibniz conformal algebras.

\begin{proposition}\label{prop-di-leib}
    Let $(\D,\dashv_{\la},\vdash_{\la})$ be a diassociative conformal algebra. Then we define a $\la$-bracket $[ \cdot_\la \cdot ] : \D \times \D \rightarrow \D [\la]$ by
\begin{align}\label{lambda-bracket}
    [a_{\la}b] := a\vdash_{\la}b ~ \! - ~ \! b\dashv_{-\la-\partial}a, \text{ for } a,b\in\D.
\end{align}
Then $(\D, [ \cdot_\la \cdot ])$ is a Leibniz conformal algebra. 
\end{proposition}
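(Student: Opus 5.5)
The plan is to verify the two required properties directly: conformal sesquilinearity of the bracket (\ref{lambda-bracket}), and then the conformal Leibniz identity (\ref{conf-leib}). For the second, I would expand both sides into $\dashv_\la$/$\vdash_\la$ terms and match them pairwise using (\ref{equation 1})--(\ref{equation 5}) and Lemma \ref{ lemma 1}.

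\emph{Sesquilinearity.} This follows from (\ref{ses1}) and (\ref{ses2}) and the standard fact that if $\dashv_\la$ is conformal sesquilinear, then so is $(a,b)\mapsto b\dashv_{-\la-\p}a$. For instance,
\[
\p a\dashv_{-\la-\p} b \ \text{ has $\p a$ in the second slot, giving } (\p-\la-\p)\, (b\dashv_{-\la-\p}a) = -\la\,(b\dashv_{-\la-\p}a),
\]
and similarly for $\p b$. So $[\p a_\la b]=-\la[a_\la b]$ and $[a_\la \p b]=(\p+\la)[a_\la b]$.

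\emph{Expansion.} Next I expand the three terms of (\ref{conf-leib}), each into four summands:
\begin{align*}
[a_\la[b_\mu c]] &= a\vdash_\la(b\vdash_\mu c) - a\vdash_\la(c\dashv_{-\mu-\p}b) - (b\vdash_\mu c)\dashv_{-\la-\p}a + (c\dashv_{-\mu-\p}b)\dashv_{-\la-\p}a,\\
[[a_\la b]_{\la+\mu}c] &= (a\vdash_\la b)\vdash_{\la+\mu}c - (b\dashv_{-\la-\p}a)\vdash_{\la+\mu}c - c\dashv_{-\la-\mu-\p}(a\vdash_\la b) + c\dashv_{-\la-\mu-\p}(b\dashv_{-\la-\p}a),\\
[b_\mu[a_\la c]] &= b\vdash_\mu(a\vdash_\la c) - b\vdash_\mu(c\dashv_{-\la-\p}a) - (a\vdash_\la c)\dashv_{-\mu-\p}b + (c\dashv_{-\la-\p}a)\dashv_{-\mu-\p}b.
\end{align*}
Here I must carefully track where each $\p$ acts once an expression is inserted into the first slot of a $\la$-product. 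By sesquilinearity, an inner $\p$ on the left factor is replaced by $-(\text{outer parameter})$. For example, in $(c\dashv_{-\mu-\p}b)\dashv_{-\la-\p}a$, the inner $\p$ should be read as $\la+\p$.

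\emph{Cancellations.} I then cancel the terms pairwise:
\begin{itemize}
\item $a\vdash_\la(b\vdash_\mu c)=(a\vdash_\la b)\vdash_{\la+\mu}c$ by (\ref{equation 5}).
\item $b\vdash_\mu(a\vdash_\la c)$ against $(b\dashv_{-\la-\p}a)\vdash_{\la+\mu}c$ via (\ref{equation 4}), after rewriting the latter's parameter ($-\la-\p$ becomes $\mu$ when $\p$ is replaced by $-\la-\mu$).
\item $a\vdash_\la(c\dashv_{-\mu-\p}b)=(a\vdash_\la c)\dashv_{-\mu-\p}b$ by (\ref{edi-4}).
\item $(b\vdash_\mu c)\dashv_{-\la-\p}a=b\vdash_\mu(c\dashv_{-\la-\p}a)$, again by (\ref{edi-4}).
\item $c\dashv_{-\la-\mu-\p}(a\vdash_\la b)$ against $(c\dashv_{-\la-\p}a)\dashv_{-\mu-\p}b$ via (\ref{edi-1}).
\item $(c\dashv_{-\mu-\p}b)\dashv_{-\la-\p}a$ against $c\dashv_{-\la-\mu-\p}(b\dashv_{-\la-\p}a)$ via (\ref{edi-2}) or (\ref{equation c}) for the associative conformal algebra $(\D,\dashv_\la)$.
\end{itemize}
After these cancellations, both sides of (\ref{conf-leib}) agree.

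\emph{Main obstacle.} I expect the real difficulty to be the bookkeeping of the spectral parameters in the last two or three matchings, where $-\p$ appears both inside and outside. The parameters in (\ref{edi-1}) and (\ref{edi-2}) must be specialized (e.g.\ replacing $\la,\mu$ by suitable combinations such as $\la+\mu$ and $\la$) so that they coincide exactly with the expanded terms. To do this safely, I would expand in $j$-th products as in the proof of (\ref{edi-4}), substitute $\p\mapsto$ outer parameter explicitly, and only then apply the identities.
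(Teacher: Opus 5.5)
Your proposal is correct and is essentially the paper's argument: the paper also treats sesquilinearity as immediate, then expands the twelve terms of the conformal Leibniz identity and cancels them using the defining identities together with Lemma \ref{ lemma 1}; your explicit pairings (via (\ref{equation 5}), (\ref{equation 4}), (\ref{edi-4}) twice, (\ref{edi-1}) and (\ref{edi-2})) all check out with the right signs and parameter substitutions, and they make explicit the use of (\ref{equation 5}), which the paper's citation omits. The only slip is cosmetic: in the sesquilinearity check the term you mean is $b\dashv_{-\la-\p}\p a$, not $\p a\dashv_{-\la-\p}b$.
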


\begin{proof}
It is immediate that the $\la$-bracket $[\cdot_\la \cdot ]$ satisfies the conformal sesquilinearity conditions. Thus, it remains to verify the identity (\ref{conf-leib}). For any $a,b,c\in\D$, we observe that
\begin{align*}
&[[a_{\la}b]_{\la+\mu}c]+[b_{\mu}[a_{\la}c]]-[a_{\la}[b_{\mu}c]]\\
={}&[a_{\la}b]\vdash_{\la+\mu}c
-c\dashv_{-\la-\mu-\partial}[a_{\la}b]
+b\vdash_{\mu}[a_{\la}c]
-[a_{\la}c]\dashv_{-\mu-\partial}b\\
&\hspace{1cm}
-a\vdash_{\la}[b_{\mu}c]
+[b_{\mu}c]\dashv_{-\la-\partial}a\\
={}&(a\vdash_{\la}b-b\dashv_{-\la-\partial}a)\vdash_{\la+\mu}c
-c\dashv_{-\la-\mu-\partial}(a\vdash_{\la}b-b\dashv_{-\la-\partial}a)\\
&\hspace{1cm}
+b\vdash_{\mu}(a\vdash_{\la}c-c\dashv_{-\la-\partial}a)
-(a\vdash_{\la}c-c\dashv_{-\la-\partial}a)\dashv_{-\mu-\partial}b\\
&\hspace{1cm}
-a\vdash_{\la}(b\vdash_{\mu}c-c\dashv_{-\mu-\partial}b)
+(b\vdash_{\mu}c-c\dashv_{-\mu-\partial}b)\dashv_{-\la-\partial}a\\
={}&0 \quad (\text{by } (\ref{equation 1}), (\ref{equation 4}) \text{ and Lemma } \ref{ lemma 1}).
\end{align*} 
Hence, the conformal Leibniz identity holds.
\end{proof}

The above construction is functorial, yielding a functor $\mathcal{F}: {\bf DiassC} \rightarrow {\bf LeibC}$ from the category of diassociative conformal algebras to the category of Leibniz conformal algebras.

In a diassociative conformal algebra, if the $\la$-products $\dashv_\la$ and $\vdash_\la$ coincide (equivalently, it is given by an associative conformal algebra), then the corresponding $\la$-bracket (\ref{lambda-bracket}) is skew-symmetric. Hence, we obtain a Lie conformal algebra.

\medskip
   
  \noindent {\bf Equivalence of categories.} We construct diassociative conformal algebras from formal distribution diassociative algebras and prove that the category of diassociative conformal algebras is equivalent to the category of equivalence classes of formal distribution diassociative algebras. 
 
 Let $( D,\dashv, ~\!\vdash)$ be a diassociative algebra. A $D${\bf -valued formal distribution} in one indeterminate $w$ is a formal series of the form $a(w)=\sum \limits_{m \in \mathbb{Z}} a_{m} w^{-m-1}$, where the coefficients $a_{m} \in D$ are called the Fourier modes of the distribution $a(w)$. The set of all $D$-valued formal distributions forms a complex vector space, denoted by $D\llbracket w^{\pm1}\rrbracket$.
 Next, we introduce two $\la$-products and the corresponding $j$-th products on $D\llbracket w^{\pm1}\rrbracket$.
 \begin{definition}
    Let $D$ be a diassociative algebra. Then the $\la$-products on $D \llbracket w^{\pm 1 }\rrbracket$ are $\C$-bilinear maps $ \dashv_{\la}, \vdash_{\la}: D \llbracket w^{\pm 1 }\rrbracket \times D \llbracket w^{\pm 1 }\rrbracket \rightarrow D \llbracket w^{\pm 1 }\rrbracket  \llbracket  \la  \rrbracket$, defined by 
    \begin{align}\label{la-bra}
    a(w) \dashv_{\la} b(w)  = \text{Res}_{z} e^{\la(z-w)} a(z) \dashv b(w)~ \text{ and } ~a(w) \vdash_{\la} b(w) = \text{Res}_{z}e^{\la(z-w)}a(z) \vdash b(w),
\end{align}
  where
    \begin{align*}
& \qquad \qquad \qquad a(w)=\sum \limits_{m \in \mathbb{Z}} a_{m} w^{-m-1}, \quad b(w)=\sum \limits_{n \in \mathbb{Z}} b_{n} w^{-n-1} \in D\llbracket w^{\pm1}\rrbracket, \\
&a(z) \dashv b(w)= \sum_{m,n \in \mathbb{Z} }(a_m \dashv b_n) ~ \! z^{-m-1}w^{-n-1} ~  \text{ and } ~   a(z) \vdash b(w)= \sum_{m,n \in \mathbb{Z} }(a_m \vdash b_n) ~ \! z^{-m-1}w^{-n-1}.
\end{align*}
\end{definition}
\noindent Thus, $a(w)\dashv_{\la}b(w)$ and $a(w)\vdash_{\la}b(w)$ are the formal Fourier transforms, with respect to $z$, of $a(z)\dashv b(w)$ and $a(z)\vdash b(w)$, respectively. If we expand the right-hand side of the $\la$-products defined above, we get
\begin{align*}
    &a(w) \dashv_{\la} b(w)  = \sum_{j \geq 0} \big( \text{Res}_{z} (z-w)^{j} a(z) \dashv b(w) \big) \frac{\la^j}{j!},\\
    &a(w) \dashv_{\la} b(w)  = \sum_{j \geq 0} \big( \text{Res}_{z} (z-w)^{j} a(z) \dashv b(w) \big) \frac{\la^j}{j!}.
\end{align*}
These expressions of $\la$-products motivate us to define the $j$-th products between two $D$-valued formal distributions.

 \begin{definition} \label{D-j}
 Let $D$ be a diassociative algebra and $j$ be a non-negative integer. The {\bf $j$-th products} on $D\llbracket w^{\pm1}\rrbracket$ are $\C$-bilinear maps $\dashv_{(j)}, \vdash_{(j)}: D \llbracket w^{\pm 1 }\rrbracket \times D \llbracket w^{\pm 1 } \rrbracket \rightarrow D \llbracket w^{\pm 1 }\rrbracket$, defined by 
    \begin{align}\label{j-th}
      a(w) \dashv_{(j)} b(w)  = \text{Res}_{z}(z-w)^{j}a(z) \dashv b(w) ~~~\text{ and } ~~~  a(w) \vdash_{(j)} b(w) = \text{Res}_{z}(z-w)^{j}a(z) \vdash b(w),
    \end{align}
    for all $a(w),b(w) \in D\llbracket w^{\pm1}\rrbracket$.
 \end{definition}
 
We equip the space $D\llbracket w^{\pm1}\rrbracket$ with the structure of a $\C[\partial]$-module, where the formal differentiation operator $\partial_w$ gives the $\partial$-action. That is, $\partial (a (w)) =\p_wa(w)$, for $a(w)\in D\llbracket w^{\pm1}\rrbracket$. Then the following straightforward proposition describes the interaction between the $\partial$-action and the $\la$-products.
\begin{proposition}\label{c.s.lap}
   Let $D$ be a diassociative algebra and $a(w), b(w) \in D\llbracket w^{\pm1}\rrbracket$.  Then, the $\la$-products defined in (\ref{la-bra}) satisfy the following identities:
    \begin{align*}
        \p a(w) \dashv_{\la} b(w) &= - \la ~ \! a(w) \dashv_{\la} b(w), \qquad  a(w) \dashv_{\la} \p  b(w) = (\p + \la) ~ \! a(w) \dashv_{\la} b(w), \\
        \p a(w) \vdash_{\la} b(w) &= - \la ~ \! a(w) \vdash_{\la} b(w), \qquad  a(w) \vdash_{\la} \p  b(w) = (\p + \la) ~ \!  a(w) \vdash_{\la} b(w).
    \end{align*}
\end{proposition}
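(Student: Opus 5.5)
The plan is a direct computation from the residue formula (\ref{la-bra}). Since the two products $\dashv$ and $\vdash$ enter (\ref{la-bra}) in exactly the same way, I will only treat $\dashv_\la$; the identities for $\vdash_\la$ follow by replacing $\dashv$ with $\vdash$ throughout. The only facts needed are two elementary residue properties:
\begin{align*}
\text{Res}_z\, \partial_z F(z) = 0, \qquad \partial_w \,\text{Res}_z\, G(z,w) = \text{Res}_z\, \partial_w G(z,w),
\end{align*}
together with the bilinearity of $\dashv$, which gives $\partial_z\big(a(z)\dashv b(w)\big) = (\partial_z a(z))\dashv b(w)$ and the analogous formula for $\partial_w$.

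For the first identity, write $\partial a(w)\dashv_\la b(w) = \text{Res}_z\, e^{\la(z-w)}\,(\partial_z a(z))\dashv b(w)$. Integrating by parts in $z$ gives
\begin{align*}
\text{Res}_z\, \partial_z\big(e^{\la(z-w)} a(z)\dashv b(w)\big) - \la\, \text{Res}_z\, e^{\la(z-w)} a(z)\dashv b(w).
\end{align*}
The first term vanishes, so the whole expression equals $-\la\,(a(w)\dashv_\la b(w))$.

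For the second identity, apply $\partial=\partial_w$ to $a(w)\dashv_\la b(w)$. By the product rule,
\begin{align*}
\partial_w\big(e^{\la(z-w)}a(z)\dashv b(w)\big) = -\la\, e^{\la(z-w)}a(z)\dashv b(w) + e^{\la(z-w)}a(z)\dashv \partial_w b(w).
\end{align*}
Commuting $\partial_w$ with $\text{Res}_z$ then gives $\partial(a\dashv_\la b) = -\la\,(a\dashv_\la b) + a\dashv_\la \partial b$, which rearranges to $a(w)\dashv_\la \partial b(w) = (\partial+\la)\, a(w)\dashv_\la b(w)$.

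The only point that needs care is that these manipulations are legitimate in the formal setting. The expressions live in $D\llbracket w^{\pm1}\rrbracket\llbracket\la\rrbracket$, so we may argue coefficientwise in $\la$, using the expansion by $j$-th products (\ref{j-th}). For each fixed $j$, $\text{Res}_z (z-w)^j a(z)\dashv b(w)$ is a finite sum in the binomial expansion and is well defined. The coefficientwise versions of the two identities are the $j$-th product relations
\begin{align*}
\partial a\dashv_{(j)} b = -j\,(a\dashv_{(j-1)} b), \qquad a\dashv_{(j)}\partial b = j\,(a\dashv_{(j-1)} b) + \partial(a\dashv_{(j)} b),
\end{align*}
and these follow from the same integration by parts, now applied to $(z-w)^j$ in place of $e^{\la(z-w)}$. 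I therefore expect no real obstacle beyond this bookkeeping.
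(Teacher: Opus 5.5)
Your proposal is correct: integrating by parts under $\text{Res}_z$ (using $\text{Res}_z\,\partial_z(\cdot)=0$) gives the left identity, and commuting $\partial_w$ with $\text{Res}_z$ gives the right one. Your coefficientwise check with $(z-w)^j$ in place of $e^{\la(z-w)}$ reproduces exactly the $j$-th product relations the paper lists after introducing the $j$-th products. The paper gives no proof and calls the result straightforward; this standard residue computation is the intended argument.
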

    

This shows that the $\la$-products of $D$-valued formal distributions satisfy the conformal sesquilinearity conditions. Moreover, translating the defining identities of the diassociative algebra $D$ into identities for the $\la$-products and the corresponding $j$-th products of $D$-valued formal distributions, we get the following results.
\begin{proposition}
    Let $D$ be a diassociative algebra. Then the $\la$-products of $D$-valued formal distributions satisfy the identities {\rm (\ref{equation 1})--(\ref{equation 5})}.
\end{proposition}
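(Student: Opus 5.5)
We verify all five identities (\ref{equation 1})--(\ref{equation 5}) by one uniform computation with formal residues. For any two operations $\ast,\star\in\{\dashv,\vdash\}$, we express both sides of the corresponding identity in terms of the Fourier modes of $a(w)=\sum a_m w^{-m-1}$, $b(w)=\sum b_n w^{-n-1}$, $c(w)=\sum c_p w^{-p-1}$. Each identity then reduces to the matching axiom (\ref{di-1})--(\ref{di-5}) of $D$, applied to $a_m, b_n, c_p$.

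For the left-hand side we unwind the definition (\ref{la-bra}) twice. First,
\[
b(w)\star_\mu c(w)=\mathrm{Res}_y\, e^{\mu(y-w)}\, b(y)\star c(w).
\]
Since the $\la$-product is linear in its second argument, applying $a(w)\ast_\la(\cdot)$ coefficientwise in $\mu$ gives
\[
a(w)\ast_\la\bigl(b(w)\star_\mu c(w)\bigr)=\mathrm{Res}_z\mathrm{Res}_y\, e^{\la(z-w)}e^{\mu(y-w)}\, a(z)\ast\bigl(b(y)\star c(w)\bigr).
\]
For the right-hand side, we first have
\[
a(y)\ast_\la b(y)=\mathrm{Res}_z\, e^{\la(z-y)}\, a(z)\ast b(y),
\]
a distribution in $y$. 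Applying $(\cdot)\,\star'_{\la+\mu}\, c(w)$ then yields
\[
\mathrm{Res}_y\mathrm{Res}_z\, e^{(\la+\mu)(y-w)}e^{\la(z-y)}\, \bigl(a(z)\ast' b(y)\bigr)\star' c(w).
\]
The exponents combine: $e^{(\la+\mu)(y-w)}e^{\la(z-y)}=e^{\la(z-w)}e^{\mu(y-w)}$. So both sides are iterated residues of the same kernel applied to $a(z)\ast(b(y)\star c(w))$ and $(a(z)\ast' b(y))\star' c(w)$, respectively. These two expressions are trilinear in the modes and agree mode by mode whenever $a_m\ast(b_n\star c_p)=(a_m\ast' b_n)\star' c_p$ in $D$. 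This is exactly one of (\ref{di-1})--(\ref{di-5}) for the appropriate choice of operations, for example (\ref{di-4}) with $\ast=\star=\vdash$, $\ast'=\dashv$, $\star'=\vdash$.

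The main obstacle is justifying these manipulations rigorously, since the outputs lie in $D\llbracket w^{\pm1}\rrbracket\llbracket\la\rrbracket$ rather than in polynomials. We need to check three things: that the residues in $y$ and $z$ commute, that $e^{\la(z-w)}$ may be expanded termwise, and that the substitution implicit in $\star'_{\la+\mu}$ behaves correctly on an argument that is itself a formal power series in $\la$. To avoid convergence issues, we work with $j$-th products instead: compare coefficients of $\la^m\mu^n/(m!\,n!)$ and reduce the claim to the identities (\ref{equation 6})--(\ref{equation 10}). Concretely, we expand
\[
(z-w)^m=\sum_{r}\binom{m}{r}(z-y)^r(y-w)^{m-r}
\]
inside $\mathrm{Res}_z\mathrm{Res}_y\,(z-w)^m(y-w)^n\,\cdots$. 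This identity is a polynomial identity, so it is valid without any expansion issues. Each coefficient of a fixed $w$-power is then a finite sum over modes, because every residue extracts a single coefficient. This gives (\ref{equation 6})--(\ref{equation 10}) for formal distributions directly.

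Finally, by the equivalence of the $\la$-product identities and the $j$-th product identities noted earlier in the paper, identities (\ref{equation 1})--(\ref{equation 5}) follow.
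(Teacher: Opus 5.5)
Your proof is correct and follows the paper's route. The paper gives no written argument; it only says the identities follow by translating (\ref{di-1})--(\ref{di-5}) into statements about the $\la$-products and $j$-th products of $D$-valued formal distributions, and your residue computation makes that translation explicit. Both sides become iterated residues of the same polynomial kernel $(z-w)^m(y-w)^n$ applied to two triple series that agree mode by mode. Two small points to tighten. First, $a(y)\ast_\la b(y)$ should read $a(y)\ast'_\la b(y)$. Second, when you pass from (\ref{equation 6})--(\ref{equation 10}) back to (\ref{equation 1})--(\ref{equation 5}), note that the coefficient comparison still holds for $\la$-products in $D\llbracket w^{\pm1}\rrbracket\llbracket\la\rrbracket$, because each coefficient of $\la^p\mu^q$ in $(a\ast'_\la b)\star'_{\la+\mu}c$ is a finite sum.
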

\begin{proposition}
    Let $D$ be a diassociative algebra. Then the $j$-th products (\ref{j-th}) of $D$-valued formal distributions satisfy the identities {\rm (\ref{equation 6})--(\ref{equation 10})}.
\end{proposition}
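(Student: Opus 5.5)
The plan is to compute both sides of each identity (\ref{equation 6})--(\ref{equation 10}) directly from Definition \ref{D-j}, and reduce each to the corresponding identity (\ref{di-1})--(\ref{di-5}) in $D$ together with a binomial expansion. Since all five identities have the same shape, it is enough to do the computation once for a general pair of operations $\circ_1,\circ_2$ on the left-hand side and $\circ_3,\circ_4$ on the right-hand side. The only input we need from $D$ is the identity $x\circ_1(y\circ_2 z)=(x\circ_3 y)\circ_4 z$ for $x,y,z\in D$.

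We first compute the left-hand side. For $a(w),b(w),c(w)$, we have $b(w)\circ_{2,(n)} c(w)=\text{Res}_y (y-w)^n\, b(y)\circ_2 c(w)$. Applying $\circ_{1,(m)}$ gives
\begin{align*}
a\circ_{1,(m)}(b\circ_{2,(n)}c)=\text{Res}_z\text{Res}_y (z-w)^m (y-w)^n\, a(z)\circ_1\big(b(y)\circ_2 c(w)\big).
\end{align*}
For the right-hand side, we first form $a\circ_{3,(r)} b$, which is the distribution $\text{Res}_z (z-y)^r a(z)\circ_3 b(y)$ in the variable $y$. Then we take the $(m+n-r)$-th product with $c(w)$, which gives
\begin{align*}
\text{Res}_y\text{Res}_z (z-y)^r (y-w)^{m+n-r}\,\big(a(z)\circ_3 b(y)\big)\circ_4 c(w).
\end{align*}
We justify interchanging the two residues coefficientwise. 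This is legitimate because the kernels $(z-w)^m(y-w)^n$ and $(z-y)^r(y-w)^{m+n-r}$ are polynomials, so each coefficient of $w$ is a finite sum of products of Fourier modes.

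Next we apply the identity from $D$ to the triple products of Fourier modes, so that the integrands become equal apart from the polynomial kernels. It then remains to prove the polynomial identity
\begin{align*}
(z-w)^m (y-w)^n=\sum_{r=0}^m\binom{m}{r}(z-y)^r (y-w)^{m+n-r}.
\end{align*}
This follows by writing $z-w=(z-y)+(y-w)$ and expanding by the binomial theorem. Hence both sides agree, and the five cases follow by choosing the appropriate operations.

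The main obstacle is bookkeeping: we must make sure that the iterated residue is a well-defined formal object and that the residues may be swapped. Because the kernels are polynomials, this reduces to finite sums on each coefficient. As a shortcut, one could instead argue that the $\la$-product identities of the previous proposition are equivalent to (\ref{equation 6})--(\ref{equation 10}), by expanding $\la^j/j!$ and comparing coefficients of $\la^m\mu^n$, as already noted in the paper for conformal algebras.
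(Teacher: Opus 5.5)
Your proof is correct. Both sides become iterated residues of the same three-variable distribution once the matching identity among (\ref{di-1})--(\ref{di-5}) is applied to the Fourier modes, and expanding $(z-w)^m=((z-y)+(y-w))^m$ finishes the argument; this is precisely the routine ``translation of the defining identities'' that the paper invokes without writing out a proof. Your remark that the polynomial kernels make every product and residue a finite sum coefficientwise covers the only technical point.
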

We will now define the notion of locality of a pair of $D$-valued formal distributions.
\begin{definition}
    Let $D$ be a diassociative algebra. A pair $\big(a(w),b(w)\big)$ of $D$-valued formal distributions is called a {\bf local pair} if there exists a non-negative integer $N$ such that 
    \begin{align*}
        (z-w)^{N} ~ \! a(z) \dashv b(w) =0 ~~~\text{ and } ~~~   (z-w)^{N} ~ \! a(z) \vdash b(w) =0.
    \end{align*}
\end{definition}
The locality condition has an immediate consequence for the $j$-th products. Indeed, if $\big(a(w),b(w)\big)$ is a local pair of $D$-valued formal distributions, then there exists a non-negative integer $N$ such that  
\begin{align*}
    a(w) \dashv_{(j)} b(w) =0 ~  \text{ and } ~ a(w) \vdash_{(j)}b(w)=0, \text{ for all } j \geq N.
    \end{align*}
   Consequently, only finitely many $j$-th products are nonzero. It follows that the corresponding $\la$-products are polynomials in $\la$ with coefficients in $D\llbracket w^{\pm1}\rrbracket$, that is, they belong to $D\llbracket w^{\pm1}\rrbracket[\la]$ rather than the larger space $D\llbracket w^{\pm1}\rrbracket\llbracket\la\rrbracket$. Thus, the locality condition is precisely what ensures the polynomiality of the $\la$-products.

Let $\mathcal{F}$ be a family of $D$-valued formal distributions such that every pair of elements of $\mathcal{F}$ forms a local pair and the coefficients of all distributions in $\mathcal{F}$ span $D$. The pair $(D,\mathcal{F})$ is called a {\bf formal distribution diassociative algebra}. Let $\overline{\mathcal{F}}$ denote the smallest subspace of $D \llbracket w^{\pm 1 }\rrbracket$ containing $\mathcal{F}$ which is closed under all $j$-th products (\ref{j-th}) and $\p$-invariant. Then we have the following result.
\begin{proposition}
    Let $(D,\mathcal{F})$ be a formal distribution diassociative algebra. Then $\overline{\mathcal{F}}$ is a local family.
\end{proposition}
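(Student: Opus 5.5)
We must show that any two elements of $\overline{\mathcal{F}}$ form a local pair. This is the diassociative analogue of Dong's lemma, and I plan to reduce it to two facts. First, locality is preserved under $\partial$. Second, locality is preserved under $j$-th products in the three-element form: if $a,b,c$ are pairwise local, then $a\dashv_{(n)} b$ and $a\vdash_{(n)} b$ are each local with $c$ in both orders. Since $\overline{\mathcal{F}}$ is obtained from $\mathcal{F}$ by iterating $\partial$ and $j$-th products, an induction on the number of operations used to build an element then gives the claim. To run the induction, I would take the set $S$ of all distributions local with every element built so far and show it is stable.

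The $\partial$ step is easy. If $(z-w)^N a(z)\dashv b(w)=0$, then differentiating in $z$ gives
\[
(z-w)^{N+1}\,\partial_z a(z)\dashv b(w)=(z-w)\,\partial_z\big((z-w)^N a(z)\dashv b(w)\big)-N(z-w)^N a(z)\dashv b(w)=0,
\]
and the same holds for $\vdash$ and for differentiation in $w$.

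The main obstacle is the Dong-type step. Take $N$ large enough that all three pairs among $a,b,c$ are local for both products. I want to show that
\[
(w-x)^{M}\,(a\dashv_{(n)} b)(w)\ \square\ c(x)=0 \quad\text{for } \square\in\{\dashv,\vdash\}.
\]
I would follow Kac's argument for associative formal distribution algebras, using $\operatorname{Res}_z (z-w)^n a(z)\dashv b(w)$ and $M=4N$ (or $3N+n$). The key point is that the triple product $(a(z)\dashv b(w))\,\square\, c(x)$ can be rewritten, using the diassociative identities (\ref{di-1})--(\ref{di-5}), as a triple product with a different bracketing but the same positional order of the factors. For example, $(a\dashv b)\vdash c = a\vdash(b\vdash c)$ by (\ref{di-4}), and $(a\dashv b)\dashv c=a\dashv(b\vdash c)$ by (\ref{di-1}), (\ref{di-2}).

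Concretely, I would expand
\[
(w-x)^{M}=\sum_i \binom{M}{i}(w-z)^{i}(z-x)^{M-i}.
\]
In each term either $(z-w)^{\ge n+N}$ appears, killing the term via the locality of $a$ and $b$, or $(z-x)^{\ge N}$ appears. In the second case I rewrite the triple product so that $a(z)$ and $c(x)$ combine through $\dashv$ or $\vdash$, whichever the reassociated form gives. One subtlety is that locality only annihilates products in which $a$ and $c$ are directly adjacent after rebracketing. Since diassociative products are associative in the sense of "same positional order", the factor $(z-x)^N$ acting on $a(z)\,\square'\,(b(w)\,\square''\,c(x))$ still vanishes, because $b$ sits between $a$ and $c$. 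So I would instead use the standard trick: move everything onto an expression where the locality of $b$ with $c$ and of $a$ with $c$ is applied to $(z-x)^N(w-x)^N$ simultaneously, as in Kac's proof. The products there are taken in a fixed order, and I only need the sesquilinear bookkeeping.

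The reverse order, $c\ \square\ (a\dashv_{(n)} b)$, is symmetric. Combining the two steps gives locality of $\overline{\mathcal{F}}$.
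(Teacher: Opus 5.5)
Your reduction is the same as the paper's: closure under $\partial$, plus a Dong-type lemma for $j$-th products, then induction over the construction of $\overline{\mathcal{F}}$. The paper does not prove the Dong step; it simply invokes Dong's lemma for diassociative algebras \cite{koli-Sart}. You set out to prove that step, and your argument for it does not work.

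\textbf{The gap.} For $(a\star_{(n)}b)(w)\star' c(x)$ you expand $(w-x)^M$ through $z$. You then want to kill the terms carrying a high power of $(z-x)$ by locality of $(a,c)$. This is impossible. In every bracketing of a product with positional order $a,b,c$, the factors $a$ and $c$ are never adjacent, so $(z-x)^N$ does not annihilate $a(z)\star''(b(w)\star''' c(x))$. Your sentence that it ``still vanishes because $b$ sits between $a$ and $c$'' is false. The fallback to ``the standard trick in Kac's proof'' is not an argument here either: that proof uses commutators and the Jacobi identity, which have no diassociative counterpart. Finally, the reverse order is not ``symmetric'' to the first. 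It is exactly the case that needs an expansion, while the first order needs none.

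\textbf{The fix.} It uses the rebracketing identities you already listed, applied differently in the two orders.

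\emph{First order.} By (\ref{di-1}), (\ref{di-3}), (\ref{di-4}), (\ref{di-5}), we have $(a\star b)\star' c=a\star''(b\star''' c)$ for all four choices of $\star,\star'\in\{\dashv,\vdash\}$. Hence
\[
(w-x)^N\,(a\star_{(n)}b)(w)\star' c(x)=\operatorname{Res}_z(z-w)^n\,a(z)\star''\big((w-x)^N\,b(w)\star''' c(x)\big)=0 .
\]
This uses locality of $(b,c)$ alone and no expansion.

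\emph{Reverse order.} By (\ref{di-1}), (\ref{di-2}), (\ref{di-3}), (\ref{di-5}), we have $c\star'(a\star b)=(c\star''a)\star'''b$. Expand $(x-w)^{2N}=\sum_i\binom{2N}{i}(x-z)^i(z-w)^{2N-i}$ inside $\operatorname{Res}_z(z-w)^n\,c(x)\star'(a(z)\star b(w))$.
\begin{itemize}
\item Terms with $i\ge N$ vanish in the rebracketed form $(z-w)^{n+2N-i}\big((x-z)^i c(x)\star''a(z)\big)\star'''b(w)$, by locality of $(c,a)$.
\item Terms with $i<N$ carry $(z-w)^{n+2N-i}$ with exponent at least $N+1$. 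They vanish in the original form $(x-z)^i\,c(x)\star'\big((z-w)^{n+2N-i}a(z)\star b(w)\big)$, by locality of $(a,b)$.
\end{itemize}

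With this lemma, your $\partial$-step, and the fact that locality is preserved under linear combinations, the induction over the construction of $\overline{\mathcal{F}}$ goes through.
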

\begin{proof}
    Let $a(w), b(w) \text{ and } c(w)$ be three elements of $\mathcal{F}$. Then it is sufficient to prove that the pairs $\big( a(w)\dashv_{(j)}b(w),\,c(w) \big)$ and $\big( a(w)\vdash_{(j)}b(w),\,c(w) \big)$ are both local pairs for every $j \geq 0$. This follows immediately from Dong's lemma for diassociative algebras \cite{koli-Sart}.
\end{proof}
Then one may easily verify that the $\la$-products $\dashv_\la, ~ \! \vdash_\la: \overline{\mathcal{F}} \times \overline{\mathcal{F}} \to \overline{\mathcal{F}}[\la]$ defined by 
\begin{align*}
    a(w) \dashv_\la b(w) = \sum_{j \geq 0} a(w) \dashv_{(j)} b(w) ~ \! \frac{\la^j}{j!} ~\text{ and }~  a(w) \vdash_\la b(w) = \sum_{j \geq 0} a(w) \vdash_{(j)} b(w) ~ \! \frac{\la^j}{j!}, 
\end{align*}
for $a(w),b(w) \in \overline{\mathcal{F}}$, makes $\overline{\mathcal{F}}$ into a diassociative conformal algebra. We denote it by $\text{Conf}(D, \mathcal{F})$.

\medskip

Conversely, given a diassociative conformal algebra $(\D, \dashv_\la, \vdash_\la)$, we will construct a formal distribution diassociative algebra $( \mathrm{Diass} (\D), \mathcal{F})$. For this, we consider the $\C$-vector space $ \D[t^{\pm 1}]$ of all Laurent polynomials with coefficients in $\D$. We define two bilinear operations $\dashv$ and $\vdash$ on $\D[t^{\pm1}]$ by specifying their values on the generating elements as:
\begin{align} \label{equation 11} 
    at^{m} \dashv bt^{n} := \sum_{j \geq 0 } \binom{m}{j}  \big(a \dashv_{(j)} b \big) ~ \! t^{m+n-j }~ \text{ and }~  at^{m} \vdash bt^{n} := \sum_{j \geq 0 } \binom{m}{j} \big(a \vdash_{(j)} b \big) ~ \! t^{m+n-j }, 
\end{align}
for $ a, b \in \D$ and $ m,n \in \mathbb{Z} $. Here the operations $\dashv_{(j)}$ and $\vdash_{(j)}$ are the $j$-th products corresponding to the $\la$-products $\dashv_\la$ and $\vdash_\la$, respectively. Moreover, $\binom{m}{j}$ is the extended binomial coefficient defined by
\begin{align*}
    \binom{m}{j} = ~
    \begin{cases}
        \frac{m(m-1) \cdots (m-j+1)}{j!}, & \text{ if } j > 0,\\
        1, & \text { if } j= 0.
    \end{cases}
\end{align*}
We note that the extended binomial coefficients satisfy the following two identities, which are essential for our purpose:
\begin{align}
    \binom{m}{l} \binom{l}{r} =~& \binom{m}{r} \binom{m-r}{l-r}, \text{ for }m \in \Z,\ l,r \in \mathbb{Z}_{\geq 0},\ 0 \leq r \leq l \label{f-1}\\
    \sum_{j=0}^{p}
\binom{n}{j}
\binom{m-r}{p-j}
=~&
\binom{m+n-r}{p},\text{ for }m,n \in \Z, \ r, p \in \mathbb{Z}_{\geq 0}. \label{f-2}
\end{align}
\begin{proposition}\label{di-al}
    The vector space $\D[t^{\pm 1} ]$ equipped with the bilinear operations $\dashv\text{ and }\vdash $ defined in {\rm (\ref{equation 11})} is a diassociative algebra.
\end{proposition}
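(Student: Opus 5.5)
We verify the five identities (\ref{di-1})--(\ref{di-5}) directly on generators $at^m$, $bt^n$, $ct^k$ with $a,b,c\in\D$ and $m,n,k\in\Z$; bilinearity then gives the general case. Each of the five identities pairs an outer operation $\circ_1$ with an inner operation $\circ_2$ on the left, and an outer $\circ_3$ with an inner $\circ_4$ on the right. These operations are always in the pattern matching the corresponding conformal identity among (\ref{equation 6})--(\ref{equation 10}). So we prove one generic computation, namely
\begin{align*}
at^m \circ_1 (bt^n \circ_2 ct^k) = (at^m \circ_4 bt^n)\circ_3 ct^k,
\end{align*}
assuming the $j$-th product identity
\begin{align*}
a\circ_{1(p)}(b\circ_{2(q)}c)=\sum_{r=0}^{p}\binom{p}{r}(a\circ_{4(r)}b)\circ_{3(p+q-r)}c
\end{align*}
for all $p,q\ge 0$. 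Specializing $(\circ_1,\circ_2,\circ_3,\circ_4)$ then yields all five identities.

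For the left side, expand with (\ref{equation 11}) twice:
\begin{align*}
\sum_{i,p\ge0}\binom{n}{i}\binom{m}{p}\, a\circ_{1(p)}(b\circ_{2(i)}c)\, t^{m+n+k-i-p}.
\end{align*}
Apply the $j$-th product identity to rewrite this as
\begin{align*}
\sum_{i,p}\sum_{r\le p}\binom{n}{i}\binom{m}{p}\binom{p}{r}\,(a\circ_{4(r)}b)\circ_{3(p+i-r)}c\; t^{m+n+k-i-p}.
\end{align*}
For the right side, expand to get
\begin{align*}
\sum_{r,s\ge0}\binom{m}{r}\binom{m+n-r}{s}\,(a\circ_{4(r)}b)\circ_{3(s)}c\; t^{m+n+k-r-s}.
\end{align*}
Match coefficients of $(a\circ_{4(r)}b)\circ_{3(s)}c$. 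On the left, set $l=p$ and $j=i$, with $s=l+j-r$. The exponent of $t$ agrees, namely $m+n+k-r-s$. Using (\ref{f-1}), $\binom{m}{l}\binom{l}{r}=\binom{m}{r}\binom{m-r}{l-r}$, the left coefficient becomes
\begin{align*}
\binom{m}{r}\sum_{j}\binom{n}{j}\binom{m-r}{s-j}.
\end{align*}
By (\ref{f-2}) with $p=s$, this equals $\binom{m}{r}\binom{m+n-r}{s}$, which is exactly the right coefficient.

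The main obstacle is bookkeeping for finiteness and for the summation ranges. All sums are finite because only finitely many $j$-th products are nonzero, since $\la$-products are polynomial. Reindexing requires care: the constraint $r\le p$ becomes $j\le s$ together with $s-j\ge0$ and $l-r=s-j\ge0$. Extended binomials with $j>s$ are not automatically zero for negative upper index, so we must restrict the sum explicitly to $0\le j\le s$, which is precisely the range in (\ref{f-2}). The choice of which $\la$-identity governs each diassociative axiom needs a quick check. For instance, (\ref{di-2}) uses (\ref{equation 7}) with $\circ_4=\circ_3=\dashv$, and (\ref{di-4}) uses (\ref{equation 9}) with $\circ_4=\dashv$, $\circ_3=\vdash$. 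Once the generic computation is done, the proposition follows.
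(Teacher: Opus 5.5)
Your proof is correct and follows the paper's argument. As in the paper, you expand on the generators $at^m, bt^n, ct^k$, apply the $j$-th product identities (\ref{equation 6})--(\ref{equation 10}), and then use (\ref{f-1}), the reindexing $s=l+j-r$, and (\ref{f-2}) to identify the two sides; the only difference is that you run one generic computation with $(\circ_1,\circ_2,\circ_3,\circ_4)$ covering all five axioms, whereas the paper writes out (\ref{di-1}) and (\ref{di-2}) and calls the rest analogous.
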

\begin{proof}
   We will prove the diassociative identities (\ref{di-1})--(\ref{di-5}) only for the generating elements of $\D[t^{\pm 1} ]$. For any $a,b,c\in\D$ and $m,n,k\in\mathbb{Z}$, we have
    \begin{align*}
         at^{m} \dashv (bt^{n} \dashv ct^{k})  &~= \sum_{l , j \geq 0} \binom{m}{l} \binom{n}{j} ~ \! a\dashv_{(l)} (b \dashv_{(j)} c) ~ \! t^{m+n+k-l-j}\\&
         ~= \sum_{l , j \geq 0 } \binom{m}{l} \binom{n}{j} ~ \!  a\dashv_{(l)} (b \vdash_{(j)}c) ~ \! t^{m+n+k-j-l} \quad \left(\text{by } (\ref{equation 6}) \text{ and } (\ref{equation 7}) \right)\\
         &~=at^{m}\dashv(bt^{n}\vdash ct^{k}).
    \end{align*}
Also,
\begin{align*}
at^{m}\dashv(bt^{n}\dashv ct^{k})
&=\sum_{l,j \geq 0}
\binom{m}{l}\binom{n}{j}
~ \! a\dashv_{(l)}(b\dashv_{(j)}c) ~ \! 
t^{m+n+k-l-j}\\
&=\sum_{l,j \geq 0 }\sum_{r=0}^{l}
\binom{m}{l}\binom{n}{j}\binom{l}{r}
(a\dashv_{(r)}b)
\dashv_{(l+j-r)}c
 ~ \! t^{m+n+k-l-j} \quad  \left(\text{by } (\ref{equation 6})  \right)\\
&=\sum_{l,j \geq 0 }\sum_{r=0}^{l}
\binom{m}{r}\binom{m-r}{l-r}\binom{n}{j}
(a\dashv_{(r)}b)
\dashv_{(l+j-r)}c ~ \!
t^{m+n+k-l-j} \quad   \left(\text{by } (\ref{f-1})  \right)
\end{align*}
\begin{align*}
&=\sum_{p,r \geq 0 }
\binom{m}{r}
\sum_{j=0}^{p}
\binom{n}{j}
\binom{m-r}{p-j}
(a\dashv_{(r)}b)
\dashv_{(p)}c ~ \!
t^{m+n+k-p-r} \quad \left( \text{taking } p=l+j-r \right)\\
&=\sum_{p,r \geq 0 }
\binom{m}{r}
\binom{m+n-r}{p}
(a\dashv_{(r)}b)
\dashv_{(p)}c ~ \! 
t^{m+n+k-p-r} \quad \left(\text{by } (\ref{f-2})  \right)\\
&=(at^{m}\dashv bt^{n})\dashv ct^{k}.
\end{align*}
This verifies the identities (\ref{di-1}) and (\ref{di-2}). The remaining three identities (\ref{di-3})--(\ref{di-5}) can be verified analogously. 
This proves the result.
\end{proof}
Next, we consider the subspace $(\partial+\partial_t) ~ \! \D[t^{\pm1}]$ of the space $\D[t^{\pm1}]$, generated by the elements 
\begin{align*}
    (\partial a)t^m + mat^{m-1}, \text{ for } a\in\D, m\in \mathbb{Z}.
\end{align*}
Then we have the following result.
\begin{proposition}
The subspace $(\partial+\partial_t) ~ \! \D[t^{\pm1}]$ is a two-sided ideal of the diassociative algebra $\D[t^{\pm1}]$.
\end{proposition}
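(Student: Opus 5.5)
To prove that $(\partial+\partial_t)\,\D[t^{\pm1}]$ is a two-sided ideal, I will check four things on generators. For $x=(\partial a)t^m+m\,at^{m-1}$ and any $bt^n$, each of the products
\[
x\dashv bt^n,\qquad x\vdash bt^n,\qquad bt^n\dashv x,\qquad bt^n\vdash x
\]
must lie in the subspace. The two operations are defined by the same formula (\ref{equation 11}), and both $\la$-products satisfy the same sesquilinearity rules. So I only need the argument for $\dashv$; the one for $\vdash$ is word-for-word the same. The tools are the $j$-th product forms of sesquilinearity,
\[
\partial a\dashv_{(j)} b=-j\,(a\dashv_{(j-1)}b),\qquad a\dashv_{(j)}\partial b=j\,(a\dashv_{(j-1)}b)+\partial(a\dashv_{(j)}b),
\]
together with elementary binomial identities.

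\emph{Left multiplication by $x$.} Expanding with (\ref{equation 11}) gives
\[
x\dashv bt^n=\sum_j\binom{m}{j}(\partial a\dashv_{(j)}b)\,t^{m+n-j}+m\sum_j\binom{m-1}{j}(a\dashv_{(j)}b)\,t^{m+n-1-j}.
\]
Substitute $\partial a\dashv_{(j)}b=-j(a\dashv_{(j-1)}b)$ in the first sum and shift the index. The coefficient of $(a\dashv_{(j)}b)\,t^{m+n-1-j}$ becomes
\[
-(j+1)\binom{m}{j+1}+m\binom{m-1}{j}.
\]
This vanishes by the absorption identity $(j+1)\binom{m}{j+1}=m\binom{m-1}{j}$, which holds for extended binomials with $m\in\Z$. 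So $x\dashv bt^n=0$: the subspace is in fact annihilated from the left.

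\emph{Right multiplication by $x$.} Here I compute $bt^n\dashv\big((\partial a)t^m+m\,at^{m-1}\big)$. The rule $b\dashv_{(j)}\partial a=j(b\dashv_{(j-1)}a)+\partial(b\dashv_{(j)}a)$ splits the result into two pieces.
\begin{itemize}
\item The $\partial$-part is $\sum_j\binom{n}{j}\,\partial(b\dashv_{(j)}a)\,t^{n+m-j}$. I want to pair it with $\partial_t$ acting on $bt^n\dashv at^m=\sum_j\binom{n}{j}(b\dashv_{(j)}a)\,t^{n+m-j}$. That is, I aim to show
\[
bt^n\dashv x=\sum_j\binom{n}{j}\Big[\partial(b\dashv_{(j)}a)\,t^{n+m-j}+(n+m-j)(b\dashv_{(j)}a)\,t^{n+m-j-1}\Big],
\]
which is visibly a sum of generators of $(\partial+\partial_t)\D[t^{\pm1}]$, taking the elements $c=b\dashv_{(j)}a$ and exponent $n+m-j$.
\item To get this, the remaining terms must produce the coefficient $(n+m-j)\binom{n}{j}$ in front of $(b\dashv_{(j)}a)\,t^{n+m-j-1}$. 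They are the term $(j+1)\binom{n}{j+1}(b\dashv_{(j)}a)\,t^{n+m-j-1}$, coming from the $j$-part after an index shift, and the term $m\binom{n}{j}(b\dashv_{(j)}a)\,t^{n+m-1-j}$. So the requirement reduces to
\[
(j+1)\binom{n}{j+1}+m\binom{n}{j}=(n+m-j)\binom{n}{j},
\]
i.e.\ $(j+1)\binom{n}{j+1}=(n-j)\binom{n}{j}$. This is again a standard identity, valid for all integers $n$.
\end{itemize}

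The main obstacle is bookkeeping rather than anything conceptual. One must track the index shifts carefully and confirm that the binomial identities hold for negative $m,n$ under the paper's extended binomial convention. The sums are finite because only finitely many $j$-th products are nonzero, so the reindexing is legitimate. This completes the plan for both products and both sides, proving the subspace is a two-sided ideal.
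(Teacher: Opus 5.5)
Your proposal is correct and follows the paper's proof essentially step for step. The left product with a generator vanishes by $\partial a\dashv_{(j)}b=-j\,(a\dashv_{(j-1)}b)$ together with the absorption identity, and the right product is rewritten via $b\dashv_{(j)}\partial a=j\,(b\dashv_{(j-1)}a)+\partial(b\dashv_{(j)}a)$ and $(j+1)\binom{n}{j+1}=(n-j)\binom{n}{j}$ as $\sum_j\binom{n}{j}\big[\partial(b\dashv_{(j)}a)\,t^{n+m-j}+(n+m-j)(b\dashv_{(j)}a)\,t^{n+m-j-1}\big]$, which is exactly the paper's element $\alpha$ with the roles of $a,b$ swapped (and the $\vdash$ case is likewise dismissed as identical in both).
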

\begin{proof}
    For any $a,b \in \D$ and $m,n \in \Z$, we have
    \begin{align*}
      \big(  (\p a)t^{m} + mat^{m-1} \big) \dashv bt^{n} &= \sum_{j \geq 0 } \binom{m}{j}  \big(\p a \dashv_{(j)} b \big) ~ \! t^{m+n-j } + m \sum_{k \geq 0 } \binom{m-1}{k}  \big( a \dashv_{(k)} b \big) ~ \! t^{m+n-k-1 }\\
      &= -\sum_{j \geq 1 } j \binom{m}{j}  \big( a \dashv_{(j-1)} b \big) ~ \!  t^{m+n-j } +  \sum_{k \geq 0 }(m-k) \binom{m}{k}  \big( a \dashv_{(k)} b \big) ~ \! t^{m+n-k-1 }\\
      &= -\sum_{j \geq 0 } (m-j) \binom{m}{j}  \big( a \dashv_{(j)} b \big) ~ \! t^{m+n-j-1 } +  \sum_{k \geq 0 }(m-k) \binom{m}{k}  \big( a \dashv_{(k)} b \big) ~ \! t^{m+n-k-1 }\\
      &=0
    \end{align*}
    and 
    \begin{align*}
        & at^m \dashv \big( (\p b)t^n + nbt^{n-1} \big) \\
        =&~ \sum_{j \geq 0 } \binom{m}{j}  \big( a \dashv_{(j)} \p b \big) ~ \! t^{m+n-j } + n \sum_{k \geq 0 } \binom{m}{k}  \big( a \dashv_{(k)} b \big) ~ \! t^{m+n-k-1 } \\
        =&~\sum_{j \geq 0 } \binom{m}{j}  \big( \p( a \dashv_{(j)}  b)+ j( a \dashv_{(j-1)} b) \big) ~ \! t^{m+n-j } + n \sum_{k \geq 0 } \binom{m}{k}  \big( a \dashv_{(k)} b \big) ~ \! t^{m+n-k-1 }
        \end{align*}
        \begin{align*}
        =&~\sum_{j \geq 0 } \binom{m}{j}  ~ \! \p \big(a \dashv_{(j)}  b \big) ~ \! t^{m+n-j } +\sum_{j \geq 0 } \binom{m}{j}   (m+n-j) \big(a \dashv_{(j)}  b \big) ~ \! t^{m+n-j -1}\\
        &~ -\sum_{j \geq 0 } \binom{m}{j}   (m+n-j) \big(a \dashv_{(j)}  b \big) ~ \! t^{m+n-j-1 } +\sum_{j \geq 0 } j\binom{m}{j}   \big(  a \dashv_{(j-1)} b \big) ~ \! t^{m+n-j }\\
        &~+ n \sum_{k \geq 0 } \binom{m}{k}  \big( a \dashv_{(k)} b \big) ~ \! t^{m+n-k-1 }\\
        =&~ \alpha  - \sum_{j \geq 0 } \binom{m}{j}   (m-j) \big(a \dashv_{(j)}  b \big) ~ \! t^{m+n-j-1 } +\sum_{j \geq 0 } j\binom{m}{j}   \big(  a \dashv_{(j-1)} b \big) ~ \! t^{m+n-j }\\
        =&~\alpha  - \sum_{j \geq 0 } \binom{m}{j}   (m-j) \big(a \dashv_{(j)}  b \big) ~ \! t^{m+n-j-1 } +\sum_{j \geq 0 } \binom{m}{j}   (m-j) \big(a \dashv_{(j)}  b \big) ~ \! t^{m+n-j-1 }\\
        =&~ \alpha,
    \end{align*}
    where 
    \begin{align*}
    \alpha = \sum \limits_{j \geq 0 } \binom{m}{j}   \p \big(a \dashv_{(j)}  b \big) ~ \! t^{m+n-j } +\sum \limits_{j \geq 0 } \binom{m}{j}   (m+n-j) \big(a \dashv_{(j)}  b \big) ~ \! t^{m+n-j -1} 
    \end{align*}
    is an element of $(\partial+\partial_t)\D[t^{\pm1}]$. Therefore, $at^m \dashv \big( (\p b)t^n + nbt^{n-1} \big) \in (\partial+\partial_t)\D[t^{\pm1}]$. In a similar way, one may show that $\big(  (\p a)t^{m} + mat^{m-1} \big)\vdash bt^{n} $ and $at^m \vdash \big( (\p b)t^n + nbt^{n-1} \big) \in (\partial+\partial_t)\D[t^{\pm1}]$. This completes the proof.
\end{proof}

Let $\mathrm{Diass} (\D) = \D[t^{\pm1}]\big/(\partial+\partial_t) ~ \! \D[t^{\pm1}]$ be the quotient of the diassociative algebra $ \D[t^{\pm1}]$ by the two-sided ideal $(\partial+\partial_t) ~ \! \D[t^{\pm1}]$.
Then the operations $\dashv$ and $\vdash$ on $\D[t^{\pm1}]$ induce corresponding
operations on $\mathrm{Diass} (\D)$, making it a diassociative algebra. For any $a\in \D$ and $m\in \Z$, we denote by $a_m$ the image of $at^m$ in $\mathrm{Diass} (\D)$. 

\begin{remark}\label{anni-di}
Note that there is a diassociative algebra derivation $T: \mathrm{Diass} (\D) \rightarrow \mathrm{Diass} (\D)$ induced by $-\p_t$. Explicitly, we have $T(a_m)= -ma_{m-1}$. It is clear from (\ref{equation 11}) that the subspace spanned by all $a_m$ with $m \geq 0$ is a $T$-invariant subalgebra of the diassociative algebra $\mathrm{Diass} (\D)$. This is called the {\bf annihilation diassociative algebra} of the given diassociative conformal algebra $\D$, and we denote this by $\D_{-}$. 
\end{remark}

 Our aim now is to construct a family $\mathcal{F}$ of $\mathrm{Diass} (\D)$-valued formal distributions such that $(\mathrm{Diass} (\D), \mathcal{F})$ is a formal distribution diassociative algebra. For this, we set
\begin{align*}
\mathcal{F}
=
\big\{
a(w)=\sum_{n\in\mathbb Z}(a_{n})w^{-n-1} ~ \! \big| ~ \!
a\in \D
\big\}.
\end{align*}
Then we have the following result.

\begin{proposition}\label{p-j}
Let $\D$ be a diassociative conformal algebra.
Then the family $\mathcal{F}$ defined above is closed under all $j$-th products defined in (\ref{j-th}) and is invariant under the formal differentiation operator $\partial_w$. 
\end{proposition}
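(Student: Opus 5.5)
Both claims reduce to explicit identities in $\mathrm{Diass}(\D)$, and the plan is to prove them by comparing Fourier modes. For $a\in\D$ put $a(w)=\sum_{n\in\mathbb Z} a_n w^{-n-1}$. We will show
\begin{align*}
\partial_w a(w) = (\partial a)(w), \qquad a(w)\dashv_{(j)} b(w) = (a\dashv_{(j)} b)(w), \qquad a(w)\vdash_{(j)} b(w) = (a\vdash_{(j)} b)(w),
\end{align*}
for all $a,b\in\D$ and $j\ge 0$. Since $\mathcal F$ is exactly the set of $a(w)$ with $a\in\D$, these give closure under $\partial_w$ and under all $j$-th products at once. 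They also show that $a\mapsto a(w)$ is compatible with $\partial$ and the $\la$-products, which will be useful for the equivalence of categories later.

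\emph{Step 1: $\partial_w$-invariance.} We have $\partial_w a(w)=\sum_n (-n-1)a_n w^{-n-2}$. Reindexing with $m=n+1$ gives $\sum_m (-m)a_{m-1}w^{-m-1}$. In the quotient $\mathrm{Diass}(\D)$ the relation $(\partial a)t^m+m a t^{m-1}\equiv 0$ says $(\partial a)_m=-m a_{m-1}$. Hence $\partial_w a(w)=\sum_m (\partial a)_m w^{-m-1}=(\partial a)(w)$, and this lies in $\mathcal F$.

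\emph{Step 2: $j$-th products.} Expand $(z-w)^j=\sum_{i=0}^{j}\binom{j}{i}(-1)^{j-i}z^i w^{j-i}$ and take $\mathrm{Res}_z$ of $(z-w)^j\,a(z)\dashv b(w)$. This picks out $a_i$ from $a(z)$, so
\begin{align*}
a(w)\dashv_{(j)}b(w)=\sum_{n\in\mathbb Z}\sum_{i=0}^{j}\binom{j}{i}(-1)^{j-i}\,(a_i\dashv b_n)\,w^{j-i-n-1}.
\end{align*}
Substitute (\ref{equation 11}), namely $a_i\dashv b_n=\sum_{k\ge0}\binom{i}{k}(a\dashv_{(k)}b)_{i+n-k}$, and collect the coefficient of $w^{-N-1}$. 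Here $N=n+i-j$, so $n=N-i+j$ and the mode index is $i+n-k=N+j-k$. The coefficient becomes
\begin{align*}
\sum_{k\ge0}\Big(\sum_{i=k}^{j}(-1)^{j-i}\binom{j}{i}\binom{i}{k}\Big)(a\dashv_{(k)}b)_{N+j-k}.
\end{align*}
By (\ref{f-1}), $\binom{j}{i}\binom{i}{k}=\binom{j}{k}\binom{j-k}{i-k}$, so the inner sum equals $\binom{j}{k}(1-1)^{j-k}=\delta_{jk}$. The coefficient therefore collapses to $(a\dashv_{(j)}b)_N$, which is the claimed identity. The computation for $\vdash$ is word-for-word the same, and all sums are finite because the $j$-th products of $\D$ vanish for large $j$.

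\emph{Main obstacle.} The delicate part is the index bookkeeping in Step 2. One must reindex so that the power of $w$ is fixed before summing over $i$, and then recognize the alternating sum as $\delta_{jk}$ through identity (\ref{f-1}). Working in the quotient causes no difficulty, because every identity is checked directly on the images $a_m$.
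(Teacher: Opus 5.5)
Your proof is correct, and your Step~1 is exactly the paper's argument. For the $j$-th products the paper argues differently. It first rewrites (\ref{equation 11}) as the expansion $a(z)\dashv b(w)=\sum_{k\ge0}(a\dashv_{(k)}b)(w)\,\partial_w^k\delta(z,w)/k!$ (and likewise for $\vdash$), which it states as a direct consequence of (\ref{equation 11}). It then quotes the standard identity $\operatorname{Res}_z\big((z-w)^j\partial_w^k\delta(z,w)\big)=j!\,\delta_{jk}$ from Kac's book, so taking the $j$-th product simply selects the $k=j$ term.

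Your alternating sum $\sum_{i=k}^{j}(-1)^{j-i}\binom{j}{i}\binom{i}{k}=\delta_{jk}$ is the mode-level content of that residue identity. So your computation is an elementary, self-contained unpacking of the two facts the paper takes for granted. It also only involves $\binom{i}{k}$ with $0\le i\le j$, because the residue sees only the modes $a_0,\dots,a_j$. Hence no extended binomials with negative top entry enter.

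What the paper's route buys is reuse. The same delta-function expansion immediately gives locality of every pair in $\mathcal{F}$: $(z-w)^N\partial_w^k\delta(z,w)=0$ for $N>k$, and only finitely many $a\dashv_{(k)}b$, $a\vdash_{(k)}b$ are nonzero. The paper needs this right after the proposition to conclude that $(\mathrm{Diass}(\D),\mathcal{F})$ is a formal distribution diassociative algebra. Your identity does not give locality by itself, so that expansion (or an equivalent mode computation valid for all $m\in\mathbb{Z}$) would still have to be checked at the next step.
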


\begin{proof}
For any $a(w)=\sum \limits_{m\in\mathbb Z}(a_{m})w^{-m-1} \text{ and } b(w)=\sum \limits_{n\in\mathbb Z}(b_{n})w^{-n-1} \in \mathcal{F}$, it follows from (\ref{equation 11}) that
\begin{align*}
    a(z) \dashv b(w) = \sum_{k \geq 0} (a \dashv_{(k)} b)(w)\frac{\p_{w}^{k}\delta(z,w)}{k!}~\text{ and }~ a(z) \vdash b(w) = \sum_{k \geq 0} (a \vdash_{(k)} b)(w)\frac{\p_{w}^{k}\delta(z,w)}{k!},
\end{align*}
where $\delta(z,w) = \sum \limits_{ n \in \Z}z^{-n-1} w^{n}$ is the delta distribution. It is well-known \cite{kac-1} for the delta distribution that 
\begin{align*}
     \operatorname{Res}_z ( (z-w)^j \p_{w}^{k} \delta(z,w) ) =
     \begin{cases}
         j!,  & \text{ for } j=k, \\
         0, &  \text{ for } j \neq k.
     \end{cases}
 \end{align*}
Hence, for any $j \geq 0$, we have 
\begin{align*}
    a(w) \dashv_{(j)} b(w) = \operatorname{Res}_z (z-w)^j ~ \! a(z) \dashv b(w)= (a \dashv_{(j)} b )(w),\\
    a(w) \vdash_{(j)} b(w) = \operatorname{Res}_z (z-w)^j ~ \! a(z) \vdash b(w)= (a \vdash_{(j)} b )(w).
\end{align*}
This shows that the family $\mathcal{F}$ is closed under all $j$-th products defined in (\ref{j-th}). To show that $\mathcal{F}$ is $\p_w$-invariant, let $a(w)=\sum \limits_{n\in\mathbb Z}(a_{n})w^{-n-1}$ be any element of $\mathcal{F}$. Then, using the equality $(\partial a)_n =- n a_{n-1}$ in $D$, we have  
\begin{align*}
    \p_w (a(w)) = \sum \limits_{n\in\mathbb Z} (-n-1)(a_{n})w^{-n-2}=\sum \limits_{n\in\mathbb Z} (-na_{n-1})w^{-n-1}=\sum \limits_{n\in\mathbb Z} ((\p a)_{n})w^{-n-1}=(\p a)(w).
\end{align*}
This completes the proof.
\end{proof}

By our construction, the coefficients of all distributions in $\mathcal{F}$ span $\mathrm{Diass} (\D)$. Next, let $a(w)=\sum \limits_{m\in\mathbb Z}(a_m)w^{-m-1}$ and
$b(w)=\sum \limits_{n\in\mathbb Z}(b_n)w^{-n-1}$ be arbitrary elements of $\mathcal{F}$. From the definitions of $\dashv$ and $\vdash$ on $\mathrm{Diass} (\D)$, and the corresponding $j$-th products of formal distributions, we get that
\begin{align*}
a(z)\dashv b(w)
=\sum_{j\geq 0}(a\dashv_{(j)}b)(w) ~ \! 
\frac{\partial_w^j\delta(z,w)}{j!}~\text{ and }~
a(z)\vdash b(w)
=\sum_{j\geq 0}(a\vdash_{(j)}b)(w) ~ \!
\frac{\partial_w^j\delta(z,w)}{j!}.
\end{align*}
Hence, there exists a non-negative integer $N$ such that $(z-w)^N a(z)\dashv b(w)=0=(z-w)^N a(z)\vdash b(w)$. Thus, $\big(a(w),b(w)\big)$ is a local pair. Consequently, $(\mathrm{Diass} (\D) ,\mathcal{F})$ is a formal distribution diassociative algebra. Furthermore, Proposition \ref{p-j} shows that the smallest subspace of $\mathrm{Diass} (\D) \llbracket w^{\pm 1 }\rrbracket$ containing $\mathcal{F}$ which is closed under all $j$-th products (\ref{j-th}) and $\p$-invariant coincides with $\mathcal{F}$ itself, that is, $\overline{\mathcal{F}}=\mathcal{F}.$ Hence, by the construction of the corresponding diassociative conformal algebra, we get that $\text{Conf} ( \mathrm{Diass} (\D) ,\mathcal{F})=\overline{\mathcal{F}}=\mathcal{F}. $ Further, the map $\varphi:\mathcal{D}\rightarrow\mathcal{F}$ given by $\varphi(a) :=a(w)=\sum \limits_{m\in\mathbb Z}(a_m)w^{-m-1}$ is an isomorphism of diassociative conformal algebras. Hence, for the given diassociative conformal algebra $\mathcal{D}$, we have $ \operatorname{Conf}(\mathrm{Diass} (\D) , \mathcal{F}) \cong \D. $  

\medskip
    
    Two formal distribution diassociative algebras $(D_1,\mathcal{F}_1)$ and $(D_2,\mathcal{F}_2)$ are said to be {\bf equivalent} if their corresponding diassociative conformal algebras $\operatorname{Conf}(D_1,\mathcal{F}_1)$ and $\operatorname{Conf}(D_2,\mathcal{F}_2)$ are isomorphic. Hence, we conclude the following.

\begin{theorem}\label{thm-categ-equiv}
    The category of diassociative conformal algebras and the category of equivalence classes of formal distribution diassociative algebras are equivalent.
\end{theorem}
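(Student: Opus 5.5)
The proof assembles the two constructions already carried out into a pair of functors and checks that they are mutually quasi-inverse. First I will make the target category precise. Its objects are equivalence classes $[(D,\mathcal{F})]$ of formal distribution diassociative algebras, where $(D_1,\mathcal{F}_1)\sim(D_2,\mathcal{F}_2)$ exactly when $\operatorname{Conf}(D_1,\mathcal{F}_1)\cong\operatorname{Conf}(D_2,\mathcal{F}_2)$. Its morphisms $[(D_1,\mathcal{F}_1)]\to[(D_2,\mathcal{F}_2)]$ are the homomorphisms of diassociative conformal algebras $\operatorname{Conf}(D_1,\mathcal{F}_1)\to\operatorname{Conf}(D_2,\mathcal{F}_2)$. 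To make this independent of representatives, I fix one representative in each class, or equivalently transport morphisms along the chosen isomorphisms. With this convention, $\operatorname{Conf}$ is well defined on classes and is a functor $\mathbf{C}:[(D,\mathcal{F})]\mapsto \operatorname{Conf}(D,\mathcal{F})$ into $\mathbf{DiassC}$. The facts needed for $\operatorname{Conf}(D,\mathcal{F})$ to be a diassociative conformal algebra are already in place: $\overline{\mathcal{F}}$ is local, the $\la$-products are conformal sesquilinear (Proposition \ref{c.s.lap}), and they satisfy (\ref{equation 1})--(\ref{equation 5}).

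In the other direction, I define $\mathbf{G}:\mathbf{DiassC}\to$ (classes) by $\D\mapsto[(\mathrm{Diass}(\D),\mathcal{F})]$, with $\mathcal{F}=\{a(w)\}$ as above. This is an object by Proposition \ref{di-al}, the ideal property of $(\partial+\partial_t)\D[t^{\pm1}]$, and the locality computation that follows Proposition \ref{p-j}. On morphisms I would do the following. A homomorphism $\varphi:\D\to\D'$ is $\C[\p]$-linear and preserves all $j$-th products, so $at^m\mapsto\varphi(a)t^m$ respects the products (\ref{equation 11}) and carries the ideal $(\partial+\partial_t)\D[t^{\pm1}]$ into $(\partial+\partial_t)\D'[t^{\pm1}]$. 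It therefore induces a diassociative algebra map $\mathrm{Diass}(\D)\to\mathrm{Diass}(\D')$, $a_m\mapsto\varphi(a)_m$, sending $a(w)$ to $\varphi(a)(w)$. Under the identification $\operatorname{Conf}(\mathrm{Diass}(\D),\mathcal{F})=\mathcal{F}$, this map restricts to a conformal homomorphism $\mathcal{F}\to\mathcal{F}'$. Functoriality of $\mathbf{G}$ is then immediate.

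It then suffices to exhibit natural isomorphisms $\mathbf{C}\mathbf{G}\cong\mathrm{id}$ and $\mathbf{G}\mathbf{C}\cong\mathrm{id}$. For the first, the discussion preceding the theorem gives $\mathbf{C}\mathbf{G}(\D)=\operatorname{Conf}(\mathrm{Diass}(\D),\mathcal{F})=\mathcal{F}$ together with the isomorphism $\varphi_\D:a\mapsto a(w)$. I need to check that $\varphi_\D$ is natural, i.e. that $\psi(a)(w)$ equals the image of $a(w)$ for every homomorphism $\psi$. This follows directly from the description of $\mathbf{G}(\psi)$. For the second, $\mathbf{G}\mathbf{C}[(D,\mathcal{F})]$ is the class of $(\mathrm{Diass}(\mathcal{C}),\mathcal{F}_{\mathcal{C}})$ with $\mathcal{C}=\operatorname{Conf}(D,\mathcal{F})$. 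Its associated conformal algebra is isomorphic to $\mathcal{C}$ by the first isomorphism, so by the very definition of equivalence it lies in the same class as $(D,\mathcal{F})$. The required isomorphism in the class category is the morphism represented by $\varphi_{\mathcal{C}}^{-1}$, and its naturality reduces to the naturality of $\varphi$ already established.

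The main obstacle is bookkeeping rather than computation. I expect the hardest step to be confirming that $\mathbf{G}$ is well defined on morphisms, in particular that the induced map $\D[t^{\pm1}]\to\D'[t^{\pm1}]$ preserves the ideal. I would verify this first on the generators $(\p a)t^m+ma t^{m-1}$, where it is immediate from $\varphi(\p a)=\p\varphi(a)$. The other point needing care is making the class category honest: I would handle the choice of representatives by working with the full subcategory of $\mathbf{DiassC}$ spanned by the objects $\operatorname{Conf}(D,\mathcal{F})$. The theorem then amounts to essential surjectivity of $\operatorname{Conf}$, which is exactly $\operatorname{Conf}(\mathrm{Diass}(\D),\mathcal{F})\cong\D$, together with the tautological full faithfulness built into the definition of morphisms.
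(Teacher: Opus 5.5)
Your proposal is correct and follows the same route as the paper. The paper's argument consists of the constructions $\operatorname{Conf}(D,\mathcal{F})$ and $(\mathrm{Diass}(\D),\mathcal{F})$, the isomorphism $\operatorname{Conf}(\mathrm{Diass}(\D),\mathcal{F})=\mathcal{F}\cong\D$ via $a\mapsto a(w)$, and the definition of equivalence through isomorphism of the associated conformal algebras; it then concludes directly, and your version adds the bookkeeping the paper leaves implicit (morphisms of the class category as homomorphisms of the $\operatorname{Conf}$'s, the induced map $\mathrm{Diass}(\D)\to\mathrm{Diass}(\D')$ preserving $(\partial+\partial_t)\D[t^{\pm1}]$, and naturality of $a\mapsto a(w)$), which correctly reduces the theorem to essential surjectivity of $\operatorname{Conf}$.
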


\section{Maurer-Cartan characterization and cohomology of diassociative conformal algebras}\label{sec3} In this section, we first construct a graded Lie algebra whose Maurer-Cartan elements are in one-to-one correspondence with diassociative conformal algebra structures on a given $\C [\partial]$-module. Subsequently, we define the cohomology of a diassociative conformal algebra. We show that the graded space of cohomology inherits a natural Gerstenhaber algebra structure. Later, we extend the construction to define the cohomology with coefficients in an arbitrary representation.

\medskip

To achieve the Maurer-Cartan characterization of diassociative conformal algebras, we first recall some basic notions concerning planar binary trees and nonsymmetric operads. A planar binary tree with $n$ internal vertices, often called an {\em $n$-tree}, is a rooted planar tree consisting of $n+1$ leaves and a single root, where every internal vertex has exactly two incoming edges and one outgoing edge. The planar structure specifies a fixed left-to-right ordering of the branches at each vertex. For each $n \geq 0$, let $Y_{n}$ be the set of all planar binary trees with $n$ internal vertices. In the degenerate case $n=0$, the set $Y_{0}$ consists of the unique tree containing only a single root. For the first few values of $n$, the corresponding sets $Y_{n}$ are illustrated below:

\medskip

\[
\begin{aligned}
Y_0 &= \Bigg\{~ \vcenter{\hbox{\begin{tikzpicture}[scale=0.2, baseline={(0,0)}]    
\draw (0,0) -- (0,4);
\end{tikzpicture}}}~ \Bigg\}, \quad
Y_1 = \Bigg\{ \vcenter{\hbox{\begin{tikzpicture}[scale=0.2, baseline={(0,-0.5)}]
\draw (0,0)-- (0,-2); \draw (0,0) -- (-2,2); \draw (0,0) -- (2,2);
\end{tikzpicture}}} \Bigg\}, \quad
Y_2 = \Bigg\{ \vcenter{\hbox{\begin{tikzpicture}[scale=0.2, baseline={(0,-0.5)}]
\draw (0,0) -- (2,-2); \draw (2,-2) -- (4,0); \draw (2,-2) -- (2,-4); \draw (1,-1) -- (2,0); 
\end{tikzpicture}}} ,~ 
\vcenter{\hbox{\begin{tikzpicture}[scale=0.2, baseline={(0,-0.5)}]
\draw (0,0) -- (2,-2);    \draw (2,-2) -- (4,0);     \draw (2,-2) -- (2,-4);     \draw (3,-1) -- (2,0);
\end{tikzpicture}}}
\Bigg\}, \\[1em]
Y_3 &= \Bigg\{
\vcenter{\hbox{\begin{tikzpicture}[scale=0.2, baseline={(0,-0.5)}]
\draw (0,0)-- (2,-2); \draw (2,-2) -- (2,-4); \draw (2, -2) -- (4,0); 
\draw (0.7, - 0.7) -- (1.3333, 0) ; \draw (1.33, -1.33) -- (2.66, 0);
\end{tikzpicture}}} ,~
\vcenter{\hbox{\begin{tikzpicture}[scale=0.2, baseline={(0,-0.5)}]
\draw (0,0) -- (2,-2); \draw (2, -2) -- (2, -4); \draw (2, -2) -- (4,0); 
\draw (1.33, -1.33) -- (2.66, 0); \draw (1.33, 0) -- (2, -0.66);
\end{tikzpicture}}} ,~
\vcenter{\hbox{\begin{tikzpicture}[scale=0.2, baseline={(0,-0.5)}] 
\draw (0,0)-- (2,-2); \draw (2,-2)-- (2,-4); \draw (2,-2) -- (4,0); 
\draw (1.33, 0) -- (0.66, -0.66); \draw (2.66, 0) -- (3.34, -0.66); 
\end{tikzpicture}}} ,~
\vcenter{\hbox{\begin{tikzpicture}[scale=0.2, baseline={(0,-0.5)}]  
\draw (0,0) -- (2,-2); \draw (2, -2) -- (2, -4); \draw (2,-2) -- (4, 0); 
\draw (1.33, 0) -- (2.67 , -1.33) ; \draw (2.66, 0) -- (2, -0.66); 
\end{tikzpicture}}} ,~
\vcenter{\hbox{\begin{tikzpicture}[scale=0.2, baseline={(0,-0.5)}]  
\draw (0,0) -- (2, -2) ; \draw (2, -2) -- (2, -4) ; \draw (2, -2) -- (4, 0); 
\draw (1.33, 0) -- (2.67, - 1.33) ; \draw (2.66, 0) -- (3.34, -0.66);	
\end{tikzpicture}}}
\Bigg\}, \quad \text{etc.}
\end{aligned}
\]

\noindent Note that the cardinality of $Y_{n}$ is given by the $n$-th Catalan number $\frac{(2n)!}{(n+1)!~ n!}$. There are some fundamental operations on the set of all planar binary trees that play a central role in our construction (see \cite{Frabetti,loday1,maj-muk}).
\begin{itemize}
    \item[(i)] {\bf Face maps.} Let $y \in Y_n$ be a planar binary tree. The $n+1$ leaves of $y$ are labelled consecutively from left to right as $0, 1, \ldots, n$. For every $n \geq 1$ and $0 \leq i \leq n$, the {\em $i$-th face map} is the function $d_{i}: Y_{n} \to Y_{n-1},~ y \mapsto d_{i}y$ obtained by deleting the $i$-th leaf of $y$. These face maps satisfy the simplicial identity $d_{i} \circ d_{j} = d_{j-1} \circ d_{i}, \mathrm{ for~} i < j$.
    
    \item[(ii)] {\bf Grafting operation.} Let $y_{1}\in Y_{m}$ and $y_{2}\in Y_{n}$ be two planar binary trees. Their grafting, denoted by $y_{1} \vee y_{2}$, is defined to be the planar binary tree in $Y_{m+n+1}$ obtained by attaching the roots of $y_{1}$ and $y_{2}$ to a new internal vertex and declaring the edge emanating from this vertex as the new root. This operation provides a natural recursive procedure for constructing larger planar binary trees from smaller ones.
    
    \item[(iii)] {\bf Orientation map.} For every $n \geq 1$ and $0 \leq i \leq n$, there is a map $\bullet_{i}:~Y_{n}\to  \{\dashv, ~ \! \vdash\},~ \! y \mapsto \bullet_{i}^{y}$, defined by
    \begin{align*}
\bullet_0^y
&=
\begin{cases}
\dashv, & \text{if } y = | \vee y_1 \text{ for some } (n-1)\text{-tree } y_1,\\
\vdash, & \text{otherwise},
\end{cases}
\\[2ex]
\bullet_i^y
&=
\begin{cases}
\dashv, & \text{if the } i\text{-th leaf of } y \text{ is oriented like ``}\backslash\text{''},\\
\vdash, & \text{if the } i\text{-th leaf of } y \text{ is oriented like ``/''},
\end{cases}
\qquad (1\le i\le n-1)
\\[2ex]
\bullet_n^y
&=
\begin{cases}
\vdash, & \text{if } y = y_1 \vee | \text{ for some } (n-1)\text{-tree } y_1,\\
\dashv, & \text{otherwise}.
\end{cases}
\end{align*}
    
\item[(iv)] {\bf Structure maps.} Let $m,n \geq 1 \mathrm{~and~}1\leq i \leq m $. Two auxiliary maps $R_{0}^{m; i,n}: Y_{m+n-1} \rightarrow Y_{m}$ and $R_{i}^{m;i,n}: Y_{m+n-1} \to Y_{n}$ are defined as compositions of the face maps as follows:
\begin{align*}
R_0^{m; i, n}  = \widehat{d_0} \circ \widehat{d_1} \circ \cdots \circ \widehat{d_{i-1}} \circ d_i \circ \cdots \circ d_{i+n-2} \circ \widehat{d_{i + n-1}} \circ \cdots \circ \widehat{d_{m+n-1}},\\
R_i^{m; i, n}  = d_0 \circ d_1 \circ \cdots \circ d_{i-2} \circ \widehat{d_{i-1}} \circ \cdots \circ \widehat{d_{i+n-1}} \circ d_{i+n} \circ \cdots \circ d_{m+n-1},
\end{align*}
where the notation $\widehat{d_j}$ indicates that the corresponding face map $d_j$ is omitted from the expression.
\end{itemize}

\medskip

Let $\D$ and $\mathcal{D'}$ be two $\C[\p]$-modules, not necessarily having the structure of diassociative conformal algebras. For any $n \geq 1$, a $\C$-linear map 
\begin{align*}
f : \C[Y_n]\otimes \mathcal{ D}^{\otimes n} \rightarrow \mathcal{ D'}[\la_1,\ldots , \la _{n-1}],~ ~y\otimes a_1 \otimes \cdots \otimes a_n \mapsto f_{\la_1 , \ldots , \la _{n-1}}(y;a_1 , \ldots , a _n)
\end{align*}
is called a {\em conformal sesquilinear map} if it satisfies 
\begin{align*}
     f_{\la_1 , \ldots , \la _{n-1}}(y;a_1 , \ldots , \p a_i , \ldots, a _n) =~
     \begin{cases}
        -\la_i f_{\la_1 , \ldots , \la _{n-1}}(y;a_1 , \ldots ,a_{n-1}, a _n), \quad \text{if } 1\leq i \leq n-1, \\
         (\p + \la_1 + \cdots + \la_{n-1}) ~ \! f_{\la_1 , \ldots , \la _{n-1}}(y;a_1 , \ldots , a_{n-1}, a _n), \quad \text{if } i=n,
     \end{cases}
 \end{align*}
   for all $y \in Y_n$ and $a_1, \ldots , a_n \in \D$. We denote by $ C_{cY}^{n}(\D,\mathcal{D'})$ the $\C$-vector space consisting of all conformal sesquilinear maps from $\C[Y_n]\otimes \D^{\otimes n} \rm{~to~} \mathcal{D'}[\la _1 , \ldots , \la_{n-1}]$. In the special case $n=1$, the set $Y_{1}$ consists of a single planar binary tree, and there are no formal variables $\la_{i}$'s. Consequently, the conformal sesquilinearity condition reduces to the requirement that the map commutes with the action of $\p$. Hence, we have
$C_{cY}^{1}(\D,\D')
=\operatorname{Hom}_{\C[\p]}(\D,\D').$

\medskip

From now on, let $\D$ be a fixed $\C[\p]$-module. For every pair of positive integers $m,n$ and each $1 \leq i \leq m$, we define a map (called {\em partial composition}) 
        \begin{align*}
        \circ_i : C_{cY}^{m}(\D,\D) \otimes C_{cY}^{n}(\D,\D) \longrightarrow C_{cY}^{m+n-1}(\D,\D)
        \end{align*}
        by 
    \begin{align}
    \label{equ A}   
    &(f\circ _i g)_{\la_1 , \ldots , \la_{m+n-2}}(y;a_1,\ldots , a_{m+n-1}) \nonumber \\
    =& f_{\la _1, \ldots , \la_{i-1},\la_{i}+\cdots+\la_{i+n-1},\la_{i+n}, \ldots , \la_{m+n-2}} \big( R_{0}^{m;i,n}y;a_1, \ldots , g_{\la_i,\ldots,\la_{i+n-2}}(R_{i}^{m;i,n}y;a_i, \ldots, a_{i+n-1}),\ldots, a_{m+n-1} \big), 
 \end{align}
for $f \in C_{cY}^m(\D,\D),~ g \in C_{cY}^n(\D,\D), ~ y \in Y_{m+n-1} \text{ and } a_1, \ldots, a_{m+n-1} \in \D$.  In the following result, we aim to show that the collection of $\C$-vector spaces $\{C_{cY}^{n}(\D, \D)\}_{n \geq 1}$ equipped with the partial compositions (\ref{equ A}) forms a nonsymmetric operad. We first recall the following.

\medskip

\begin{definition}
    A {\bf nonsymmetric operad} is a triple $\mathcal{O} = (\{\mathcal{O}(n)\}_{n \geq 1}, \circ,  \mathbbm{1} )$ consisting of a collection $\{\mathcal{O}(n)\}_{n \geq 1} $ of $\C$-vector spaces equipped with $\C$-linear maps (called {\em partial compositions})
    \begin{align*}
        \circ_i :\mathcal{O}(m)\otimes \mathcal{O}(n) \longrightarrow \mathcal{O}(m+n-1),  \text{ for }  1 \leq i \leq m 
    \end{align*}
   such that for any $f \in \mathcal{O}(m), ~ \! g \in \mathcal{O}(n) \mathrm{~and~}  h \in \mathcal{O}(p)$, 
    \begin{align}
        (f \circ_i g) \circ_{i+j-1} h =~& f \circ_i (g \circ_j h),\text{ for } 1 \leq i \leq m ,~ \!  1 \leq j \leq n, \label{a-1}\\
        (f \circ_i g) \circ_{j+n-1} h =~& (f \circ_j h ) \circ_i g,\text{ for } 1 \leq i < j \leq m \label{a-2}
\end{align}
and there is an element $\mathbbm{1}\in \mathcal{O}(1)$ such that $f \circ_i \mathbbm{1} = \mathbbm{1} \circ_1 f= f$, for all $f \in \mathcal{O}(m) $ and $1 \leq i \leq m$.
\end{definition} 

 \begin{thm}\label{thm-operad}
     Let $\D$ be a $\C[\p]$-module. Then the collection $\{C_{cY}^{n}(\D,\D)\}_{n \geq 1}$ of $\C$-vector spaces together with the partial compositions defined in (\ref{equ A}), forms a nonsymmetric operad. The distinguished element $\mathbbm{1} \in C_{cY}^{1}(\D,\D)$ is given by 
     $\mathbbm{1}( \begin{tikzpicture}[scale=.3]
\draw (0,0)-- (0,-.5); \draw (0,0) -- (-.5,.5); \draw (0,0) -- (.5,.5);
\end{tikzpicture};a) =a$, for all $a \in \D$. 
 \end{thm}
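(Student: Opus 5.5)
We split the verification into three parts: the partial compositions (\ref{equ A}) are well defined, the unit axioms hold, and the associativity axioms (\ref{a-1}) and (\ref{a-2}) hold.

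\emph{Well-definedness.} We must show that $f\circ_i g$ is again conformal sesquilinear. Take $\p a_k$ in slot $k$.
\begin{itemize}
\item If $k<i$ or $k>i+n-1$, the claim follows from the sesquilinearity of $f$ in the corresponding slot. The variable attached to that slot is $\la_k$ or $\la_{k-n+1}$, which is the correct one. For $k=m+n-1$, the total $\la$-sum is unchanged by the composition.
\item If $i\le k\le i+n-1$, we first use the sesquilinearity of $g$. For $k<i+n-1$ this produces a factor $-\la_k$. For $k=i+n-1$ it produces $(\p+\la_i+\cdots+\la_{i+n-2})$ acting on the output of $g$. We then pass $\p$ through $f$: in a non-last slot $f$ turns it into $-(\la_i+\cdots+\la_{i+n-1})$, and in the last slot into $(\p+\sum\la)$. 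In both cases the $\la$-terms combine to the required $-\la_{i+n-1}$, respectively $(\p+\la_1+\cdots+\la_{m+n-2})$.
\end{itemize}
This is a routine bookkeeping check.

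\emph{Unit axioms.} For $f\circ_i\mathbbm{1}$ we have $n=1$. Then $R_0^{m;i,1}$ is the composition of no face maps, hence the identity on $Y_m$, and $\mathbbm{1}$ simply returns $a_i$ with no variable substitution. For $\mathbbm{1}\circ_1 f$, the map $R_1^{1;1,n}$ deletes no leaves, so it is the identity on $Y_n$. Both give $f$.

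\emph{Associativity axioms.} We separate the verification into two layers, the $\la$-variables and the trees.

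\textbf{Variable and argument layer.} In both (\ref{a-1}) and (\ref{a-2}), once the trees are fixed, the two sides plug the same arguments into the same slots. The $\la$-substitutions also agree: in each case a block of consecutive variables is replaced by its sum. This is exactly the computation in the nonsymmetric operad of conformal sesquilinear maps for associative conformal algebras (as in \cite{Hou-Shen-Zhao}), and we only need to note that the formulas match index by index.

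\textbf{Tree layer.} Here we need the combinatorial identities among the structure maps $R$, which are the identities underlying the operad of planar binary trees in the classical diassociative setting \cite{maj-muk,yau,Das}.

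For (\ref{a-1}), with $y\in Y_{m+n+p-2}$:
\[ R_0^{m;i,n}R_0^{m+n-1;i+j-1,p}=R_0^{m;i,n+p-1},\qquad R_0^{n;j,p}R_i^{m;i,n+p-1}=R_i^{m;i,n}R_0^{m+n-1;i+j-1,p}, \]
\[ R_j^{n;j,p}R_i^{m;i,n+p-1}=R_{i+j-1}^{m+n-1;i+j-1,p}. \]

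For (\ref{a-2}), analogous identities pair the trees fed to $f$, $g$ and $h$ on both sides.

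Each $R$ is a composite of face maps that deletes a prescribed set of leaves. So each identity reduces to checking that the two composites delete the same set of leaves, listed in a consistent order. We will prove this by repeatedly applying the simplicial identity $d_i d_j=d_{j-1}d_i$ for $i<j$, which lets us normalize any composite of face maps to a canonical decreasing form determined only by the set of deleted leaves. The main obstacle is this index bookkeeping: checking that after the shifts caused by earlier deletions, the leaf sets coincide, especially in case (\ref{a-2}), where the two insertions are in disjoint blocks.
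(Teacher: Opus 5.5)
Your proposal is correct and follows essentially the same route as the paper: both reduce the associativity axioms to the three identities among the structure maps $R$ (and analogous ones for the parallel-composition axiom), and both verify the unit axiom via $R_0^{m;i,1}=\mathrm{id}=R_1^{1;1,m}$. You go slightly beyond the paper in two places, neither of which it writes out: you check that $f\circ_i g$ is again conformal sesquilinear, and you show how the $R$-identities follow from comparing the sets of deleted leaves.
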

        
\begin{proof}
Let $f \in C_{cY}^{m}(\D,\D), \, g \in C_{cY}^{n}(\D,\D), \, h \in C_{cY}^{p}(\D,\D)$  with $1 \leq i \leq m $ and $1 \leq j \leq n$. Then for any $y \in Y_{m+n+p-2}$ and $a_1, \ldots, a_{m+n+p-2} \in \D$, a direct computation shows that

\begin{align} \label{eqi-1}
  & \big((f \circ_{i}g)\circ_{i+j-1} h \big)_{\la_{1}, \ldots, \la_{m+n+p-3}}(y; a_{1}, \ldots ,a_{m+n+p-2}) \nonumber \\
=~ & (f \circ _{i} g)_{\la_{1}, \ldots ,\la_{i+j-2},\la_{i+j-1}+ \cdots + \la_{i+j+p-2},\la_{i+j+p-1}, \ldots , \la_{m+n+p-3} } \big(R_{0}^{m+n-1;i+j-1,p}y;a_{1}, \ldots a_{i+j-2}, \nonumber \\
  & \qquad \qquad  h_{\la_{i+j-1}, \ldots \la_{p+i+j-3}}(R_{i+j-1}^{m+n-1;i+j-1,p}y;a_{i+j-1}, \ldots, a_{P+i+j-2}),a_{p+i+j-1},\ldots, a_{m+n+p-2} \big) \nonumber \\
 =~ & f_{\la_{1}, \ldots \la_{i-1},\la_{i} + \cdots + \la_{n+p+i-2},\la_{n+p+i-1}, \ldots ,\la_{m+n+p-3} }\big(R_{0}^{m;i;n}R_{0}^{m+n-1;i+j-1,p}y;a_{1}, \ldots a_{i-1}, \nonumber \\
  &  \qquad \qquad  g_{\la_{i}, \ldots , \la_{i+j-2}, \la_{i+j-1}+ \cdots +\la_{p+i+j-2}, \la_{p+i+j-1},\ldots, \la_{n+p+i-3}}\big(R_{i}^{m;i,n}R_{0}^{m+n-1;i+j-1,p}y; a_{i},\ldots a_{i+j-2}, \nonumber \\
  &  \qquad \qquad \quad h_{\la_{i+j-1}, \ldots \la_{p+i+j-3}}(R_{i+j-1}^{m+n-1;i+j-1,p}y;a_{i+j-1}, \ldots, a_{P+i+j-2}),a_{p+i+j-1}, \ldots , a_{n+p+i-2} \big), \nonumber \\
  &  \qquad \qquad \quad \quad  a_{n+p+i-1}, \ldots  , a_{m+n+p-2} \big)
\end{align}
  and
  \begin{align*} 
    & \big( f \circ _{i}(g \circ _{j}h) \big)_{\la_{1}, \ldots , \la_{m+n+p-3}} (y;a_{1}, \ldots , a_{m+m+p-2}) \nonumber \\
    =~ & f_{\la_{1}, \ldots , \la_{i-1},\la_{i}+\cdots + \la_{n+p+i-2},\la_{n+p+i-1},\ldots , \la_{m+n+p-3}} \big( R_{0}^{m;i,n+p-1}y;a_{1}, \ldots a_{i-1}, \nonumber \\
     & \qquad \big(g \circ _{j} h \big)_{\la_{i}, \ldots , \la _{n+p+i-3}}(R_{i}^{m;i,n+p-1}y;a_{i}, \ldots , a_{n+p+i-2}), a_{n+p+i-1}, \ldots , a_{m+n+p-2} \big) \nonumber 
     \end{align*}
     \begin{align}\label{eqi-2}
    =~ & f_{\la_{1}, \ldots , \la_{i-1},\la_{i}+\cdots + \la_{n+p+i-2},\la_{n+p+i-1},\ldots , \la_{m+n+p-3}} \big( R_{0}^{m;i,n+p-1}y;a_{1}, \ldots a_{i-1}, \nonumber \\
    & \qquad g_{\la_{i}, \ldots , \la_{i+j-2}, \la_{i+j-1}+ \cdots +\la_{p+i+j-2}, \la_{p+i+j-1},\ldots, \la_{n+p+i-3}} \big( R_{0}^{n;j,p}R_{i}^{m;i,n+p-1}y; a_{i},\ldots a_{i+j-2}, \nonumber \\
    & \qquad \quad h_{\la_{i+j-1}, \ldots , \la _{p+i+j-3}}(R_{j}^{n;j,p}R_{i}^{m;i,n+p-1}y; a_{i+j-1}, \ldots, a_{p+i+j-2}),a_{p+i+j-1}, \ldots , a_{n+p+i-2} \big), \nonumber \\
    & \qquad \qquad a_{n+p+i-1}, \ldots  , a_{m+n+p-2} \big).
  \end{align} 
  We note that, for any $y \in Y_{m+n+p-2}$, the structure maps $R_0^{m;i, n}$'s and $R_i^{m;i, n}$'s satisfy
  \begin{align*}
 &R_{0}^{m;i;n}R_{0}^{m+n-1;i+j-1,p}y= R_{0}^{m;i,n+p-1}y, \quad
  R_{i}^{m;i,n}  R_{0}^{m+n-1;i+j-1,p}y = R_{0}^{n;j,p} R_{i}^{m;i,n+p-1}y \\ & \qquad \qquad \qquad \text{ and } ~~ R_{i+j-1}^{m+n-1;i+j-1,p}y= R_{j}^{n;j,p}R_{i}^{m;i,n+p-1}y.
\end{align*}
Substituting these identities into (\ref{eqi-1}) and (\ref{eqi-2}) shows that $(f \circ_i g) \circ_{i+j-1} h = f \circ_i (g \circ_j h)$. Similarly, for $1 \leq i < j \leq m$, one can check that $(f \circ_i g) \circ_{j+n-1} h = (f \circ_j h ) \circ_i g$.

Finally, we verify the unit axiom. Using the identities $R_{0}^{m;i,1}=\mathrm{id}=R_{1}^{1;1,m}
 \text{ and } 
R_{0}^{1;1,m}=
\begin{tikzpicture}[scale=.3]
\draw (0,0)-- (0,-.5);
\draw (0,0) -- (-.5,.5);
\draw (0,0) -- (.5,.5);
\end{tikzpicture},$ it follows that, for every $f\in C_{cY}^{m}(\D,\D)$ and $1\leq i\leq m$,
\begin{align*}
(f\circ_i\mathbbm{1})_{\la_1,\ldots,\la_{m-1}}
(y;a_1,\ldots,a_m)
&=
f_{\la_1,\ldots,\la_{m-1}}
(
R_0^{m;i,1}y;
a_1,\ldots,a_{i-1},
\mathbbm{1}(R_i^{m;i,1}y;a_i),
a_{i+1},\ldots,a_m
)\\
&=
f_{\la_1,\ldots,\la_{m-1}}
(y;a_1,\ldots,a_m),
\end{align*}
\begin{align*}
(\mathbbm{1}\circ_1f)_{\la_1,\ldots,\la_{m-1}}
(y;a_1,\ldots,a_m)
&=
\mathbbm{1}
(
R_0^{1;1,m}y;
f_{\la_1,\ldots,\la_{m-1}}
(R_1^{1;1,m}y;a_1,\ldots,a_m)
)\\
&=
f_{\la_1,\ldots,\la_{m-1}}
(y;a_1,\ldots,a_m).
\end{align*}
This shows that $f\circ_i\mathbbm{1}
=
\mathbbm{1}\circ_1f
=
f$. This proves the theorem.
\end{proof}

Using the partial compositions $\circ_i$ defined in (\ref{equ A}), one may define a graded bracket of degree $-1$ 
\begin{align*}
[ ~,~]:C_{cY}^{m}(\D,\D) \otimes C_{cY}^{n}(\D,\D) \longrightarrow C_{cY}^{m+n-1}(\D,\D), \text{ for } m,n \geq 1, \text{ by }
\end{align*}
\begin{align} \label{brac-form}
    [f,g] := \sum_{i =1}^{m} (-1)^{(i-1)(n-1)} ~ \! f \circ_i g - (-1)^{(m-1)(n-1)} \sum_{i=1}^{n} (-1)^{(i-1)(m-1)} ~ \! g \circ_i f.
\end{align}
for $f \in C_{cY}^{m}(\D,\D)$ and $g \in C_{cY}^{n}(\D,\D)$.
It turns out that \cite{gers-ring,gers-voro} the above bracket satisfies the following shifted graded Jacobi identity:

\begin{align*}
    [f,[g,h]] = [[f,g],h]+ (-1)^{(m-1)(n-1)}  ~ \! [g,[f,h]],
\end{align*}
for all $f \in C_{cY}^{m}(\D,\D), g \in C_{cY}^{n}(\D,\D) \text{ and } h \in C_{cY}^{p}(\D,\D)$. Consequently, the shifted graded vector space $\bigoplus\limits_{n \geq 0}C_{cY}^{n+1}(\D,\D)$ endowed with the bracket $[~,~]$ is a graded Lie algebra. The significance of this graded Lie algebra is established in the following proposition.

\begin{proposition} \label{1-1}
    Let $\D$ be a $\C[\p]$-module. Then diassociative conformal algebra structures on $\D$ are in {one-to-one} correspondence with Maurer-Cartan elements of the graded Lie algebra $\big( \bigoplus \limits_{n \geq 0}C_{cY}^{n+1}(\D,\D),~[~,~] \big)$.
\end{proposition}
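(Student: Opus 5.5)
Our plan is to identify an element $\pi \in C_{cY}^{2}(\D,\D)$ with a pair of conformal sesquilinear $\la$-products, and then to show that the Maurer--Cartan equation $[\pi,\pi]=0$ is exactly the list (\ref{equation 1})--(\ref{equation 5}). Since $Y_2$ has exactly two trees, say $y_{\mathrm{L}}$ (the tree $|\vee y_1$-type on the right, i.e.\ leaf $0$ oriented so that $\bullet_0=\dashv$) and $y_{\mathrm{R}}$, a map $\pi\in C_{cY}^2(\D,\D)$ is the same as two $\C$-bilinear maps $\D\times\D\to\D[\la]$. Accordingly we set $a\dashv_\la b := \pi_\la(y_{\mathrm{L}};a,b)$ and $a\vdash_\la b := \pi_\la(y_{\mathrm{R}};a,b)$; the precise labelling is fixed by the orientation maps $\bullet_i$. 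The conformal sesquilinearity required of elements of $C_{cY}^2$ (with $n=2$) is verbatim (\ref{ses1})--(\ref{ses2}). Hence $\pi\leftrightarrow(\dashv_\la,\vdash_\la)$ is a bijection between $C_{cY}^2(\D,\D)$ and pairs of conformal sesquilinear $\la$-products on $\D$.

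Next we compute $[\pi,\pi]$ from (\ref{brac-form}) with $m=n=2$. The signs give
\[
[\pi,\pi]=2(\pi\circ_1\pi-\pi\circ_2\pi),
\]
so the Maurer--Cartan condition is $\pi\circ_1\pi=\pi\circ_2\pi$, evaluated on each of the five trees $y\in Y_3$ and all $a,b,c\in\D$. Using (\ref{equ A}), we have
\[
(\pi\circ_1\pi)_{\la,\mu}(y;a,b,c)=\pi_{\la+\mu}\big(R_0^{2;1,2}y;\pi_\la(R_1^{2;1,2}y;a,b),c\big),
\qquad
(\pi\circ_2\pi)_{\la,\mu}(y;a,b,c)=\pi_{\la}\big(R_0^{2;2,2}y;a,\pi_\mu(R_2^{2;2,2}y;b,c)\big).
\]
By definition, $R_0^{2;1,2}=d_1$, $R_1^{2;1,2}=d_3$, $R_0^{2;2,2}=d_2$ and $R_2^{2;2,2}=d_0$, each taken as a map $Y_3\to Y_2$. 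Thus the identity reads $(a\ast_\la b)\star_{\la+\mu}c = a\star'_\la(b\ast'_\mu c)$, where the four symbols $\ast,\star,\star',\ast'$ are the operations attached to $d_3y$, $d_1y$, $d_2y$ and $d_0y$ respectively.

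The main work, and the step we expect to be the obstacle, is the combinatorial bookkeeping. For each of the five $3$-trees we must compute the four faces $d_0y,d_1y,d_2y,d_3y\in Y_2$ and check that the resulting five equations are precisely (\ref{equation 1})--(\ref{equation 5}), with each appearing exactly once. This is the conformal analogue of the classical fact (Loday, Majumdar--Mukherjee) that the Maurer--Cartan equation in the tree-indexed Hochschild-type operad encodes the five diassociative axioms. We will carry it out tree by tree, listing the five trees in the order drawn in the paper. The $\la$-variables cause no additional difficulty: the parameter assignment in (\ref{equ A}) produces exactly $\la,\la+\mu$ on the left and $\la,\mu$ on the right, as in the axioms. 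As a sanity check, the left comb should give (\ref{equation 1}) (all faces equal $y_{\mathrm{L}}$), and the right comb should give (\ref{equation 5}). The middle tree should produce the pair (\ref{equation 3}), and the remaining two trees should produce (\ref{equation 2}) and (\ref{equation 4}).

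Conversely, running the same computation backwards shows that any pair of $\la$-products satisfying (\ref{equation 1})--(\ref{equation 5}) gives an element $\pi$ with $\pi\circ_1\pi=\pi\circ_2\pi$ on every tree. This completes the one-to-one correspondence.
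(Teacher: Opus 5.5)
Your proposal is correct and is essentially the paper's proof. You identify $\pi\in C^2_{cY}(\D,\D)$ with the pair $(\dashv_\la,\vdash_\la)$ via the two trees of $Y_2$, reduce $[\pi,\pi]=0$ to $\pi\circ_1\pi=\pi\circ_2\pi$, and evaluate on the five trees of $Y_3$ using $R_0^{2;1,2}=d_1$, $R_1^{2;1,2}=d_3$, $R_0^{2;2,2}=d_2$, $R_2^{2;2,2}=d_0$; this is exactly how the paper obtains its table (\ref{id}).

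The only quibble is terminology. What you call $y_L$ is $|\vee(|\vee|)$, and your ``left comb'' is $|\vee(|\vee(|\vee|))$; these are usually called right combs. Because you pin them down via $\bullet_0=\dashv$ and via their faces, your assignments still match the paper's: that comb gives (\ref{equation 1}), its mirror gives (\ref{equation 5}), and $(|\vee|)\vee(|\vee|)$ gives (\ref{equation 3}).
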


\begin{proof}
We observe that an element $\pi \in C_{cY}^{2}(\D,\D)$ is equivalent to have two conformal sesquilinear $\la$-products $\dashv_\la , ~ \! \vdash_\la : \D \times \D \rightarrow \D [\la]$ given by
\begin{align}\label{pi-lambda}
    a \dashv _{\la} b = \pi_{\la} \big(\tikz[baseline=-0.5ex,scale=0.20]{
  \draw (-1,1)--(0,0);
  \draw (1,1)--(0,0);
  \draw (0,0)--(0,-1);
  \draw (0.5,0.5)--(0,1);
};a,b \big) ~ \text{ and } ~  a \vdash _{\la} b=\pi_{\la} \big(\tikz[baseline=-0.5ex,scale=0.20]{
  \draw (-1,1)--(0,0);
  \draw (1,1)--(0,0);
  \draw (0,0)--(0,-1);
  \draw (-0.5,0.5)--(0,1);
};  a,b \big), \text{ for }a,b \in \D.
\end{align}
Then for any $y \in Y_3$ and $a, b, c \in \D$, a straightforward computation shows that
\begin{align} \label{id}
     \frac{1}{2}[\pi, \pi]_{\la, \mu}(y; a,b,c ) =~
     \begin{cases}
         (a \vdash_{\la}b) \vdash_{\la + \mu}c - a \vdash_{\la}(b \vdash _{\mu}c), \quad \text{ if } y = \begin{tikzpicture}[scale=0.1]
	     \draw (12,0)-- (14,-2); \draw (14,-2) -- (14,-4); \draw (14, -2) -- (16,0); \draw (12.7, - 0.7) -- (13.3333, 0) ; \draw (13.33, -1.33) -- (14.66, 0);
	     \end{tikzpicture},\\
         (a \dashv_{\la} b) \vdash_{\la + \mu }c - a \vdash_{\la}(b \vdash _{\mu } c),  \quad \text{ if } y = \begin{tikzpicture}[scale=0.1]
 \draw (18,0) -- (20,-2); \draw (20, -2) -- (20, -4); \draw (20, -2) -- (22,0); \draw (19.33, -1.33) -- (20.66, 0); \draw (19.33, 0) -- (20, -0.66);
\end{tikzpicture},\\
(a \vdash _{\la} b) \dashv_{\la + \mu } c - a \vdash_{\la} ( b \dashv_{\mu}c), \quad \text{ if } y = \begin{tikzpicture}[scale=0.1] 
  \draw (24,0)-- (26,-2); \draw (26,-2)-- (26,-4); \draw (26,-2) -- (28,0); \draw (25.33, 0) -- (24.66, -0.66); \draw (26.66, 0) -- (27.34, -0.66); ,
\end{tikzpicture},\\
(a \dashv _{\la} b) \dashv_{\la + \mu} c - a \dashv_\la (b \vdash _{\mu} c), \quad \text{ if } y = \begin{tikzpicture}[scale=0.1]  
  \draw (30,0) -- (32,-2); \draw (32, -2) -- (32, -4); \draw (32,-2) -- (34, 0); \draw (31.33, 0) -- (32.67 , -1.33) ; \draw (32.66, 0) -- (32, -0.66); 
\end{tikzpicture},\\
( a \dashv _{\la}b) \dashv_{\la + \mu}c - a \dashv_{\la} (b \dashv _{\mu}c),\quad \text{ if }y= \begin{tikzpicture}[scale=0.1]  
   \draw (36,0) -- (38, -2) ; \draw (38, -2) -- (38, -4) ; \draw (38, -2) -- (40, 0); \draw (37.33, 0) -- ( 38.67, - 1.33) ; \draw (38.66, 0) -- (39.34, -0.66);	
\end{tikzpicture}.
     \end{cases}
\end{align}
This shows that $\pi$ is a Maurer-Cartan element of the graded Lie algebra $\big( \bigoplus \limits_{n \geq 0}C_{cY}^{n+1}(\D,\D),~[~,~] \big)$ if and only if $(\D, \dashv_\la, \vdash_\la)$ is a diassociative conformal algebra.
\end{proof}

\medskip

\medskip

\medskip

\noindent {\bf Cohomology of a diassociative conformal algebra.}

Let $(\D, \dashv_{\la}, \vdash_{\la})$ be a diassociative conformal algebra. For each integer $n \geq 0 $, we define the space of $n$-cochains by
 \begin{align*}
     C_{cDiass}^{n}(\D,\D) :=
     \begin{cases}
         \D/{\p \D},  & \text{ for } n=0, \\
         C_{cY}^{n}(\D,\D), &  \text{ for } n \geq 1.
     \end{cases}
 \end{align*}
Then there is a map $ \delta_{cDiass} : C_{cDiass}^{n}(\D,\D) \rightarrow C_{cDiass}^{n+1}(\D,\D)$ explicitly given by
\begin{align*}
\delta_{cDiass}(a + \p \D)( b) := (b \dashv_{-\p -\la} a - a \vdash _{\la} b) \big|_{ \la  = 0}, \text{ for } a + \p \D \in \D / \partial \D = C^0_{cDiass} (\D, \D) \text{ and } b \in \D,
\end{align*}
   \begin{align*}
{\delta_{cDiass}(f)}_{\la_{1}, \ldots, \la_{n}}&(y; a_1, \ldots , a_{n+1}) :=~ 
a_{1} (\bullet_{0}^{y})_{ \la_{1}}f_{ \la_{2}, \ldots, \la_{n}}(d_{0}y;  a_{2}, \ldots , a_{n+1}) \\& ~+ \sum_{i =1}^{n}(-1)^{i} ~ \! f_{ \la_{1}, \ldots , \la_{i-1}, \la_{i}+ \la_{i+1}, \la_{i+2}, \ldots , \la_{n}}(d_{i}y;  a_{1}, \ldots, a_{i-1},a_{i}(\bullet_{i}^{y})_{\la_{i}} a_{i+1}, \ldots, a_{n+1}) \\&
~+(-1)^{n+1} ~ \!  f_{\la_{1}, \ldots, \la_{n-1}}(d_{n+1}y; a_{1}, \ldots, a_{n})(\bullet_{n+1}^{y})_{\la_{1}+ \cdots+ \la_{n} }a_{n+1},
\end{align*}
for $f \in C_{cDiass}^{n \geq 1}(\D,\D)$, $ y \in Y_{n+1} \text{ and } a_1 , \ldots , a_{n+1} \in \D$. For any $ a + \p \D \in C^0_{cDiass} (\D, \D)$ and $b , c \in \D$, we observe that
\begin{align*}
    &{\delta_{cDiass}^{2}(a+\p \D)}_{\la} \left( \tikz[baseline=-0.5ex,scale=0.20]{
  \draw (-1,1)--(0,0);
  \draw (1,1)--(0,0);
  \draw (0,0)--(0,-1);
  \draw (0.5,0.5)--(0,1);
}; b, c \right) \\
=~& b \dashv_{\la} (c \dashv_{-\p -\mu}a - a \vdash_{ \mu}c) \big|_{\mu = 0} - \big( ( b \dashv _{\la} c) \dashv_{ -\p -\mu} a - a \vdash_{\mu}(b \dashv_{\la} c) \big) \big|_{ \mu =0} \\ 
& \qquad \qquad + (b \dashv_{-\p -\mu} a - a \vdash _{\mu}b ) \big|_{\mu = 0} \dashv_{\la} c\\
=~& b \dashv_{\la} (c \dashv_{-\p}a)  - (b \dashv_{\la}c)\dashv_{-\p}a + (b \dashv_{-\p }a)\dashv_{\la}c - b \dashv_{\la}(a \vdash_{(0)}c) \\
& \qquad \qquad +a \vdash_{(0)}(b \dashv_{\la}c) - (a\vdash_{(0)}b)  \dashv_{\la}c\\=~&
0 \quad (\text{by } (\ref{edi-1}), (\ref{edi-2}) \text{ and } (\ref{edi-3})).
\end{align*}
A similar computation shows that $ \delta_{cDiass}^{2}(a+\p \D)_{\la} \big( \tikz[baseline=-0.5ex,scale=0.20]{
  \draw (-1,1)--(0,0);
  \draw (1,1)--(0,0);
  \draw (0,0)--(0,-1);
  \draw (-0.5,0.5)--(0,1);
}; b, c \big)   = 0$. Hence we have ${\delta_{cDiass}^{2}(a+\p \D)} = 0$. For cochains of positive degree, the map $\delta_{cDiass}$ admits a description in terms of the graded Lie bracket introduced earlier. Indeed, for $f\in C_{cDiass}^{n \geq 1 }(\D,\D),$ we have
 \begin{align}\label{diff-formula}
 \delta_{cDiass}(f) = (-1)^{n-1} ~ \! [\pi,f],
 \end{align}
where $\pi\in C_{cY}^{2}(\D,\D)$ denotes the Maurer-Cartan element corresponding to the diassociative conformal algebra structure $(\D, \dashv_\la, \vdash_\la)$. Since $\pi$ satisfies the Maurer-Cartan equation $[\pi,\pi]=0$, the graded Jacobi identity immediately implies that $\delta_{cDiass}^{2} (f)= 0$. Consequently, $\{C_{cDiass}^{\bullet}(\D,\D), \delta_{cDiass}\}$ forms a cochain complex. The corresponding cohomology groups are called the {\bf cohomology groups} of the given diassociative conformal algebra $\D$, and the cohomology groups are denoted by $H_{cDiass}^{\bullet}(\D, \D)$. From the graded Jacobi identity of the bracket $[~, ~]$, and the formula of the differential (\ref{diff-formula}), it follows that the bracket $[~, ~]$ induces a degree $-1$ bracket (denoted by the same notation) on the graded cohomology space $H_{cDiass}^{\bullet}(\D, \D)$.

\medskip

Next, we show that the graded space of cohomology groups of a diassociative conformal algebra inherits a Gerstenhaber algebra structure. We first recall the following definition \cite{gers-voro,Hou-Shen-Zhao}.

\begin{definition}
A \textbf{Gerstenhaber algebra} is a triple $(\GG, \smile , [~, ~ ])$ consisting of a graded $\C$-module $\GG=\bigoplus \limits_{i\in\mathbb{Z}}\GG^{i}$ equipped with a degree $0$ graded-commutative associative product $\smile ~ \! :\GG^{i}\otimes\GG^{j}\rightarrow\GG^{i+j}$ and a degree $-1$ bracket $[\, ,\, ]:\GG^{i}\otimes\GG^{j}\rightarrow\GG^{i+j-1}$ subject to satisfy the following conditions:
\begin{align}
[x,y] &= -(-1)^{(i-1)(j-1)}[y,x],\\
[x,[y,z]] &= [[x,y],z]+(-1)^{(i-1)(j-1)}[y,[x,z]],\\
[x,y\smile z] &= [x,y]\smile z+(-1)^{(i-1)j} ~ \! y\smile[x,z], \label{leibn-rule}
\end{align}
for all homogeneous elements $x\in\GG^{i}$, $y\in\GG^{j}$ and $z\in\GG^{k}$.
\end{definition}

Let $(\D, \dashv_\la, \vdash_\la)$ be a diassociative conformal algebra with the corresponding Maurer-Cartan element $\pi$ (explicitly given in (\ref{pi-lambda})). We observe that the Maurer-Cartan equation $[\pi , \pi ] = 0$ is equivalent to the condition $\pi \circ_1 \pi = \pi \circ_2 \pi$. In other words, $\pi \in C^2_{cY} (\D, \D)$ is a {\em multiplication} on the nonsymmetric operad $(\{C_{cY}^{n}(\D,\D)\}_{n \geq 1}, \circ, \mathbbm{1})$ considered in Theorem \ref{thm-operad}. Hence, a result of Gerstenhaber and Voronov \cite{gers-voro} shows that the element $\pi$ induces a degree $0$ associative cup product $\smile$ on the graded $\C$-module $\bigoplus_{n =1 }^{\infty} C_{cY}^{n}(\D,\D)$, given by
\begin{align*}
    f \smile g := (-1)^{mn+1} ~ \! (\pi \circ_{2} g) \circ_{1} f, \text{ for } f \in C_{cY}^{m}(\D,\D), ~ \! g \in C_{cY}^{n}(\D,\D).
\end{align*}
Explicitly, for any $y \in Y_{m+n}$ and $a_1, \ldots, a_{m+n} \in \D$,
\begin{align}\label{cup-pro}
    &(f\smile g)_{\la_1,\ldots,\la_{m+n-1}}
(y;a_1,\ldots,a_{m+n}) \nonumber
=(-1)^{mn+1} ~ \!
\pi_{\la_1+\cdots+\la_m}
\big(
R_0^{\,2;2,n}R_0^{\,n+1;1,m}y;
\nonumber\\
&\qquad \qquad
f_{\la_1,\ldots,\la_{m-1}}
\big(R_1^{\,n+1;1,m}y;a_1,\ldots,a_m\big),
g_{\la_{m+1},\ldots,\la_{m+n-1}}
\big(
R_2^{\,2;2,n}R_0^{\,n+1;1,m}y;
a_{m+1},\ldots,a_{m+n}
\big)
\big).
\end{align}
\medskip
Further, the cup product satisfies
\begin{align*}
    \delta_{cDiass}(f \smile g) = \delta_{cDiass}(f) \smile g + (-1)^{m} ~ \! f \smile \delta_{cDiass}(g).
\end{align*}
This shows that the cup product $\smile$ induces a graded associative product (denoted by the same notation) on the graded cohomology space $H^\bullet_{cDiass} (\D, \D)$. The induced product $\smile$ turns out to be graded commutative, and the identity (\ref{leibn-rule}) holds with respect to the induced degree $-1$ bracket $[~,~]$. Summarizing this, we get the following result.

\begin{theorem}\label{thm-gers}
Let $(\D,\dashv_{\la},\vdash_{\la})$ be a diassociative conformal algebra. Then the cup product (\ref{cup-pro}) and the degree $-1$ bracket $[\, ,\, ]$ induce well-defined operations on the graded cohomology space $H_{cDiass}^{\bullet}(\D,\D)
$. Moreover, with the induced operations, the triple $\left(H_{cDiass}^{\bullet}(\D,\D), ~ \! \smile ~ \! ,\,[\, ,\, ]\right)$ is a Gerstenhaber algebra.
\end{theorem}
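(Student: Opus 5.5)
The plan is to follow the Gerstenhaber--Voronov machinery for operads with multiplication. By Theorem \ref{thm-operad}, $\mathcal{O}=(\{C_{cY}^{n}(\D,\D)\}_{n\ge1},\circ,\mathbbm{1})$ is a nonsymmetric operad, and by Proposition \ref{1-1} the element $\pi$ satisfies $\pi\circ_1\pi=\pi\circ_2\pi$, so $(\mathcal{O},\pi)$ is an operad with multiplication. I would then invoke the general theorem of \cite{gers-voro}: for such an operad, the complex $\bigoplus_n\mathcal{O}(n)$ with differential $\delta f=(-1)^{n-1}[\pi,f]$ is a homotopy $G$-algebra. Concretely, the cup product $f\smile g=(-1)^{mn+1}(\pi\circ_2 g)\circ_1 f$ is associative and satisfies the Leibniz rule with respect to $\delta$, and the brace operations $f\{g_1,\dots,g_k\}$ built from the partial compositions satisfy the higher pre-Jacobi identities. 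Since (\ref{diff-formula}) shows that $\delta_{cDiass}$ in positive degrees is exactly this operadic differential, all of these operadic identities transfer to our complex.

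The key steps, in order, are as follows. (1) Associativity of $\smile$ and the derivation property $\delta(f\smile g)=\delta f\smile g+(-1)^m f\smile\delta g$, which follow from the operad axioms (\ref{a-1}), (\ref{a-2}) together with $\pi\circ_1\pi=\pi\circ_2\pi$. This makes $\smile$ descend to cohomology, and $[~,~]$ descends by the graded Jacobi identity, as already noted. (2) Graded commutativity on cohomology via the Gerstenhaber homotopy formula
\[
f\smile g-(-1)^{mn}g\smile f=\pm\big(\delta(f\circ g)-\delta f\circ g-(-1)^{m-1}f\circ\delta g\big),
\]
where $f\circ g=\sum_i\pm f\circ_i g$; hence the defect is a coboundary when $f$ and $g$ are cocycles. (3) The Leibniz rule (\ref{leibn-rule}), obtained from the brace identity expressing $[f,g\smile h]-[f,g]\smile h-(-1)^{(m-1)n}g\smile[f,h]$ as $\delta$ of $f\{g,h\}$ plus terms involving $\delta f,\delta g,\delta h$. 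This is again the standard homotopy $G$-algebra computation in \cite{gers-voro}, which uses only the operad axioms and the multiplication.

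Degree $0$ requires separate treatment, because $C^0_{cDiass}=\D/\partial\D$ is not part of the operad. I would define $\smile$ and $[~,~]$ involving degree-$0$ classes by hand. For $a\in \D/\partial\D$ and $f\in C^n$, the cup product is obtained by inserting $a$ into $\pi$, using $\dashv$ or $\vdash$ according to the tree, with $\la$ set to $0$ or $-\partial$ as in the formula for $\delta_{cDiass}$ on $C^0$. The bracket $[f,a]$ is defined as the substitution of $a$ into $f$ with the $\la$'s specialized appropriately. The identities involving these must then be verified using Lemma \ref{ lemma 1}. Alternatively, I could state the theorem for the positive-degree part and note that $H^0$ sits in a compatible way.

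I expect the main obstacle to be this degree-$0$ compatibility, together with checking that the conformal variable substitutions ($\la\mapsto-\partial-\la$) keep all maps well defined modulo $\partial\D$. The positive-degree part is formal once the operad structure is available.
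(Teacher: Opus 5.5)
Your proposal follows the paper's argument: the paper likewise notes that $[\pi,\pi]=0$ amounts to $\pi\circ_1\pi=\pi\circ_2\pi$, so $\pi$ is a multiplication on the operad of Theorem~\ref{thm-operad}. It then cites Gerstenhaber--Voronov for the associativity of $\smile$, the derivation formula for $\delta_{cDiass}$, graded commutativity and the Leibniz rule on cohomology.

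Your degree-$0$ caveat is legitimate, and the paper does not address it, since it defines $\smile$ only on $\bigoplus_{n\ge 1}C^n_{cY}(\D,\D)$. Restricting to positive degrees does not fully avoid the issue either. $H^1_{cDiass}(\D,\D)$ is $Z^1$ modulo $\delta_{cDiass}(\D/\partial\D)$, so you still have to check that cup products and brackets of $\delta_{cDiass}(a+\partial\D)$ with cocycles are coboundaries. That check is where your hand-defined degree-$0$ operations and the identities (\ref{edi-1})--(\ref{edi-4}) are needed.
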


\medskip

\noindent {\bf Cohomology with coefficients in a {representation}.} \begin{definition}
Let $(\D,\dashv_{\la},\vdash_{\la})$ be a diassociative conformal algebra. Then a  {\bf representation} of $\D$ is a $\C[\partial]$-module $\M$ together with four conformal sesquilinear $\la$-actions
\[
\dashv_{\la},\,\vdash_{\la}:\D\otimes\M\longrightarrow\M[\la] ~ \text{ and } ~
\dashv_{\la},\,\vdash_{\la}:\M\otimes\D\longrightarrow\M[\la],
\]
satisfying a list of fifteen compatibility identities obtained from the defining identities \eqref{equation 1}--\eqref{equation 5} of the diassociative conformal algebra $\D$ by replacing exactly one of the variables by an element of $\M$. 
\end{definition}


Any diassociative conformal algebra $(\D, \dashv_\la, \vdash_\la)$ admits a canonical representation on $\D$ itself, called the \emph{adjoint representation}, in which the four $\la$-actions are given by the $\la$-products $\dashv_{\la}$ and $\vdash_{\la}$.

\medskip

Let $(\D, \dashv_{\la}, \vdash_{\la})$ be a diassociative conformal algebra and $\M$ be a representation of $\D$. For each integer $n \geq 0$, we define the space of $n$-cochains by
\begin{align*}
    C_{cDiass}^{n}(\D,\M) := 
    \begin{cases}
       \M/{\p \M},   &  n =0,\\
        C_{cY}^{n}(\D,\M),  & n \geq 1,
    \end{cases}
\end{align*}
where $C_{cY}^{n}(\D,\M)$ is the space of all conformal sesquilinear maps from $\mathbb{C}[Y_n]\otimes \D^{\otimes n}$ to $\M[\la_{1}, \dots ,\la_{n-1}]$. 
We define a map $\delta_{cDiass} : C_{cDiass}^{n}(\D,\M) \to  C_{cDiass}^{n+1}(\D,\M)$ by
\begin{align}\label{coboundary-m1}
    \delta_{cDiass}(u+\p\M)(a) := \left(a\dashv_{-\p-\la}u-u\vdash_{\la}a\right) \big|_{\la=0}, \text{ for } a \in \D \text{ and } u \in \M,
\end{align}
\begin{align}\label{coboundary-m2}
   {\delta_{cDiass}(f)}_{\la_1, \ldots, \la_n}&(y;a_1, \ldots , a_{n+1}) :=~  
a_{1} (\bullet_{0}^{y})_{ \la_{1}}f_{ \la_{2}, \ldots, \la_{n}}(d_{0}y; a_{2}, \ldots , a_{n+1}) \nonumber \\&+ \sum_{i =1}^{n}(-1)^{i} ~ \! f_{ \la_{1}, \ldots , \la_{i-1}, \la_{i}+ \la_{i+1}, \la_{i+2}, \ldots , \la_{n}}(d_{i}y; a_{1}, \ldots, a_{i-1},a_{i}(\bullet_{i}^{y})_{\la_{i}} a_{i+1}, \ldots, a_{n+1}) \nonumber \\&
+(-1)^{n+1} ~ \! f_{\la_{1}, \ldots, \la_{n-1}}(d_{n+1}y; a_{1}, \ldots, a_{n})(\bullet_{n+1}^{y})_{\la_{1}+ \cdots+ \la_{n} }a_{n+1},
\end{align}
for all $f \in C_{cDiass}^{n \geq 1}(\D,\M), ~ y \in Y_{n+1} \text{ and } a_1, \ldots , a_{n+1} \in \D$. Then we have the following result.
\begin{proposition}
   The map $\delta_{cDiass} : C_{cDiass}^{n}(\D,\M) \to  C_{cDiass}^{n+1}(\D,\M)$ satisfies the identity $\delta_{cDiass}^{2} = 0$. Consequently, the pair $\{C_{cDiass}^{\bullet}(\D,\M), \delta_{cDiass}\}$ forms a cochain complex, called the cochain complex of the diassociative conformal algebra $\D$ with coefficients in the representation $\M$.
\end{proposition}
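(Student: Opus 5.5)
The plan is to reduce the statement to the adjoint case, which is already settled by the Maurer-Cartan description (\ref{diff-formula}). We use the semidirect product. Given a representation $\M$ of $\D$, equip the $\C[\p]$-module $\D\oplus\M$ with the $\la$-products
\begin{align*}
(a,u)\dashv_\la(b,v)=(a\dashv_\la b,\ a\dashv_\la v+u\dashv_\la b),\qquad
(a,u)\vdash_\la(b,v)=(a\vdash_\la b,\ a\vdash_\la v+u\vdash_\la b).
\end{align*}
These are conformal sesquilinear because all the $\la$-actions are. Expanding (\ref{equation 1})--(\ref{equation 5}) for three elements of $\D\oplus\M$ gives the $\D$-component, which holds since $\D$ is diassociative. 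The $\M$-component splits into terms that are linear in the $\M$-entries; terms with two $\M$-entries do not arise because $\M\cdot\M=0$. These linear terms are exactly the fifteen compatibility identities in the definition of a representation. Hence $\D\oplus\M$ is a diassociative conformal algebra, and by Proposition \ref{1-1} it has a Maurer-Cartan element $\Pi\in C^2_{cY}(\D\oplus\M,\D\oplus\M)$.

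For $n\ge1$, embed $C^n_{cY}(\D,\M)$ into $C^n_{cY}(\D\oplus\M,\D\oplus\M)$ by sending $f$ to $\widehat f$. Here $\widehat f$ evaluates $f$ on the $\D$-components of its arguments and places the value in the $\M$-summand. This map is injective and respects conformal sesquilinearity. Next, compare $(-1)^{n-1}[\Pi,\widehat f]$ with $\widehat{\delta_{cDiass}(f)}$, evaluated on $\D$-inputs. The terms $\widehat f\circ_i\Pi$ give the middle sum of (\ref{coboundary-m2}), with the trees $d_iy$ and orientations $\bullet_i^y$ supplied by the structure maps $R$. The terms $\Pi\circ_1\widehat f$ and $\Pi\circ_2\widehat f$ give the two outer terms with orientations $\bullet_0^y$ and $\bullet_{n+1}^y$. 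This is the same bookkeeping that underlies (\ref{diff-formula}) in the adjoint case, so it can be quoted verbatim. Since $\widehat f$ kills $\M$-inputs and lands in $\M$, while $\Pi$ restricted to $\M\otimes\M$ is zero, the cochain $[\Pi,\widehat f]$ is again of the form $\widehat g$, with $g=(-1)^{n-1}\delta_{cDiass}(f)$. Therefore
\begin{align*}
\widehat{\delta^2_{cDiass}f}=\pm[\Pi,[\Pi,\widehat f]]=\pm\tfrac12[[\Pi,\Pi],\widehat f]=0
\end{align*}
by the graded Jacobi identity and $[\Pi,\Pi]=0$. Injectivity of $f\mapsto\widehat f$ then gives $\delta^2_{cDiass}f=0$ for all $n\ge1$.

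The remaining case is $\delta^2_{cDiass}(u+\p\M)=0$ for $u\in\M$, together with checking that the degree-zero formula (\ref{coboundary-m1}) is well defined on $\M/\p\M$. For well-definedness, sesquilinearity gives $a\dashv_{-\p-\la}\p u-\p u\vdash_\la a=(-\la)(a\dashv_{-\p-\la}u)+\la(u\vdash_\la a)$ (using $a\dashv_\nu \p u=(\p+\nu)a\dashv_\nu u$ with $\nu=-\p-\la$), and this vanishes at $\la=0$. For $\delta^2=0$ we repeat the paper's computation for $a+\p\D$ with one entry replaced by $u$, separately for the two trees in $Y_2$. The terms cancel by the module versions of (\ref{edi-1})--(\ref{edi-3}), and those are derived from the fifteen representation identities exactly as in Lemma \ref{ lemma 1}. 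An alternative is to apply Lemma \ref{ lemma 1} directly in the semidirect product $\D\oplus\M$ to elements $(b,0),(c,0),(0,u)$.

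The main obstacle is the sign and tree bookkeeping that identifies $[\Pi,\widehat f]$ with (\ref{coboundary-m2}). That work is inherited from the adjoint case, so the genuinely new checks are that the semidirect product is diassociative and that $[\Pi,\widehat f]$ has no component outside the image of $C^\bullet_{cY}(\D,\M)$.
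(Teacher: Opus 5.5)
Your proposal is correct and follows essentially the same route as the paper. Both arguments form the semidirect product of $\D$ and $\M$, regard $C^\bullet_{cDiass}(\D,\M)$ as a subspace of the adjoint complex of that diassociative conformal algebra, observe that the adjoint differential preserves this subspace and restricts to (\ref{coboundary-m1})--(\ref{coboundary-m2}), and inherit $\delta_{cDiass}^{2}=0$. Your explicit checks fill in details the paper leaves implicit: the fifteen identities for the semidirect product, that $[\Pi,\widehat f]$ again has the form $\widehat g$, and that the degree-zero differential is well defined on $\M/\p\M$.
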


\begin{proof}
Since $(\D,\dashv_{\la},\vdash_{\la})$ is a diassociative conformal algebra and $\M$ is a representation of it, the direct sum $\M \oplus \D$ carries a diassociative conformal algebra structure whose $\la$-products (also denoted by the same notations) are explicitly given by
\begin{align*}
    ( u, a) \dashv_\la ( v, b) = (  a \dashv_\la v + u \dashv_\la b ~ \!  , ~ \!  a \dashv_\la b ) ~ \text{ and } ~ ( u, a) \vdash_\la (v, b) = (  a \vdash_\la v + u \vdash_\la b ~ \! , ~ \! a \vdash_\la b),
\end{align*}
for $( u, a), ( v, b) \in \M \oplus \D$. Hence, one may consider the cochain complex $\{ C_{cDiass}^{\bullet}(\M \oplus \D, \M \oplus \D), \delta_{cDiass} \}$ of this diassociative conformal algebra. For each $n \geq 0$, we observe that
\begin{align*}
     C_{cDiass}^{n}(\D,\M) \subseteq C_{cDiass}^{n}(\M \oplus \D, \M \oplus \D) ~~ \text{ and } ~~ \delta_{cDiass}  (C_{cDiass}^{n}(\D,\M) ) \subseteq C_{cDiass}^{n+1}(\D,\M).
\end{align*}
Moreover, the restriction of the coboundary map $\delta_{cDiass}$ on the subspace $C_{cDiass}^{n}(\D,\M)$ coincides with the map (that has been denoted by the same notation) given in (\ref{coboundary-m1}), (\ref{coboundary-m2}). Hence $\{C_{cDiass}^{\bullet}(\D,\M), \delta_{cDiass}\}$ is a cochain complex.
\end{proof}

We denote the space of all $n$-cocycles and $n$-coboundaries in the complex $\{C_{cDiass}^{\bullet}(\D,\M), \delta_{cDiass}\}$ by the notations $Z^n(\D,\M)$ and $B^n(\D,\M)$, respectively. Finally, the cohomology groups of this complex are denoted by $H_{cDiass}^{n}(\D,\M)$, for $n \geq 0$, and called the cohomology of the diassociative conformal algebra $\D$ with coefficients in the representation $\M$.

\medskip

The zeroth cohomology group $H_{cDiass}^{0}(\D,\M)$ is simply given by
\begin{align*}
H_{cDiass}^{0}(\D,\M)= \{u \in \M \mid a \dashv_{-\partial} u = u \vdash_{(0)} a, \text{ for } a \in \D\} / \partial \M.
\end{align*}
To understand the first cohomology group, we first consider the following notion.
\begin{definition}
Let $(\D, \dashv_{\la}, \vdash_{\la})$ be a diassociative conformal algebra and $\M$ be a representation of $\D$. Then a {\bf derivation} of $\D$ with values in $\M$ is a $\C[\p]$-linear map $d: \D \rightarrow \M$ satisfying
\begin{align*}
d(a\dashv_{\la}b)=d(a)\dashv_{\la}b + a\dashv_{\la}d(b) ~ \text{ and } ~
d(a\vdash_{\la}b)=d(a)\vdash_{\la}b + a\vdash_{\la}d(b),  \text{ for } a,b\in \D.
\end{align*}
\end{definition}
We denote by $\operatorname{Der}(\D,\M)$ the $\C[\p]$-module of all derivations of $\D$ with values in $\M$. For each $u \in \M$, we set a map $\operatorname{ad}_{u}: \D \rightarrow \M$
by
\begin{align*}
\operatorname{ad}_{u}(a) := a\dashv_{-\p}u - u\vdash_{(0)}a, \text{ for } a\in \D.
\end{align*}
Then $\mathrm{ad}_u$ is a derivation of $\D$ with values in $\M$.
Such derivations are called \emph{inner derivations}. The set of all inner derivations is denoted by $\operatorname{Inn}(\D,\M)$. Then it is easy to see that
\begin{align*}
H_{cDiass}^{1}(\D,\M)
=\operatorname{Der}(\D,\M)/\operatorname{Inn}(\D,\M).
\end{align*}

\medskip

\subsection*{Relation with annihilation diassociative algebras}

In \cite{bakalov-kac-voronov}, the authors established a relationship between the cohomology of a Lie conformal algebra and that of its annihilation Lie algebra. Here we prove an analogous result for diassociative conformal algebras. As an application of our result, we describe the cohomology of a current diassociative conformal algebra in terms of the cohomology of a polynomial diassociative algebra.

Let $(\D, \dashv_\la, \vdash_\la)$ be a diassociative conformal algebra and $\M$ be a representation of it. Consider the quotient space $V (\M) = \M[t^{\pm 1}]\big/{(\p + \p_t)~ \!\M[t^{\pm 1}]}$. For $u\in\M$ and $m\in\mathbb{Z}$, we denote by $u_m$ the image of $ut^m$ in $V (\M)$. The four conformal sesquilinear $\la$-actions of $\D$ on $\M$ induce four actions of the diassociative algebra $\mathrm{Diass} (\D) = \D[t^{\pm1}]\big/(\partial+\partial_t) ~ \! \D[t^{\pm1}]$ on $V (\M)$. These actions are defined by formulas analogous to \eqref{equation 11}, with exactly one of the entries replaced by an element of $V(\M)$. With these actions, $V(\M)$ becomes a representation of the diassociative algebra $\mathrm{Diass} (\D)$. Let $\M_-$ be the subspace of $V(\M)$ spanned by the elements $u_m$ with $u\in\M$ and $m\geq0$. It follows immediately from the above formulas that $\M_-$ is stable under the four actions of the annihilation diassociative algebra $\D_-$ considered in Remark~\ref{anni-di}. For any $n$, there is a $\p$-action on the $n$-th cochain group 
\begin{align*}
    C^n_{Diass} (\D_{-},\M_{-}) = \big\{ f : \C [Y_n] \otimes \D_{-}^{\otimes n} \rightarrow \M_{-} ~ \! \big| ~ \! \substack{ f \text{ is } \C\text{-linear and } f ({a_1}_{m_1}, \ldots, {a_n}_{m_n} ) = 0, ~ \\  \text{for all but finitely many } m_1, \ldots, m_n \geq 0 } \big\}
\end{align*}
of the diassociative algebra $\D_-$ with coefficients in $\M_-$. Explicitly, for any $f \in C^n_{Diass}(\D_{-},\M_{-})$ and ${a_1}_{m_1}, \ldots, {a_n}_{m_n} \in \D_-$,
\begin{align}\label{delta-action}
    (\p f)(y;{a_1}_{m_1}, \ldots, {a_n}_{m_n})= -\p_t(f(y;{a_1}_{m_1}, \ldots, {a_n}_{m_n})) - \sum_{i=1}^{n}f(y;{a_1}_{m_1}, \ldots, {(\p a_i)}_{m_i}, \ldots {a_n}_{m_n}).
\end{align}
With the above notations, we have the following result.
\begin{theorem}\label{th-39}
Let $\mathcal{D}$ be a diassociative conformal algebra and $\M$ be
a representation of $\mathcal{D}$ which is free as a
$\mathbb{C}[\partial]$-module. Then there is a natural isomorphism of
cochain complexes
\begin{align*}
C^\bullet_{cDiass}(\mathcal{D},\M)
\cong
C^\bullet_{Diass}
(\D_{-},\M_{-})^{\partial},
\end{align*}
where $C^\bullet_{Diass}
(\D_{-}, \M_{-})^\p$ denotes the $\p$-invariant subcomplex of $C^\bullet_{Diass}
(\D_{-}, \M_{-})$.
\end{theorem}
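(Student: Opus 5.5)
Following Bakalov--Kac--Voronov for Lie conformal algebras, the plan is to build an explicit map $\Phi$ from conformal cochains to $\partial$-invariant cochains on $\D_-$, check that it commutes with the differentials, and then invert it.

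\emph{Step 1: the map $\Phi$.} For $n\geq 1$, $f\in C^n_{cY}(\D,\M)$, $y\in Y_n$ and $m_1,\ldots,m_n\geq 0$, write
\begin{align*}
f_{\la_1,\ldots,\la_{n-1}}(y;a_1,\ldots,a_n)=\sum_{j_1,\ldots,j_{n-1}\geq 0}\frac{\la_1^{j_1}\cdots\la_{n-1}^{j_{n-1}}}{j_1!\cdots j_{n-1}!}\, f_{(j_1,\ldots,j_{n-1})}(y;a_1,\ldots,a_n).
\end{align*}
Mimicking (\ref{equation 11}), I would set
\begin{align*}
\Phi(f)(y;{a_1}_{m_1},\ldots,{a_n}_{m_n})=\sum_{j_1,\ldots,j_{n-1}\geq 0}\binom{m_1}{j_1}\cdots\binom{m_{n-1}}{j_{n-1}}\, \big(f_{(j_1,\ldots,j_{n-1})}(y;a_1,\ldots,a_n)\big)_{m_1+\cdots+m_n-j_1-\cdots-j_{n-1}}.
\end{align*}
Equivalently, this is the residue/coefficient extraction of $f_{\la_1,\ldots,\la_{n-1}}$ against $e^{\la_i t}$ in the first $n-1$ slots and $t^{m_n}$ in the last. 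Since only finitely many $j$'s contribute and each $\binom{m}{j}$ vanishes for $j>m$, the expression is well defined. Finiteness of support in the $m_i$ follows from polynomiality in $\la$: when $m_n$ is large, the total $t$-degree is large, but this needs care, so I may adjust the definition of $C^n_{Diass}(\D_-,\M_-)$ as the paper does.

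For $n=0$, set $\Phi(u+\partial\M)=u_0$. This is well defined because $(\partial u)_0=0$ in $V(\M)$.

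\emph{Step 2: well-definedness and $\partial$-invariance.} The sesquilinearity relations in the first $n-1$ slots, $f_{(j)}(\ldots,\partial a_i,\ldots)=-j f_{(j-1)}$, match the relations $(\partial a)_m=-m a_{m-1}$ in $\D_-$. This is the same binomial computation as in the ideal proposition, so $\Phi(f)$ descends to the quotient $\D[t]\to\D_-$. The last-slot relation $(\partial+\sum\la_i)$, together with the formula (\ref{delta-action}), then shows that $\partial\Phi(f)=0$. The identity $\binom{m}{j}j=m\binom{m-1}{j-1}$ does the bookkeeping here.

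\emph{Step 3: compatibility with the differentials.} I would show $\Phi\circ\delta_{cDiass}=\delta_{Diass}\circ\Phi$ term by term in formula (\ref{coboundary-m2}):
\begin{itemize}
\item The inner terms, which substitute $\la_i+\la_{i+1}$ and a $\la$-product, are handled using (\ref{f-2}) exactly as in the proof of Proposition \ref{di-al}.
\item The first and last terms use the definition of the $\D_-$-actions on $\M_-$ via (\ref{equation 11}).
\end{itemize}
The orientation data $\bullet_i^y$ and face maps are identical on both sides, so they cause no extra work. The degree $0$ case is a direct check using $a\dashv_{-\partial}u$ evaluated at mode $0$.

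\emph{Step 4: bijectivity, which I expect to be the main obstacle.} Injectivity uses freeness of $\M$: $\M_-\cong \M/\partial\M\otimes\C[t]$-like structure. More precisely, if $\M=\C[\partial]\otimes V$, then $V(\M)$ restricted to nonnegative modes is identified with $V[t]$ via $u_m$. Hence the coefficients $f_{(j)}$ can be recovered from the values $\Phi(f)(\ldots)$ by inverting the triangular binomial system in $m_1,\ldots,m_{n-1}$ with $m_n$ fixed.

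For surjectivity, given a $\partial$-invariant $g$, I would define the candidate $f$ by
\begin{align*}
f_{(j_1,\ldots,j_{n-1})}(y;a_1,\ldots,a_n)=\sum_{k}\partial^{(k)}\text{-expansion of } g(y;{a_1}_{j_1},\ldots,{a_{n-1}}_{j_{n-1}},{a_n}_{k}),
\end{align*}
reconstructing the $\partial$-polynomial in the last slot from the $t$-modes. The $\partial$-invariance of $g$ is exactly what makes this reconstruction consistent and conformal sesquilinear in the last variable. Freeness guarantees that elements of $\M$ are determined by their images $u_m$, $m\geq 0$.

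This reconstruction, where freeness and $\partial$-invariance are used, is where I would spend the most care. I would follow the corresponding argument of \cite{bakalov-kac-voronov} closely, replacing $\wedge$ by tree-labelled tensors.
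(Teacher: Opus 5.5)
Your map $\Phi$ is the paper's correspondence written in modes. Under $\M_-\cong U[t]$, $w_k\leftrightarrow wt^k$, one has $\sum_{m}\frac{\la_1^{m_1}\cdots\la_n^{m_n}}{m_1!\cdots m_n!}\,\Phi(f)(y;{a_1}_{m_1},\ldots,{a_n}_{m_n})=\gamma_{\la_1,\ldots,\la_n}\,e^{t(\la_1+\cdots+\la_n)}$, where $\gamma$ is $f_{\la_1,\ldots,\la_{n-1}}(y;a_1,\ldots,a_n)$ with $\p$ replaced by $-(\la_1+\cdots+\la_n)$; this is exactly the paper's $\beta$. Steps 1--3 go through: descent to $\D_-$, $\p$-invariance from $\sum_i f(\ldots,\p a_i,\ldots)=\p f$, and the chain-map property via (\ref{f-1}) and (\ref{f-2}).

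The gap is in Step 4, and it starts with the finiteness issue, which you flagged but misdiagnosed. Finite support does not follow from polynomiality in $\la$. For $n=2$ and $f_\la(a,b)=w\in U\setminus\{0\}$ one gets $\Phi(f)(a_{m_1},b_{m_2})\leftrightarrow wt^{m_1+m_2}\neq 0$ for all $m_1,m_2$. In fact, with the literal finite-support condition in the definition of $C^n_{Diass}(\D_-,\M_-)$, every $\p$-invariant cochain of positive degree is zero, since a polynomial solution of $(-\p_t+\sum_i\la_i)\beta=0$ vanishes. The paper's own $\beta=\gamma e^{t\sum_i\la_i}$ has the same feature. So that clause has to be read as a continuity condition: on the $\p$-invariant side, what matches conformal cochains is that, for fixed $y,a_1,\ldots,a_n$, the $t$-constant term of $g(y;{a_1}_{m_1},\ldots,{a_n}_{m_n})$ vanishes for all but finitely many $m$. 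This is not cosmetic, because surjectivity hinges on it. For $n=1$ and $\D=\C[\p]a$, the maps $g(a_m)=\sum_{k\le m}\binom{m}{k}c_k t^{m-k}$ are $\p$-invariant for every sequence $(c_k)$ in $U$, but only finitely supported sequences come from $\operatorname{Hom}_{\C[\p]}(\D,\M)$.

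Your reconstruction of $f$ from $g$ also fails as written, because the binomials mix the last index with the first $n-1$. Take $n=2$ and $f_\la(a,b)=\p w$, so $f_{(1)}=0$. Then $\Phi(f)(a_1,b_k)\leftrightarrow-(k+1)wt^k$, which is not the mode sequence of any element of $\M$ (those have $t^k$-coefficient independent of $k$), and its constant term at $k=0$ is $-w\neq 0$. Similarly, injectivity cannot be read off with $m_n$ fixed: $f_\la(a,b)=(\p+\la)^{N+1}w$ gives $\Phi(f)(a_m,b_N)=0$ for all $m$.

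The missing idea is the paper's: $\p$-invariance is the differential equation $(-\p_t+\sum_i\la_i)\beta=0$, whence $\beta=\gamma e^{t\sum_i\la_i}$ with $\gamma=\beta|_{t=0}$. In your language, a $\p$-invariant $g$ is determined by the constant terms $\gamma_{(m)}:=\mathrm{ct}\,g(y;{a_1}_{m_1},\ldots,{a_n}_{m_n})$. These satisfy $\gamma_{(m)}(\ldots,\p a_i,\ldots)=-m_i\gamma_{(m-e_i)}$ in all $n$ slots ($e_i$ the $i$-th unit vector), because $g$ is defined on $\D_-$, where $(\p a)_m=-ma_{m-1}$. They are finitely supported by the continuity condition, so $\gamma_{\la_1,\ldots,\la_n}=\sum_m\frac{\la^m}{m!}\gamma_{(m)}$ (multi-index notation) is a polynomial, sesquilinear in every argument. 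Then $f_{\la_1,\ldots,\la_{n-1}}:=\gamma_{\la_1,\ldots,\la_{n-1},-\p-\la_1-\cdots-\la_{n-1}}$ is a conformal cochain, and $\Phi(f)=g$ because both have generating function $\gamma e^{t\sum_i\la_i}$. The same description gives injectivity at once.
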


\begin{proof}
Since $\M$ is free as a $\mathbb{C}[\partial]$-module, there exists a
complex vector space $U$ such that $\mathcal{M} = \mathbb{C} [\p] \otimes U$. Consequently, the annihilation module $\M_-$ is naturally isomorphic to
$U[t]$. Let $f \in C^n_{Diass} (\D_{-},\M_{-})$ be arbitrary. We consider the map $\beta : \C[Y_n] \otimes \D^{\otimes n} \to U[\la_1, \ldots, \la_n,t]$ by 
\begin{align}\label{beta-f}
    \beta_{\la_1, \ldots, \la_n, t }(y; a_1, \ldots, a_n)= \sum _{m_1, \ldots, m_n \geq 0} \frac{\la_1^{m_1}}{m_1 !} \cdots \frac{\la_n^{m_n}}{m_n !} ~ \! f(y;{a_1}_{m_1}, \ldots, {a_n}_{m_n} ),
\end{align}
for $y \in Y_n$ and $a_1, \ldots, a_n \in \D$. It turns out that the $\p$-action on $f$ defined in \eqref{delta-action} is equivalent to the action of $(-\p_t + \sum \limits _{i=1}^{n} \la_i)$ on $\beta_{\la_1, \ldots, \la_n,t}$. Hence, $f$ is $\p$-invariant if and only if $(-\p_t + \sum \limits _{i=1}^{n} \la_i)\beta_{\la_1, \ldots, \la_n,t}=0$. This is a first-order differential equation in the variable $t$, whose solution is given by 
\begin{align}\label{beta}
    \beta_{\la_1, \ldots, \la_n,t}(y; a_1, \ldots, a_n) = \gamma_{\la_1, \ldots, \la_n}(y; a_1, \ldots, a_n) ~ \! e^{t \sum \limits _{i=1}^{n}\la_i},
\end{align}
where $\gamma_{\la_1, \ldots, \la_n} = \beta_{\la_1, \ldots, \la_n,0}$. By identifying $U$ with $1 \otimes U \subset \M$, we may define $\overline{\gamma} \in C^n_{cDiass}(\mathcal{D},\M)$ by 
\begin{align*}
    \overline{\gamma}_{\la_1, \ldots, \la_{n-1}}(y; a_1, \ldots, a_n) = \gamma_{\la_1, \ldots, \la_{n-1}, -\p - \sum \limits _{i=1}^{n-1}\la_i}(y; a_1, \ldots, a_n).
\end{align*}
It is easy to check that $f \mapsto \overline{\gamma} $ is a cochain map from $C^\bullet_{Diass}
(\D_{-},\M_{-})^{\partial}$ to $C^\bullet_{cDiass}(\mathcal{D},\M)$.

Conversely, let $ \overline{\gamma} \in C^n_{cDiass}(\mathcal{D},\M)$. We define a map $\gamma : \C[Y_n] \otimes \D^{\otimes n} \to \M[\la_1, \ldots, \la_n]$ by 
\begin{align*}
    \gamma_{\la_1, \ldots, \la_n}(y; a_1, \ldots, a_n) = \overline{\gamma}_{\la_1 - \frac{\la+\p}{n}, \ldots, \la_{n-1} - \frac{\la+\p}{n}}(y; a_1, \ldots, a_n), ~~\text{ where } \la = \sum_{i=1}^{n} \la_i.
\end{align*}
Subsequently, we define $ \beta_{\la_1, \ldots, \la_n,t} (a_1, \ldots, a_n) \in \M [\la_1, \ldots, \la_n, t]$ simply by using \eqref{beta}. Since $\M = \C [\partial] \otimes U$, we may substitute $\partial$ by $-\p_t$ in the expression of  $ \beta_{\la_1, \ldots, \la_n,t} (a_1, \ldots, a_n)$. Then it lies in $U [\la_1, \ldots, \la_n, t]$. It turns out that $ \beta_{\la_1, \ldots, \la_n,t}$ is $(-\p_t + \sum \limits _{i=1}^{n} \la_i)$-invariant. Hence the map $f \in C^n_{Diass}
(\D_{-},\M_{-})$ defined in \eqref{beta-f}, is $\p$-invariant. Finally, the correspondence $f \leftrightarrow \overline{\gamma}$ is an isomorphism, which concludes the proof.
\end{proof}

In the following result, we use the above theorem to obtain an explicit description of the cohomology of a current diassociative conformal algebra. More precisely, we have the following.
 
\begin{proposition}\label{prop-cur}
Let $D$ be a finite-dimensional diassociative algebra and $M$ be a
representation of $D$. We set $\operatorname{Cur}(M)=\mathbb{C}[\partial]\otimes M$ and define four conformal sesquilinear $\la$-actions
\begin{align*}
\dashv_\la,\ \vdash_\la:
\operatorname{Cur}(D)\otimes \operatorname{Cur}(M)
\longrightarrow \operatorname{Cur}(M)[\la] ~\text{ and } ~\dashv_\la,\ \vdash_\la:
\operatorname{Cur}(M)\otimes \operatorname{Cur}(D)
\longrightarrow \operatorname{Cur}(M)[\la],
\end{align*}
by
\[
\begin{cases}
(1\otimes a)\dashv_\la(1\otimes u)
 =1\otimes(a\dashv u),\\
(1\otimes a)\vdash_\la(1\otimes u)
 =1\otimes(a\vdash u),
\end{cases}
\qquad
\begin{cases}
(1\otimes u)\dashv_\la(1\otimes a)
 =1\otimes(u\dashv a),\\
(1\otimes u)\vdash_\la(1\otimes a)
 =1\otimes(u\vdash a),
\end{cases}
\]
for $a\in D$ and $u\in M$. With the above operations, $\operatorname{Cur}(M)$ is a representation
of the current diassociative conformal algebra $\operatorname{Cur}(D)$
given in Example~\ref{cur-d}. Moreover, we have
\[
H^\bullet_{{cDiass}}
\bigl(\operatorname{Cur}(D),\operatorname{Cur}(M)\bigr)
\cong
H^\bullet_{{Diass}}\bigl(D[t],M\bigr),
\]
where $H^\bullet_{{Diass}}(D[t],M)$ is the diassociative
algebra cohomology of the diassociative algebra $D[t]$ with coefficients in the representation $M$ obtained by evaluating at $t=0$.
\end{proposition}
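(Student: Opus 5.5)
The plan is to reduce everything to Theorem \ref{th-39} and then identify the $\p$-invariant subcomplex explicitly for current algebras. First I check that $\operatorname{Cur}(M)$ is a representation of $\operatorname{Cur}(D)$. Extending the four actions by conformal sesquilinearity gives the same formula as in Example \ref{cur-d}:
\[
(p(\p)\otimes a)\dashv_\la (q(\p)\otimes u)=p(-\la)q(\la+\p)\otimes(a\dashv u),
\]
and similarly for the other three actions. Each of the fifteen compatibility identities then splits into two parts. The polynomial part is the same for all of them, namely the identity $p(-\la)\,q(-\mu)\,r(\la+\mu+\p)$ on both sides, exactly as in the associative current case. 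The $D$-part is the corresponding representation axiom of $M$ over $D$. So this step is routine.

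Next, since $\operatorname{Cur}(M)=\C[\p]\otimes M$ is free, Theorem \ref{th-39} applies and gives
\[
C^\bullet_{cDiass}(\operatorname{Cur}(D),\operatorname{Cur}(M))\cong C^\bullet_{Diass}(\operatorname{Cur}(D)_-,\operatorname{Cur}(M)_-)^{\p}.
\]
I then compute the annihilation objects. In $\mathrm{Diass}(\operatorname{Cur}(D))$, the relation $(\p a)_m=-m a_{m-1}$ shows that the elements $a_m=(1\otimes a)t^m$ with $a\in D$ span. Using (\ref{equation 11}), where only the $0$-th products of elements $1\otimes a$ are nonzero, we get $a_m\dashv b_n=(a\dashv b)_{m+n}$, and the same for $\vdash$. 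Hence $\operatorname{Cur}(D)_-\cong D[t]$, and likewise $\operatorname{Cur}(M)_-\cong M[t]$ as a $D[t]$-representation. The derivation $T$ acts as $-\p_t$.

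The key step is to show that restricting $\p$-invariant cochains to their component of total $t$-degree zero in the arguments, followed by evaluation at $t=0$ in the output, gives an isomorphism of complexes
\[
C^\bullet_{Diass}(D[t],M[t])^{\p}\cong C^\bullet_{Diass}(D[t],M),
\]
where $M$ is made a $D[t]$-module through $t\mapsto 0$. I would use the explicit description from the proof of Theorem \ref{th-39}: a $\p$-invariant cochain corresponds to $\beta=\gamma\,e^{t\sum\la_i}$, with $\gamma$ an arbitrary polynomial function $\gamma_{\la_1,\dots,\la_n}(y;a_1,\dots,a_n)\in M[\la_1,\dots,\la_n]$ that is linear in $a_i\in D$. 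Such a $\gamma$ is precisely the generating function of a multilinear map $f(y;a_1t^{m_1},\dots,a_nt^{m_n})\in M$. The finiteness condition in the definition of $C^n_{Diass}$ gives polynomiality; at this point I need $D$ finite-dimensional to pass from "finitely many $m_i$ for each tuple" to a polynomial in $\la$. This produces a linear bijection between $\p$-invariant cochains and $C^n_{Diass}(D[t],M)$.

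It remains to check that this bijection intertwines the differentials. This is the step I expect to be the main obstacle. The diassociative differential on $D[t]$ with values in $M[t]$ involves $t$-shifts in the boundary terms, and these must be shown to disappear after setting $t=0$, which is exactly the reason $M$ carries the $t\mapsto 0$ action. I would handle it at the generating-function level: multiplying by $e^{t\sum\la_i}$ converts products $a_m\bullet b_n=(a\bullet b)_{m+n}$ into addition of the $\la$-variables, matching the $\la_i+\la_{i+1}$ substitutions in (\ref{coboundary-m2}). Passing to cohomology then gives the stated isomorphism.
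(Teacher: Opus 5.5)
Your proposal is correct and follows the paper's proof. Like the paper, you apply Theorem \ref{th-39} with $\operatorname{Cur}(D)_-=D[t]$ and $\operatorname{Cur}(M)_-=M[t]$, then identify the $\p$-invariant cochains with $C^\bullet_{Diass}(D[t],M)$ via evaluation at $t=0$, with inverse given by the generating-function construction $\gamma\mapsto\gamma\,e^{t\sum\la_i}$.

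Two small remarks. First, the map you actually build through $\gamma$ is postcomposition with evaluation at $t=0$ on all of $D[t]^{\otimes n}$. It is not a restriction to arguments of $t$-degree zero, which would lose information, so that phrase should be corrected. Second, the step you flag as the main obstacle is immediate. The evaluation map $\mathrm{ev}_{t=0}\colon M[t]\to M$ is a $D[t]$-bimodule map when $M$ carries the $t\mapsto 0$ structure, so postcomposing with it commutes with the differential term by term.
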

\begin{proof}
The first part is straightforward. For the second part, we observe that both $\mathrm{Cur}(D) = \C [\p] \otimes D$ and $\operatorname{Cur}(M) = \C [\p] \otimes M$ are free $\C[\p]$-modules. Hence, one have $\mathrm{Cur}(D)_- = D[t]$ and $\mathrm{Cur}(M)_- = M[t]$. Thus by Theorem \ref{th-39}, 
\begin{align}\label{iso-1}
    H^\bullet_{{cDiass}}
\bigl(\operatorname{Cur}(D),\operatorname{Cur}(M)\bigr)
\cong
H^\bullet_{{Diass}}\bigl(D[t],M[t]\bigr)^\p.
\end{align}
Next, for any $f \in C^n_{{Diass}}\bigl(D[t],M[t]\bigr)^\p$, we define an element $\overline{\gamma} \in C^n_{{Diass}}\bigl(D[t],M\bigr)$ by evaluating $f$ at $t=0$. This construction $f \mapsto \overline{\gamma}$ yields a cochain map from $C^\bullet_{{Diass}}\bigl(D[t],M[t]\bigr)^\p$ to $C^\bullet_{{Diass}}\bigl(D[t],M\bigr)$. 

Conversely, for any $\overline{\gamma} \in C^n_{{Diass}}\bigl(D[t],M\bigr) $, we define a map $\gamma : \C[Y_n] \otimes (\operatorname{Cur}(D))^{\otimes n} \to M [\la_1, \ldots, \la_n]$ by 
\begin{align*}
    \gamma_{\la_1, \ldots, \la_n }(y;1 \otimes {a_1}, \ldots, 1 \otimes {a_n})= \sum _{m_1, \ldots, m_n \geq 0} \frac{\la_1^{m_1}}{m_1 !} \cdots \frac{\la_n^{m_n}}{m_n !} ~ \!  \overline{\gamma}(y;{a_1}t^{m_1}, \ldots, {a_n}t^{m_n} ).
\end{align*}
Subsequently, we set a map $\beta : \C[Y_n] \otimes (\operatorname{Cur}(D))^{\otimes n} \to M [\la_1, \ldots, \la_n,t]$ by 
\begin{align*}
     \beta_{\la_1, \ldots, \la_n,t}(y;1 \otimes {a_1}, \ldots, 1 \otimes {a_n} ) = \gamma_{\la_1, \ldots, \la_n}(y; 1 \otimes {a_1}, \ldots, 1 \otimes {a_n} ) ~ \! e^{t \sum \limits _{i=1}^{n}\la_i}.
\end{align*}
Finally, we consider a map $f \in C^n_{{Diass}}\bigl(D[t],M[t]\bigr)^\p$ by 
\begin{align*}
     \beta_{\la_1, \ldots, \la_n,t }(y;1 \otimes {a_1}, \ldots, 1 \otimes {a_n})= \sum _{m_1, \ldots, m_n \geq 0} \frac{\la_1^{m_1}}{m_1 !} \cdots \frac{\la_n^{m_n}}{m_n !} ~ \!  f(y;{a_1}t^{m_1}, \ldots, {a_n}t^{m_n} ).
\end{align*}
It is easy to see that the correspondence $f \leftrightarrow \overline{\gamma}$ is an isomorphism. Hence, we have $H^\bullet_{{Diass}}
\bigl(D[t],M[t]\bigr)^\p \cong H^\bullet_{{Diass}}
\bigl(D[t],M\bigr)$. Combining this with the isomorphism \eqref{iso-1}, we get the required result. 
\end{proof}

\section{Applications of cohomology}\label{sec4}

In this section, we study extensions and deformations of diassociative conformal algebras using the cohomology introduced in the previous section. In particular, we show that the set of equivalence classes of extensions of a diassociative conformal algebra $\D$ by a representation $\M$ has a bijection with the second cohomology group $H^2_{cDiass} (\D, \M)$. Subsequently, we consider first-order deformations of a diassociative conformal algebra $\D$ and show that the set of isomorphism classes of such deformations can be classified by $H^2_{cDiass} (\D, \D)$.

 \medskip
   
 \noindent {\bf Extensions of diassociative conformal algebras.}

Let $(\D,\dashv_{\la},\vdash_{\la})$ be any diassociative conformal algebra, and $\mathcal{A}$ be an abelian diassociative conformal algebra, that is, both the $\la$-products on $\mathcal{A}$ are trivial.
A {\bf $\mathbb{C}[\partial]$-split extension} of $\D$ by $\mathcal{A}$ is a $\mathbb{C}[\partial]$-split short exact sequence of diassociative conformal algebras
\begin{align}\label{121}
0\longrightarrow
\mathcal{A}
\xrightarrow{\ i \ }
\E
\xrightarrow{\ \pi\ }
\D
\longrightarrow 0.
\end{align}
A $\mathbb{C}[\partial]$-split extension \eqref{121} naturally induces a {representation} of the diassociative conformal algebra $\D$ on the $\mathbb{C}[\partial]$-module $\mathcal{A}$. More precisely, for any $a \in \D$ and $u \in \mathcal{A}$, the conformal sesquilinear $\la$-actions are given by
\begin{align*}
   & a \dashv_\la u :=  s(a) \dashv_\la i(u)   , \qquad  a \vdash_\la u :=  s(a) \vdash_\la i(u), \qquad  u \dashv_\la a := i(u) \dashv_\la s(a)  \\
   & \qquad \qquad \qquad \qquad \quad \text{ and } \quad u \vdash_\la a := i(u) \vdash_\la s(a),
\end{align*}
where $s : \D \rightarrow \E$ is any $\mathbb{C}[\partial]$-linear section of the map $\pi$. This is called the {\em induced representation} of the above extension.


\begin{definition}
Let $(\D,\dashv_{\la},\vdash_{\la})$ be a diassociative conformal algebra and $\M$ be a representation of it. Then a {\bf $\mathbb{C}[\partial]$-split extension} of $\D$ by $\M$ is a $\mathbb{C}[\partial]$-split short exact sequence of diassociative conformal algebras
\[
0\longrightarrow
\M
\xrightarrow{\ i \ }
\E
\xrightarrow{\ \pi\ }
\D
\longrightarrow 0
\]
(where $\M$ is regarded as an abelian diassociative conformal algebra) for which the induced representation of $\D$ on $\M$ agrees with the prescribed one.
\end{definition}
Two $\C[\p]$-split extensions $0 \longrightarrow \M
\xrightarrow{\ i \ } \E
\xrightarrow{\ \pi \ } \D
\longrightarrow 0$ and $
0 \longrightarrow \M
\xrightarrow{\ i' \ } \E'
\xrightarrow{\ \pi' \ } \D
\longrightarrow 0
$ are said to be {\bf equivalent} if there exists an isomorphism of diassociative conformal algebras $\varphi:  \E\longrightarrow\E'$ such that the following diagram commutes: 

\[
\begin{tikzcd}
0 \arrow[r] &
\M \arrow[r,"i"] \arrow[d,equals] &
\E \arrow[r,"\pi"] \arrow[d,"\varphi"] &
\D \arrow[r] \arrow[d,equals] &
0\\
0 \arrow[r] &
\M \arrow[r,"i'"'] &
\E' \arrow[r,"\pi'"'] &
\D \arrow[r] &
0.
\end{tikzcd}
\]

\medskip

\noindent We denote by $\operatorname{Ext}(\D,\M)$ the set of equivalence classes of $\C[\p]$-split extensions of the diassociative conformal algebra $\D$ by the representation $\M$. Then we have the following result.

\begin{theorem}\label{thm-ext}
Let $(\D,\dashv_{\la},\vdash_{\la})$ be a diassociative conformal algebra and $\M$ be a representation of it. Then there is a natural bijection between $\operatorname{Ext}(\D,\M)$ and $H_{cDiass}^{2}(\D,\M)$.
\end{theorem}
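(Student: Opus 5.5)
The proof follows the standard scheme for classifying abelian extensions. Given a $\C[\p]$-split extension $0\to\M\xrightarrow{i}\E\xrightarrow{\pi}\D\to 0$, I will choose a $\C[\p]$-linear section $s:\D\to\E$ and identify $\E\cong\M\oplus\D$ as $\C[\p]$-modules. The transported $\la$-products then take the form
\begin{align*}
(u,a)\dashv_\la(v,b)&=\big(a\dashv_\la v+u\dashv_\la b+\psi_\la(\tikz[baseline=-0.5ex,scale=0.20]{\draw (-1,1)--(0,0);\draw (1,1)--(0,0);\draw (0,0)--(0,-1);\draw (0.5,0.5)--(0,1);};a,b),\ a\dashv_\la b\big),\\
(u,a)\vdash_\la(v,b)&=\big(a\vdash_\la v+u\vdash_\la b+\psi_\la(\tikz[baseline=-0.5ex,scale=0.20]{\draw (-1,1)--(0,0);\draw (1,1)--(0,0);\draw (0,0)--(0,-1);\draw (-0.5,0.5)--(0,1);};a,b),\ a\vdash_\la b\big),
\end{align*}
where $\psi$ is defined by
\begin{align*}
\psi_\la(\tikz[baseline=-0.5ex,scale=0.20]{\draw (-1,1)--(0,0);\draw (1,1)--(0,0);\draw (0,0)--(0,-1);\draw (0.5,0.5)--(0,1);};a,b)&=s(a)\dashv_\la s(b)-s(a\dashv_\la b),\\
\psi_\la(\tikz[baseline=-0.5ex,scale=0.20]{\draw (-1,1)--(0,0);\draw (1,1)--(0,0);\draw (0,0)--(0,-1);\draw (-0.5,0.5)--(0,1);};a,b)&=s(a)\vdash_\la s(b)-s(a\vdash_\la b).
\end{align*}
Both expressions lie in $i(\M)[\la]$ because $\pi$ is a homomorphism. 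Since $s$ is $\C[\p]$-linear and the $\la$-products on $\E$ are conformal sesquilinear, $\psi$ is conformal sesquilinear, so $\psi\in C^2_{cY}(\D,\M)=C^2_{cDiass}(\D,\M)$.

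The first key step is to show that $\psi$ is a $2$-cocycle. I will write out the five identities (\ref{equation 1})--(\ref{equation 5}) for the $\la$-products on $\M\oplus\D$ with three elements $(0,a),(0,b),(0,c)$ and take the $\M$-components. Using the tree-by-tree description in (\ref{id}) and the orientation maps $\bullet_i^y$, these five $\M$-components should match exactly the five components $\delta_{cDiass}(\psi)_{\la,\mu}(y;a,b,c)$ for $y\in Y_3$, read off from (\ref{coboundary-m2}) with $n=2$. This bookkeeping, checking that for each of the five trees the faces $d_iy$ and orientations $\bullet_i^y$ reproduce the correct operations, is the only real computation, and I expect it to be the main (though routine) obstacle. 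The cleanest way to organize it is to note that $\M\oplus\D$ with the twisted products corresponds to the Maurer--Cartan element $\pi_{\M\oplus\D}+\psi$ in $C^\bullet_{cY}(\M\oplus\D,\M\oplus\D)$. By Proposition \ref{1-1}, the structure is diassociative if and only if $[\pi+\psi,\pi+\psi]=0$. Since $\psi\circ_i\psi=0$ (because $\psi$ vanishes on $\M$), this reduces to $[\pi,\psi]=0$, which by (\ref{diff-formula}), restricted as in the proof that $\delta^2_{cDiass}=0$ with coefficients, is $\delta_{cDiass}\psi=0$.

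The second step is independence of choices. Another section has the form $s'=s+i\circ\phi$ with $\phi\in\operatorname{Hom}_{\C[\p]}(\D,\M)=C^1_{cDiass}(\D,\M)$. Direct substitution gives $\psi'-\psi=\delta_{cDiass}(\phi)$, up to the sign convention in (\ref{coboundary-m2}), using that $\M$ is abelian in $\E$ and that the induced actions are the prescribed ones. Similarly, if $\varphi:\E\to\E'$ is an equivalence, then $\varphi\circ s$ is a section of $\pi'$ and it produces the same cocycle. Hence the assignment $\operatorname{Ext}(\D,\M)\to H^2_{cDiass}(\D,\M)$, $[\E]\mapsto[\psi]$, is well defined.

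For the inverse, given a $2$-cocycle $\psi$ I define the $\la$-products above on $\M\oplus\D$. The cocycle condition, via the same Maurer--Cartan reformulation, makes this a diassociative conformal algebra $\E_\psi$, with the obvious $\C[\p]$-split sequence whose induced representation is the given one. If $\psi'=\psi+\delta_{cDiass}\phi$, then the map $(u,a)\mapsto(u+\phi(a),a)$, with the appropriate sign of $\phi$, is an equivalence $\E_\psi\to\E_{\psi'}$. Finally, $\E\cong\E_\psi$ through the section, and the two constructions are mutually inverse on classes, which yields the bijection.
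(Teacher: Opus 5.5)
Your proposal is correct and follows essentially the same route as the paper: split $\E\cong\M\oplus\D$, read off a $2$-cochain from the $\M$-components of the products of $(0,a)$ and $(0,b)$, match the five diassociativity identities tree by tree with $\delta_{cDiass}$, and relate equivalences and cohomologous cocycles through maps $(u,a)\mapsto(u\pm\phi(a),a)$. The explicit independence-of-section check and the Maurer--Cartan shortcut ($\psi\circ_i\psi=0$, so $[\pi+\psi,\pi+\psi]=2[\pi,\psi]$) only tidy up steps the paper leaves as ``easily verified''; your caution about the sign is also warranted, since with the paper's conventions $\psi'=\psi+\delta_{cDiass}\phi$ is realized by $(u,a)\mapsto(u-\phi(a),a)$ from $\E_\psi$ to $\E_{\psi'}$.
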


\begin{proof}
Let $0\longrightarrow
\M
\xrightarrow{\ i \ }
\E
\xrightarrow{\ \pi\ }
\D
\longrightarrow 0$ be a $\C[\p]$-split extension of the diassociative conformal algebra $\D$ by the representation $\M$. Since the extension is $\C[\p]$-split, we may identify $\E$ with the direct sum $\M \oplus \D$ as a $\C[\p]$-module. Under this identification, we have $i(u)=(u,0), ~ \! \pi(u,a)=a$ and $\p(u ,a)=(\p u,\p a)$, for all $u \in \M$ and $ a \in \D$. Since $\pi$ is a homomorphism of diassociative conformal algebras,
\begin{align*}
\pi ((0,a)\dashv_{\la}(0,b))=a \dashv_{\la} b ~~~ \text{ and } ~~~
\pi((0,a) \vdash_{\la} (0,b))=a \vdash_{\la} b,  \text{ for all } a, b \in \D.
\end{align*}
Therefore, we obtain a conformal sesquilinear map $f : \C[Y_2]\otimes \D^{\otimes 2 } \to \M[\la]$ such that
\begin{align*}
(0, a) \dashv_{\la}(0,b)
=
\big(f_{\la} (\tikz[baseline=-0.5ex,scale=0.20]{
  \draw (-1,1)--(0,0);
  \draw (1,1)--(0,0);
  \draw (0,0)--(0,-1);
  \draw (0.5,0.5)--(0,1);
};a, b ), \, a \dashv_{\la} b \big) ~~~ \text{ and } ~~~ (0,a) \vdash_{\la} (0,b) = \big(f_{\la} (\tikz[baseline=-0.5ex,scale=0.20]{
  \draw (-1,1)--(0,0);
  \draw (1,1)--(0,0);
  \draw (0,0)--(0,-1);
  \draw (-0.5,0.5)--(0,1);
};  a, b ), \, a \vdash_{\la} b \big).
\end{align*}
Since the representation on $\M$ induced by the extension coincides with the given representation and the $\la$-products on $\M$ are trivial, it follows that the $\la$-products of $\E$ are given by
\begin{align}
(u, a) \dashv_{\la} (v ,b) &=
\big(
a \dashv_{\la} v
+
u \dashv_{\la} b
+
f_{\la} (\tikz[baseline=-0.5ex,scale=0.20]{
  \draw (-1,1)--(0,0);
  \draw (1,1)--(0,0);
  \draw (0,0)--(0,-1);
  \draw (0.5,0.5)--(0,1);
};a, b ),
\,
a \dashv_{\la} b
\big), \label{p-1}\\
(u, a)\vdash_{\la}(v, b)
&=
(
a \vdash_{\la} v
+
u \vdash_{\la} b
+
f_{\la} (\tikz[baseline=-0.5ex,scale=0.20]{
  \draw (-1,1)--(0,0);
  \draw (1,1)--(0,0);
  \draw (0,0)--(0,-1);
  \draw (-0.5,0.5)--(0,1);
};a, b ),
\,
a \vdash_{\la} b
) \label{p-2}.
\end{align}
One may easily verify that the $\la$-products (\ref{p-1}) and (\ref{p-2}) define a diassociative conformal algebra structure on $\E$ if and only if the conformal sesquilinear map $f$ satisfies the condition $ \delta_{cDiass}(f)=0$. Equivalently, $f \in Z_{cDiass}^{2}(\D,\M)$ is a $2$-cocycle.

Next, suppose $0\longrightarrow
\M
\xrightarrow{\ i \ }
\E
\xrightarrow{\ \pi\ }
\D
\longrightarrow 0$ and $0\longrightarrow
\M
\xrightarrow{\ i' \ }
\E'
\xrightarrow{\ \pi' \ }
\D
\longrightarrow 0$
are equivalent $\C[\partial]$-split extensions, where the diassociative conformal algebra structures $(\E, \dashv_{\la}, \vdash_{\la})$ and $(\E', \dashv_{\la}',\vdash_{\la}')$ are determined by the $2$-cocycles $f$ and $f'$, respectively. Then there exists an isomorphism of diassociative conformal algebras $\varphi : \E \cong \M \oplus \D \longrightarrow \M \oplus \D \cong \E'$ such that 
\begin{align}
\pi\circ \varphi (u, a)= a ~ \text{ and } ~
\varphi (u,0)=(u,0),  \text{ for } a \in \D,  u \in \M.  \label{eq4}
\end{align}
Hence, there exists a $\C [\partial]$-linear map $g : \D \longrightarrow \M$ such that $\varphi (u, a) = (u + g (a), a)$, for all $(u, a) \in \E \cong \M \oplus \D$.
Since $\varphi$ is a homomorphism of diassociative conformal algebras, we have
\begin{align*}
\varphi ((0,a)\dashv_{\la}(0,b) )
=
\varphi  (0,a)\dashv_{\la}' \varphi (0,b)~
\text{ and }~
\varphi ((0,a)\vdash_{\la}(0,b) )
=
\varphi (0,a)\vdash_{\la}' \varphi (0,b ),
\end{align*}
for all $a, b \in \D$. These conditions yield $f'-f=\delta_{cDiass}(g)$, which shows that $f'$ and $f$ represent the same cohomology class in $H_{cDiass}^{2}(\D,\M)$.

\medskip

Conversely, let $f\in Z_{cDiass}^{2}(\D,\M)$ be a $2$-cocycle. Then the $\la$-products defined in \eqref{p-1} and \eqref{p-2} make $\M\oplus\D$ into a diassociative conformal algebra structure. This defines a $\mathbb{C}[\partial]$-split extension of the diassociative conformal algebra $\D$ by the representation $\M$. Next, suppose $f, f' \in  Z_{cDiass}^{2}(\D,\M)$ are cohomologous, say $ f'- f = \delta_{cDiass}(g)$ for some $g\in C_{cDiass}^{1}(\D,\M)$. Then it is easy to see that the map
\begin{align*}
\varphi: \M\oplus\D\longrightarrow\M\oplus\D  \text{ given by }
\varphi(u, a) :=(u+g(a),a), \text{ for } (u, a) \in \M \oplus \D,
\end{align*}
is an isomorphism between the $\mathbb{C}[\partial]$-split extensions corresponding to the 2-cocycles $f$ and $f'$. Hence cohomologous $2$-cocycles determine equivalent $\mathbb{C}[\partial]$-split extensions. 

Finally, the two correspondences above are inverses of each other.
\end{proof}

\medskip
   
 \noindent {\bf Deformations of diassociative conformal algebras.} Let $A = \C [\epsilon] / (\epsilon^2 )$ be the algebra of dual numbers. For any diassociative conformal algebra $\D$, we consider the tensor product $\D \otimes A$, which is obviously a $\C [\partial]$-module where the $\partial$-action is given on the first factor. Then a {\bf first-order deformation} of $\D$ is given by a diassociative conformal algebra structure on $\D \otimes A$ over $A$, so that the projection map
 \begin{align*}
    \pi: \D \otimes A \rightarrow \D ~\text{ defined by }~ \pi (a \otimes p (\epsilon) ) = p (0) a
 \end{align*}
 is a homomorphism of diassociative conformal algebras. Two first-order deformations are said to be {\bf isomorphic} if there exists an isomorphism $\varphi: \D \otimes A \rightarrow \D \otimes A$ of the corresponding diassociative conformal algebras over $A$ commuting with the corresponding projection maps.

 It follows that a first-order deformation is determined by a $\C[\partial]$-split extension
 \begin{align*}0\longrightarrow
\D
\xrightarrow{\ i \ }
\D \otimes A
\xrightarrow{\ \pi\ }
\D
\longrightarrow 0
\end{align*}
of the diassociative conformal algebra $\D$ by the adjoint representation, where $i (a) = a \otimes \epsilon$, for $a \in \D$. Moreover, any two first-order deformations are isomorphic if and only if the corresponding $\C[\partial]$-split extensions are equivalent. Hence, by Theorem \ref{thm-ext}, we get the following result.

\begin{thm}\label{thm-def}
    Let $\D$ be a diassociative conformal algebra. Then the set of all isomorphism classes of first-order deformations of $\D$ has a bijection with the second cohomology group $H^2_{cDiass} (\D, \D).$
\end{thm}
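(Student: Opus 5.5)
The plan is to reduce the statement to Theorem \ref{thm-ext} applied to the adjoint representation $\M=\D$. The work lies in a precise dictionary between first-order deformations and $\C[\p]$-split extensions of $\D$ by $\D$, followed by a check that the two notions of equivalence correspond.

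First, let a first-order deformation be given. Since the structure on $\D\otimes A$ is $A$-bilinear and $\D\otimes A=\D\oplus\D\epsilon$ as a $\C[\p]$-module, it is determined by the products of elements $a\otimes 1$. Because $\pi$ is a homomorphism, these products have the form
\begin{align*}
(a\otimes 1)\dashv'_\la(b\otimes 1)=a\dashv_\la b+\epsilon\, f_\la(\tikz[baseline=-0.5ex,scale=0.20]{\draw (-1,1)--(0,0);\draw (1,1)--(0,0);\draw (0,0)--(0,-1);\draw (0.5,0.5)--(0,1);};a,b),
\end{align*}
and similarly for $\vdash'_\la$, with $f\in C^2_{cY}(\D,\D)$. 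Conformal sesquilinearity of $f$ is inherited from that of the deformed products. By $A$-bilinearity and $\epsilon^2=0$, the products involving $a\otimes\epsilon$ are $\epsilon(a\dashv_\la b)$ and so on. Hence $i(\D)=\D\otimes\epsilon$ is an abelian ideal, and the induced representation of $\D$ on it is the adjoint one. Therefore $0\to\D\xrightarrow{i}\D\otimes A\xrightarrow{\pi}\D\to 0$ is a $\C[\p]$-split extension by the adjoint representation, split by $a\mapsto a\otimes 1$. Its products are exactly (\ref{p-1})--(\ref{p-2}) with $\M=\D$.

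Conversely, an extension $\E\cong\D\oplus\D$ of this type has its structure given by (\ref{p-1})--(\ref{p-2}) for a cocycle $f$. We declare $\epsilon$ to act by $(u,a)\mapsto(a',0)$ with $a'=a$, that is, $\epsilon(u,a)=(a,0)$, and verify that the products are $A$-bilinear. This verification uses only that the module actions on $\M=\D$ are the adjoint ones and that $\M$ is abelian, so that $\epsilon x\cdot y=x\cdot\epsilon y=\epsilon(x\cdot y)$. With this action we obtain a deformation.

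Next I would match the equivalences. An isomorphism $\varphi$ of deformations over $A$ that commutes with $\pi$ is $A$-linear and reduces to the identity modulo $\epsilon$. It therefore has the form $\varphi(a\otimes1)=a\otimes1+g(a)\otimes\epsilon$ with $g$ $\C[\p]$-linear, and $\varphi$ is the identity on $\D\otimes\epsilon$. In the direct sum picture this is $(u,a)\mapsto(u+g(a),a)$, which is exactly an equivalence of extensions. Conversely, any such extension equivalence is $A$-linear for the $\epsilon$-action defined above, because it fixes the $\M$-component and preserves the $\D$-component.

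Thus the two constructions are mutually inverse bijections that respect the equivalence relations, and Theorem \ref{thm-ext} yields the bijection with $H^2_{cDiass}(\D,\D)$. The main obstacle I expect is the bookkeeping in the $A$-linearity check. There one must confirm that $A$-bilinearity forces the $\epsilon$-components to involve no extra terms beyond those coming from $f$, and that no hidden cocycle appears on $\D\otimes\epsilon$. I would handle this by expanding a general product $(a+b\epsilon)\dashv_\la(c+d\epsilon)$ directly.
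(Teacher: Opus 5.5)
Your proposal is correct and follows the same route as the paper. The paper also identifies a first-order deformation with the $\C[\p]$-split extension $0\to\D\xrightarrow{i}\D\otimes A\xrightarrow{\pi}\D\to 0$ by the adjoint representation (with $i(a)=a\otimes\epsilon$), notes that isomorphisms of deformations correspond to equivalences of extensions, and concludes by Theorem \ref{thm-ext}. Your version fills in details the paper leaves implicit: the $A$-bilinearity bookkeeping, the $\epsilon$-action $\epsilon(u,a)=(a,0)$ (that is, $\epsilon=i\circ\pi$) on an abstract extension, and the $A$-linearity of maps of the form $(u,a)\mapsto(u+g(a),a)$.
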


\section{Averaging operators on associative conformal algebras}\label{sec5}

In this section, we introduce averaging operators on associative conformal algebras. A more general notion could be a relative averaging operator on an associative conformal algebra with respect to a conformal bimodule. We investigate the close relationship between relative averaging operators and diassociative conformal algebras. Finally, we find a Maurer-Cartan characterization of (relative) averaging operators and define cohomology of such operators.

\medskip

Let $\mathcal{A}$ be an associative conformal algebra. Then a $\mathbb{C}[\partial]$-linear map $P:\mathcal{A} \rightarrow \mathcal{A}$ is called an {\bf averaging operator} on $\mathcal{A}$ if
\begin{align*}
P(a)_{\la}P(b)
=
P\bigl(P(a)_{\la}b\bigr)
=
P\bigl(a_{\la}P(b)\bigr),
\text{ for }a,b\in\mathcal{A}.
\end{align*}


We will now consider a generalization of averaging operators closely related to diassociative conformal algebras.

\begin{definition}
Let $\mathcal{A}$ be an associative conformal algebra and $\mathcal{M}$ be a conformal $\mathcal{A}$-bimodule. Then a {\bf relative averaging operator} on $\mathcal{A}$ with respect to the conformal $\mathcal{A}$-bimodule $\mathcal{M}$ is a $\mathbb{C}[\partial]$-linear map $P:\mathcal{M}\rightarrow\mathcal{A}$ that satisfies
\begin{align*}
P(u)_{\la}P(v)
=
P\bigl(P(u)_{\la} v\bigr)
=
P\bigl(u_{\la}P(v)\bigr), \text{ for } u, v \in\mathcal{M}.
\end{align*}
A datum $(\mathcal{A},\mathcal{M}, P)$ consisting of an associative conformal algebra $\mathcal{A}$, a conformal $\mathcal{A}$-bimodule $\M$, and a relative averaging operator $P: \M \rightarrow \mathcal{A}$ is referred to as a \textbf{relative averaging conformal algebra}.
\end{definition}

\begin{definition}\label{dmor}
Let $(\mathcal{A},\mathcal{M},P)$ and
$(\mathcal{A}',\mathcal{M}',P')$ be two relative averaging conformal algebras. Then a {\bf homomorphism} of relative averaging conformal algebras
from the former one to the latter one is a pair $(f,g)$, where
$f:\mathcal{A}\to\mathcal{A}'$ is a homomorphism of associative conformal algebras and
$g:\mathcal{M}\to\mathcal{M}'$ is a $\mathbb{C}[\partial]$-linear map satisfying
\begin{align*}
g(a_{\la}u)=f(a)_{\la}g(u), \quad 
g(u_{\la}a)=g(u)_{\la}f(a) ~~~ \text{ and }
~~~ f\circ P=P'\circ g, \text{ for } a\in\mathcal{A},\,u\in\mathcal{M}.
\end{align*}
\end{definition}

It is easy to see that the collection of all relative averaging conformal algebras and homomorphisms between them is a category. We denote this category by ${\mathbf{rAvgC}}$.

\begin{exam}
Let $\mathcal{A}$ be an associative conformal algebra and $P: \mathcal{A} \to \mathcal{A}$ be an averaging operator on it. Then $P$ can be regarded as a relative averaging operator on $\mathcal{A}$ with respect to the adjoint $\mathcal{A}$-bimodule. Hence, a relative averaging operator generalizes an averaging operator.
\end{exam}

\begin{exam}
Let $\mathcal{A}$ be an associative conformal algebra and $\mathcal{M}$ be a conformal $\mathcal{A}$-bimodule. Suppose $P:\mathcal{M} \rightarrow \mathcal{A}$ is a conformal $\mathcal{A}$-bimodule map. Then $P$ is a relative averaging operator on $\mathcal{A}$ with respect to $\M$, equivalently, $(\mathcal{A},\mathcal{M}, P)$ is a relative averaging conformal algebra.
\end{exam}

 \begin{exam}
     Let $\mathcal{A}$ be an associative conformal algebra. Then the $\C[\p]$-module $\underbrace{\mathcal{A}\oplus\cdots\oplus\mathcal{A}}_{n\text{ copies}}$ can be given a conformal $\mathcal{A}$-bimodule structure with the left and right $\la$-actions given by
     \begin{align*}
         a_{\la}(a_1, \ldots, a_n ) = (a_{\la}a_1, \ldots , a_{\la}a_{n}) ~~~  \text{ and } ~~~ (a_1, \ldots , a_{n})_{\la}a = ({a_1}_{\la}a, \ldots , {a_{n}}_{\la}a ), \text{ for } a, a_1, \ldots , a_n \in \mathcal{A}.
     \end{align*}
     Then it is easy to see that the map
     \begin{align*}
         P : \mathcal{A} \oplus \cdots \oplus \mathcal{A} \longrightarrow \mathcal{A}, \quad P\big( (a_1, \ldots, a_n) \big) = a_1 + \cdots + a_n, \text{ for }  a_1, \ldots , a_n \in \mathcal{A}
     \end{align*}
     is a relative averaging operator. In other words, $(\mathcal{A},\, \mathcal{A} \oplus \cdots \oplus \mathcal{A},\,P)$ is a relative averaging conformal algebra.
 \end{exam}

\begin{exam}
Let $\mathcal{A}$ be an associative conformal algebra and let $G$ be a finite group acting on $\mathcal{A}$ by conformal algebra automorphisms. 
Define a $\mathbb{C}[\partial]$-linear map $P:\mathcal{A}\rightarrow\mathcal{A}$ by $P(a) :=\sum \limits_{g\in G}g\cdot a$, for $a \in \mathcal{A}$. Then, for any $a,b \in \mathcal{A}$, we observe that
\begin{align*}
&P ( P(a) _\la b ) = \sum_{h \in G} h \cdot \big( \big( \sum_{g \in G} g \cdot a \big) _{\la} b \big) = \sum_{h ,g\in G} h \cdot \big ( (g \cdot a)_{\la}b \big) = \big(  \sum_{g \in G} g \cdot a \big)  _\la \big( \sum_{h \in G} h \cdot b  \big) = P(a) _\la P(b),\\
&P ( a _{\la} P(b) ) = \sum_{h \in G} h \cdot \big( a _{\la} \big(\sum_{g \in G} g \cdot b \big)  \big) = \sum_{h ,g \in G} h \cdot \big ( a _{\la} ( g \cdot b )\big) = \big( \sum_{h \in G} h \cdot a \big) _{\la} \big( \sum_{g \in G} g \cdot b \big) = P(a) _{\la} P(b).
\end{align*}
This shows that $P: \mathcal{A} \to \mathcal{A}$ is an averaging operator on $\mathcal{A}$. 
\end{exam}

We will now provide a characterization of (relative) averaging operators.
 First, we start with the following useful result.
 
\begin{proposition}\label{new-diass}
    Let $\mathcal{A}$ be an associative conformal algebra and $\M$ be a conformal $\mathcal{A}$-bimodule. Then the $\C[\p]$-module $\mathcal{A} \oplus \M$ inherits a diassociative conformal algebra structure with the $\la$-products:
     \begin{align*}
         (a,u)\dashv_{\la}(b,v) = (a_{\la}b, u_{\la}b) ~~~ \text{ and } ~~~ (a,u)\vdash_{\la}(b,v) = (a_{\la}b, a_{\la}v), \text{ for } a,b\in \mathcal{A}, u,v \in \M.
     \end{align*}
\end{proposition}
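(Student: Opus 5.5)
The proof is a direct verification, with no earlier machinery needed beyond the definitions of an associative conformal algebra and of a conformal $\mathcal{A}$-bimodule. First I will check conformal sesquilinearity of both $\la$-products on $\mathcal{A}\oplus\M$, with $\p$ acting diagonally. Each component is one of the conformal sesquilinear maps $a_\la b$, $u_\la b$ or $a_\la v$, so the conditions (\ref{ses1}) and (\ref{ses2}) hold componentwise. For instance,
\[
\p(a,u)\dashv_\la (b,v)=((\p a)_\la b,(\p u)_\la b)=-\la\,(a,u)\dashv_\la(b,v),
\]
and
\[
(a,u)\dashv_\la \p(b,v)=(a_\la \p b,u_\la \p b)=(\p+\la)(a,u)\dashv_\la(b,v).
\]
The same computation works for $\vdash_\la$.

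Next I will verify the five identities (\ref{equation 1})--(\ref{equation 5}) for $x=(a,u)$, $y=(b,v)$, $z=(c,w)$. In every identity the first component on both sides is the same expression, namely $a_\la(b_\mu c)=(a_\la b)_{\la+\mu}c$. This holds by conformal associativity of $\mathcal{A}$, because both $\dashv_\la$ and $\vdash_\la$ project onto $a_\la b$ in the $\mathcal{A}$-component. So all the content lies in the $\M$-components, which I will compute case by case.

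\begin{itemize}
\item (\ref{equation 1}): $x\dashv_\la(y\dashv_\mu z)$ has second component $u_\la(b_\mu c)$, and $(x\dashv_\la y)\dashv_{\la+\mu}z$ has second component $(u_\la b)_{\la+\mu}c$. These agree by the right-module axiom $u_\la(a_\mu b)=(u_\la a)_{\la+\mu}b$.
\item (\ref{equation 2}): $y\vdash_\mu z$ has first component $b_\mu c$, so $x\dashv_\la(y\vdash_\mu z)$ again has second component $u_\la(b_\mu c)$, and the same axiom applies.
\item (\ref{equation 3}): the second component of the left side is $a_\la(v_\mu c)$ and that of the right side is $(a_\la v)_{\la+\mu}c$. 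These agree by the bimodule compatibility axiom.
\item (\ref{equation 4}): the left side has second component $a_\la(b_\mu w)$. On the right, $(x\dashv_\la y)$ has first component $a_\la b$, so the second component is $(a_\la b)_{\la+\mu}w$. These agree by the left-module axiom.
\item (\ref{equation 5}): the left side has second component $a_\la(b_\mu w)$ and the right side $(a_\la b)_{\la+\mu}w$, so the left-module axiom applies again.
\end{itemize}

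The only point needing care is bookkeeping in (\ref{equation 2}) and (\ref{equation 4}). There one must notice that the $\M$-entry sitting in a slot which gets discarded, namely $v$ in (\ref{equation 2}) and the entry from $x\dashv y$ in (\ref{equation 4}), genuinely drops out, so that both sides reduce to a single bimodule axiom. I do not expect any real obstacle beyond this.
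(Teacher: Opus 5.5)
Your proposal is correct and follows the same route as the paper: you check conformal sesquilinearity componentwise, then verify (\ref{equation 1})--(\ref{equation 5}) by noting that the $\mathcal{A}$-components always agree by conformal associativity and that each $\M$-component reduces to one of the three bimodule axioms. You carry out all five cases explicitly, whereas the paper writes out (\ref{equation 1}) and (\ref{equation 2}) and says ``similarly'' for the rest. One small inaccuracy is in your closing remark: in (\ref{equation 2}) the $\M$-entry discarded by the outer $\dashv_\la$ is $b_\mu w$ (the second component of $y\vdash_\mu z$), not $v$; your actual computation of that case is nevertheless right.
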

\begin{proof}
    It is easy to see that both the $\la$-products defined above are conformal sesquilinear. Next, for any $(a, u),(b, v),(c, w) \in \mathcal{A} \oplus \M$, we have
    \begin{align*}     
 & (a, u) \dashv_{\la} \big ( (b, v)  \dashv_{\mu} (c, w) \big)= (a,u) \dashv_{\la} (b_{\mu}c, v_{\mu}c)=\big( a_{\la}(b_{\mu}c), u_{\la}(b_{\mu}c) \big),\\&
\big( (a, u) \dashv_{\la}  (b, v)  \big ) \dashv_{\la + \mu} (c, w) = (a_{\la}b,u_{\la}b) \dashv_{\la + \mu} (c, w)=\big( (a_{\la}b)_{\la+\mu}c, (u_{\la}b)_{\la+\mu}c \big),\\&
 (a, u) \dashv_{\la} \big ( (b, v)  \vdash_{\mu} (c, w) \big)= (a,u) \dashv_{\la} (b_{\mu}c, b_{\mu}w)=\big( a_{\la}(b_{\mu}c), u_{\la}(b_{\mu}c) \big).
    \end{align*}
    Thus, by the conformal associativity of $\mathcal{A}$ and the conformal $\mathcal{A}$-bimodule property of $\M$, we obtain
    \begin{align*}
        (a, u) \dashv_{\la} \big ( (b, v)  \dashv_{\mu} (c, w) \big)=\big( (a, u) \dashv_{\la}  (b, v)  \big ) \dashv_{\la + \mu} (c, w) =(a, u) \dashv_{\la} \big ( (b, v)  \vdash_{\mu} (c, w) \big).
    \end{align*}
    Similarly, one can show that
    \begin{align*}
    & \qquad \qquad  \qquad \quad (a, u) \vdash_{\la} \big ( (b, v)  \dashv_{\mu} (c, w) \big)=\big( (a, u) \vdash_{\la}  (b, v)  \big ) \dashv_{\la + \mu} (c, w),\\&
         (a, u) \vdash_{\la} \big ( (b, v)  \vdash_{\mu} (c, w) \big)=\big( (a, u) \vdash_{\la}  (b, v)  \big ) \vdash_{\la + \mu} (c, w) =\big( (a, u) \dashv_{\la}  (b, v)  \big ) \vdash_{\la + \mu} (c, w).
    \end{align*}
    This completes the proof.
\end{proof}

\begin{proposition}\label{prop-graph}
    Let $\mathcal{A}$ be an associative conformal algebra and $\M$ be a conformal $\mathcal{A}$-bimodule. Then a $\C[\p]$-linear map $P: \M \to \mathcal{A}$ is a relative averaging operator on $\mathcal{A}$ with respect to $\M$ if and only if the graph $Gr(P) = \{(P(u),u) :  u\in \M\}$ is a subalgebra of the diassociative conformal algebra $\mathcal{A} \oplus \M$ considered in Proposition \ref{new-diass}.
\end{proposition}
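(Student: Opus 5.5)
The plan is to compute the two $\la$-products of graph elements directly in the diassociative conformal algebra $\mathcal{A}\oplus\M$ of Proposition \ref{new-diass}, and then compare both components with the graph condition. First, $Gr(P)$ is a $\C[\p]$-submodule of $\mathcal{A}\oplus\M$: since $P$ is $\C[\p]$-linear, $\p(P(u),u)=(P(\p u),\p u)\in Gr(P)$, and $Gr(P)$ is clearly closed under addition and scalar multiplication. So the question is only whether $Gr(P)$ is closed under the two $\la$-products.

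For $u,v\in\M$, Proposition \ref{new-diass} gives
\begin{align*}
(P(u),u)\dashv_{\la}(P(v),v)=\big(P(u)_{\la}P(v),\, u_{\la}P(v)\big),\qquad
(P(u),u)\vdash_{\la}(P(v),v)=\big(P(u)_{\la}P(v),\, P(u)_{\la}v\big).
\end{align*}
Write $u_{\la}P(v)=\sum_j w_j\la^j$ with $w_j\in\M$. An element $\sum_j (x_j,w_j)\la^j$ of $(\mathcal{A}\oplus\M)[\la]$ lies in $Gr(P)[\la]$ exactly when $x_j=P(w_j)$ for every $j$. Because $P$ is $\C$-linear, this is the same as the polynomial identity $\sum_j x_j\la^j=P\big(\sum_j w_j\la^j\big)$, where $P$ is extended coefficientwise to $\M[\la]$.

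Consequently, $Gr(P)$ is closed under $\dashv_\la$ if and only if $P(u)_{\la}P(v)=P(u_{\la}P(v))$ for all $u,v\in\M$. Likewise, $Gr(P)$ is closed under $\vdash_\la$ if and only if $P(u)_{\la}P(v)=P(P(u)_{\la}v)$ for all $u,v\in\M$. Taken together, these two conditions are precisely the defining identities of a relative averaging operator, which proves both directions at once.

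The only point that needs care is the coefficientwise interpretation of membership in $Gr(P)[\la]$, so that $P$ can be applied to the $\la$-polynomial $u_{\la}P(v)$. This is routine, and I do not expect any real obstacle.
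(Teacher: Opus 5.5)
Your proposal is correct and follows essentially the same route as the paper: you compute $(P(u),u)\dashv_{\la}(P(v),v)$ and $(P(u),u)\vdash_{\la}(P(v),v)$ and read off the two averaging identities from membership in $Gr(P)[\la]$. Your explicit checks that $Gr(P)$ is a $\C[\p]$-submodule and that membership is tested coefficientwise in $\la$ are small points the paper leaves implicit.
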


\begin{proof}
  For any $u,v \in \M$, we observe that
  \begin{align*}
  (P(u),u) \dashv_{\la} (P(v),v)=(P(u)_{\la}P(v), u_{\la}P(v)).
  \end{align*}
  This is in ${Gr}(P)[\la]$ if and only if we have $P(u)_{\la}P(v)= P(u_{\la}P(v))$. Similarly, we have that the $\la$-product $(P(u),u) \vdash_{\la} (P(v),v)=(P(u)_{\la}P(v), P(u)_{\la}v)$ is in ${Gr}(P)[\la]$ if and only if $P(u)_{\la}P(v)= P(P(u)_{\la}v)$. In other words, ${Gr}(P)$ is a subalgebra of the diassociative conformal algebra $\mathcal{A} \oplus \M$ if and only if $P$ is a relative averaging operator.
\end{proof}

\subsection*{Functorial relations with diassociative conformal algebras.}
\begin{proposition} \label{prop*}
    Let $(\mathcal{A}, \M, P)$ be a relative averaging conformal algebra. Then the $\C[\p]$-module $\M$ carries a diassociative conformal algebra structure with the $\la$-products
    \begin{align}\label{induced-di}
        u \dashv_{\la} v := u_{\la} P(v) ~~~ \text{ and } ~~~ u \vdash_{\la} v := P(u)_{\la} v,  \text{ for } u, v \in \M.
    \end{align}
We denote this diassociative conformal algebra by $\M_{P}$.
\end{proposition}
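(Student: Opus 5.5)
The most economical route is through Proposition \ref{new-diass} and Proposition \ref{prop-graph}. Since $P$ is a relative averaging operator, the graph $Gr(P)=\{(P(u),u) : u\in\M\}$ is a subalgebra of the diassociative conformal algebra $\mathcal{A}\oplus\M$ of Proposition \ref{new-diass}. A subalgebra of a diassociative conformal algebra is itself one, with the restricted $\la$-products: it is a $\C[\p]$-submodule closed under both products, and the identities (\ref{ses1})--(\ref{equation 5}) hold elementwise. The map $\M\to Gr(P)$, $u\mapsto (P(u),u)$, is a $\C[\p]$-linear bijection, because $P$ is $\C[\p]$-linear. I will transport the structure along this bijection and check that the transported products are exactly (\ref{induced-di}).

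For that check, the proof of Proposition \ref{prop-graph} computes
\[
(P(u),u)\dashv_\la (P(v),v)=(P(u)_\la P(v),\, u_\la P(v)).
\]
By the averaging identity $P(u)_\la P(v)=P(u_\la P(v))$, this equals $(P(u\dashv_\la v),\, u\dashv_\la v)$. Similarly,
\[
(P(u),u)\vdash_\la (P(v),v)=(P(u)_\la P(v),\, P(u)_\la v)=(P(u\vdash_\la v),\, u\vdash_\la v).
\]
So the bijection intertwines the products of (\ref{induced-di}) with those of $Gr(P)$, and $\M_P$ is a diassociative conformal algebra. Conformal sesquilinearity of the new products follows either from the transport or directly: $P$ commutes with $\p$, and the bimodule actions are sesquilinear.

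As a sanity check, I would also verify the identities directly, since each is a one-line reduction. For (\ref{equation 1}),
\[
u_\la(P(v)_\mu P(w)) = (u_\la P(v))_{\la+\mu}P(w)
\]
by the bimodule axiom, after replacing $P(v_\mu P(w))$ by $P(v)_\mu P(w)$. Identity (\ref{equation 2}) uses $P(P(v)_\mu w)=P(v)_\mu P(w)$ instead. Identity (\ref{equation 3}) is the middle bimodule axiom. Identities (\ref{equation 4}) and (\ref{equation 5}) use the left action axiom together with $P(u_\la P(v))=P(u)_\la P(v)$ and $P(P(u)_\la v)=P(u)_\la P(v)$, respectively.

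I do not expect a genuine obstacle. The only point needing care is that both averaging identities are used: one in the $\dashv$ identities and the other in the $\vdash$ identities. I also need to confirm that the $\la$-shifts match the bimodule axioms as stated. Both are handled uniformly by the graph argument.
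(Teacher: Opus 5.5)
Your proposal is correct and follows the paper's own proof: both transport the diassociative conformal structure of the subalgebra $Gr(P)\subseteq\mathcal{A}\oplus\M$ (Propositions \ref{new-diass} and \ref{prop-graph}) along the $\C[\p]$-linear bijection $u\mapsto(P(u),u)$. Your explicit check that the transported $\la$-products are exactly (\ref{induced-di}), and your direct verification of (\ref{equation 1})--(\ref{equation 5}), fill in details the paper leaves implicit.
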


\begin{proof}
Since $\M$ is isomorphic to ${Gr}(P) = \{(P(u),u) :  u\in \M\}$ by a $\C [\partial]$-linear isomorphism, it follows from Proposition \ref{prop-graph} that $\M$ carries a diassociative conformal algebra structure. This structure is precisely given by (\ref{induced-di}).
\end{proof}

The above construction is functorial. Indeed, if
$(f,g):(\mathcal{A},\mathcal{M},P)\rightarrow(\mathcal{A}',\mathcal{M}',P')$
is a homomorphism of relative averaging conformal algebras, then the map $g: \mathcal{M}_{P}\rightarrow\mathcal{M}'_{P'}$ is a homomorphism of the induced diassociative conformal algebras. Consequently, there is a functor $\mathcal{F}:\mathbf{rAvgC} \rightarrow\mathbf{DiassC}$, which assigns to each relative averaging conformal algebra
$(\mathcal{A},\mathcal{M},P)$ the induced diassociative conformal algebra
$\mathcal{M}_{P}$, and to each relative averaging conformal algebra homomorphism $(f,g)$ the homomorphism $g$. We next construct a functor in the opposite direction.

\begin{thm}\label{thm-diass-to-avg}
For any diassociative conformal algebra $(\mathcal{D},\dashv_{\la},\vdash_{\la})$, there exists a relative averaging conformal algebra $(\mathcal{A}, \mathcal{D}, P)$ so that the induced diassociative conformal algebra $ \mathcal{D}_P$, as in Proposition~\ref{prop*}, coincides with the prescribed diassociative conformal algebra structure.
\end{thm}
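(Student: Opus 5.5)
Following the classical construction of Loday for diassociative algebras, I take $\mathcal{A}$ to be the quotient of $\D$ by the $\C[\p]$-submodule $\mathcal{I}$ spanned by all coefficients (in $\la$) of the elements $a\dashv_\la b - a\vdash_\la b$, $a,b\in\D$. Concretely, set
\begin{align*}
\mathcal{I}=\operatorname{span}_{\C[\p]}\{\, a\dashv_{(j)} b - a\vdash_{(j)} b \;:\; a,b\in\D,\ j\geq 0\,\},\qquad \mathcal{A}=\D/\mathcal{I},
\end{align*}
and let $P:\D\to\mathcal{A}$ be the canonical projection, which is $\C[\p]$-linear.

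The first step is to show that $\mathcal{I}$ is a two-sided ideal of $\D$ for both $\la$-products. Using (\ref{equation 1})--(\ref{equation 5}), one checks
\begin{align*}
c\dashv_\la(a\dashv_\mu b - a\vdash_\mu b)&=0, & (a\dashv_\la b-a\vdash_\la b)\vdash_{\la+\mu} c&=0,
\end{align*}
by (\ref{equation 1}),(\ref{equation 2}) and (\ref{equation 4}),(\ref{equation 5}). Moreover,
\begin{align*}
c\vdash_\la(a\dashv_\mu b-a\vdash_\mu b)&=(c\vdash_\la a)\dashv_{\la+\mu}b-(c\vdash_\la a)\vdash_{\la+\mu}b\in\mathcal{I}[\la,\mu],\\
(a\dashv_\la b-a\vdash_\la b)\dashv_{\la+\mu}c&=a\dashv_\la(b\dashv_\mu c)-a\vdash_\la(b\dashv_\mu c)\in\mathcal{I}[\la,\mu],
\end{align*}
where the second line uses (\ref{equation 3}). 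The cases with a $\dashv$ on the other side follow by the same identities. Extracting coefficients in $\la,\mu$ and using sesquilinearity to pass $\p$ through, one gets $x\dashv_\la i,\ i\dashv_\la x,\ x\vdash_\la i,\ i\vdash_\la x\in\mathcal{I}[\la]$ for all $i\in\mathcal{I}$. Handling the $\p$-multiples of generators via (\ref{ses1}),(\ref{ses2}) is the main bookkeeping obstacle. I expect it to be routine: since $\p a\dashv_{(j)}b$ and the like reduce to generators, it suffices to check the generators.

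Consequently, on $\mathcal{A}$ the two induced products coincide, and $\bar a_\la \bar b:=\overline{a\dashv_\la b}=\overline{a\vdash_\la b}$ is a well-defined, conformal sesquilinear product. It is associative by (\ref{equation 1}).

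Next I make $\D$ a conformal $\mathcal{A}$-bimodule by setting
\begin{align*}
\bar a_\la u:=a\vdash_\la u,\qquad u_\la\bar a:=u\dashv_\la a.
\end{align*}
For well-definedness one needs $i\vdash_\la u=0$ and $u\dashv_\la i=0$ for $i\in\mathcal{I}$; these are exactly the vanishing identities displayed above (together with their $\p$-extensions via sesquilinearity). The three bimodule axioms then read:
\begin{itemize}
\item $a\vdash_\la(b\vdash_\mu u)=(a\vdash_\la b)\vdash_{\la+\mu}u$, which is (\ref{equation 5});
\item $a\vdash_\la(u\dashv_\mu b)=(a\vdash_\la u)\dashv_{\la+\mu}b$, which is (\ref{equation 3});
\item $u\dashv_\la(a\dashv_\mu b)=(u\dashv_\la a)\dashv_{\la+\mu}b$, which is (\ref{equation 1}).
\end{itemize}

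Finally, I verify that $P$ is a relative averaging operator:
\begin{align*}
P(u)_\la P(v)=\overline{u\dashv_\la v},\qquad P(P(u)_\la v)=\overline{u\vdash_\la v},\qquad P(u_\la P(v))=\overline{u\dashv_\la v},
\end{align*}
and these agree modulo $\mathcal{I}$. The induced structure of Proposition~\ref{prop*} is $u\dashv_\la v=u_\la P(v)=u\dashv_\la v$ and $u\vdash_\la v=P(u)_\la v=u\vdash_\la v$, which recovers the original structure on $\D$.
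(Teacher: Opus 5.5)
Your proposal is correct and follows essentially the same route as the paper: take $\mathcal{A}=\D/\mathcal{I}$, where $\mathcal{I}$ is built from the differences $a\dashv_{(j)}b-a\vdash_{(j)}b$, define the bimodule actions $\bar a_\la u=a\vdash_\la u$ and $u_\la\bar a=u\dashv_\la a$, and let $P$ be the canonical projection. The one difference is that you take $\mathcal{I}$ to be the $\C[\p]$-span of these differences and show it is already an ideal, so it coincides with the paper's ``ideal generated by''; your vanishing identities $c\dashv_\la x=0=x\vdash_\la c$ for $x\in\mathcal{I}$ then prove the well-definedness of the bimodule actions, which the paper asserts without justification.
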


\begin{proof}
Let $\mathcal{I}$ be the ideal of the diassociative conformal algebra $\mathcal{D}$ generated by the set
\begin{align*}
\{\,a\dashv_{(j)}b-a\vdash_{(j)}b
\mid a,b\in \mathcal{D}, ~ \! j\ge0\,\}.
\end{align*}
Then the quotient $\mathcal{A}:=\mathcal{D}/\mathcal{I}$ is an associative conformal algebra with the $\la$-product
\begin{align*}
[a]_{\la}[b]
:=
\sum_{j\ge0}[a\dashv_{(j)}b] ~ \! \frac{\la^j}{j!}
=
\sum_{j\ge0}[a\vdash_{(j)}b] ~ \!\frac{\la^j}{j!}, \text{ for } a, b \in \mathcal{D}.
\end{align*}
Furthermore, $\mathcal{D}$ becomes a conformal $\mathcal{A}$-bimodule via the left and right $\la$-actions
\begin{align*}
[a]_{\la}b=a\vdash_{\la}b
~~~ \text{ and } ~~~
b_{\la}[a]=b\dashv_{\la}a,  \text{ for } a, b \in \mathcal{D}.
\end{align*}
Let $P:\mathcal{D} \rightarrow \mathcal{A}$, $P(a)=[a]$
be the canonical quotient map. Then, for all $a,b\in\mathcal{D}$, we have
\begin{align*}
P(a)_{\la}P(b)
=[a]_{\la}[b] 
=\sum_{j\ge0}[a\dashv_{(j)}b] ~ \! \frac{\la^j}{j!}
=P(a_{\la}P(b)),\\
P(a)_{\la}P(b)
=[a]_{\la}[b] 
=\sum_{j\ge0}[a\vdash_{(j)}b] ~ \! \frac{\la^j}{j!}
=P(P(a)_{\la}b).
\end{align*}
Hence, $P$ is a relative averaging operator on $\mathcal{A}$ with respect to the conformal $\mathcal{A}$-bimodule $\D$.

Finally, by Proposition~\ref{prop*}, if $\D_P = (\D, \dashv'_\la, \vdash'_\la)$ is the induced diassociative conformal algebra then
\begin{align*}
a\dashv'_{\la}b :=a_{\la}P(b) = a_\la [b] = a \dashv_\la b ~~~ \text{ and }
~~~ a\vdash'_{\la}b := P(a)_{\la}b = [a]_\la b = a \vdash_\la b, \text{ for } a, b \in \D.
\end{align*}
Hence the result follows.
\end{proof}

Let $\mathcal{D}$ and
$\mathcal{D}'$
be two diassociative conformal algebras and $ \varphi:\mathcal{D} \rightarrow \mathcal{D}'$
be a homomorphism of diassociative conformal algebras. Then it is easy to verify that the pair $(f,g = \varphi)$ is a homomorphism of relative averaging conformal algebras from
$(\mathcal{A},\mathcal{D},P)$ to $(\mathcal{A}',\mathcal{D}',P')$, where $f:\mathcal{A}=\mathcal{D}/\mathcal{I} \rightarrow \mathcal{A}'=\mathcal{D}'/\mathcal{I}'$ is given by $f([a])=[ \varphi(a)]$, for $a \in \mathcal{D}$. This construction yields a functor $\mathcal{G}:\mathbf{DiassC} \rightarrow\mathbf{rAvgC}$.

\begin{proposition}\label{prop-left}
    The functor $\GG$ constructed above is left adjoint to the functor $\mathcal{F}$.
Equivalently, for every diassociative conformal algebra $(\D,\dashv_{\la},\vdash_{\la})$ and every relative averaging conformal algebra $(\mathcal{A'},\mathcal{M'}, P')$, there is a natural bijection
\[
\operatorname{Hom}_{\mathbf{DiassC}}\! (\D,\, \mathcal{M'}_{P'} )
\;\cong\;
\operatorname{Hom}_{\mathbf{rAvgC}}\! ((\mathcal{D}/\mathcal{I},\D, P),\,
(\mathcal{A'},\mathcal{M'}, P') ).
\]
\end{proposition}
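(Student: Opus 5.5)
We construct the two maps of the adjunction explicitly and check that they are mutually inverse and natural.

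\emph{From right to left.} Let $(f,g):(\D/\mathcal{I},\D,P)\to(\mathcal{A}',\M',P')$ be a homomorphism in $\mathbf{rAvgC}$. We send it to $g:\D\to\M'_{P'}$. To see that $g$ is a homomorphism in $\mathbf{DiassC}$, use the description of the bimodule structure from Theorem \ref{thm-diass-to-avg}: $a\dashv_\la b=a_\la[b]$ and $a\vdash_\la b=[a]_\la b$. Then
\[
g(a\dashv_\la b)=g(a_\la P(b))=g(a)_\la f(P(b))=g(a)_\la P'(g(b)),
\]
which is exactly $g(a)\dashv_\la g(b)$ in $\M'_{P'}$. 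The same argument gives compatibility with $\vdash_\la$. This is the map induced by the functor $\mathcal{F}$.

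\emph{From left to right.} Let $\psi:\D\to\M'_{P'}$ be a homomorphism of diassociative conformal algebras. We put $g=\psi$, and the key step is to produce $f:\D/\mathcal{I}\to\mathcal{A}'$, which is forced to be $f([a])=P'(\psi(a))$ by the requirement $f\circ P=P'\circ g$.
\begin{itemize}
\item[(1)] \emph{$P'\circ\psi$ is a homomorphism from $(\D,\dashv_\la)$ and from $(\D,\vdash_\la)$ to $\mathcal{A}'$.} For instance,
\[
P'\psi(a\dashv_\la b)=P'(\psi(a)_\la P'\psi(b))=P'\psi(a)_\la P'\psi(b),
\]
using the relative averaging identity; the case of $\vdash_\la$ is identical.
\item[(2)] \emph{$P'\psi$ vanishes on $\mathcal{I}$.} By (1), $P'\psi(a\dashv_{(j)}b-a\vdash_{(j)}b)=0$ for all $j$, so $\ker(P'\psi)$ contains the generators of $\mathcal{I}$. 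It remains to see that $\ker(P'\psi)$ is an ideal of $\D$. It is a $\C[\p]$-submodule, and if $P'\psi(a)=0$ then $P'\psi(a\dashv_\la b)=P'\psi(a)_\la P'\psi(b)=0$, with the other three cases handled the same way. Hence $\mathcal{I}\subseteq\ker(P'\psi)$, and $f$ is well defined, $\C[\p]$-linear, and a homomorphism of associative conformal algebras.
\item[(3)] \emph{The bimodule compatibilities hold.} We need $\psi([a]_\la u)=f([a])_\la\psi(u)$, which unwinds to $\psi(a\vdash_\la u)=P'\psi(a)_\la\psi(u)=\psi(a)\vdash_\la\psi(u)$. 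This is exactly $\psi$ preserving $\vdash_\la$. Similarly, the right action corresponds to preservation of $\dashv_\la$.
\end{itemize}

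\emph{Inverse bijections and naturality.} Starting from $(f,g)$, the condition $f\circ P=P'\circ g$ together with surjectivity of $P$ gives $f([a])=P'g(a)$, so $f$ is determined by $g$. The two constructions are therefore mutually inverse. Naturality in $\D$ and in $(\mathcal{A}',\M',P')$ follows because both maps are built only by composing with $g$ and with $P'$.

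The main obstacle is step (2): showing that $f$ is well defined, i.e.\ that $\mathcal{I}$ is killed. The key point is that the kernel of a map satisfying (1) is automatically a two-sided ideal for both $\la$-products, and this rests on the averaging identity $P'(u)_\la P'(v)=P'(u_\la P'(v))=P'(P'(u)_\la v)$.
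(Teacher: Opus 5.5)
Your proposal is correct and follows essentially the same route as the paper. You set $f([a])=P'(\psi(a))$, check that it is an associative conformal homomorphism compatible with both $\la$-actions, and conversely show that the second component of an $\mathbf{rAvgC}$-morphism preserves $\dashv_\la$ and $\vdash_\la$. Your step (2), which shows that $\ker(P'\circ\psi)$ is an ideal containing the generators of $\mathcal{I}$ so that $f$ is well defined on $\D/\mathcal{I}$, fills in a point the paper's proof passes over silently; your remark on naturality likewise covers something the paper omits.
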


\begin{proof}
Let $\varphi \in \operatorname{Hom}_{\mathbf{DiassC}}\! (\D,\, \mathcal{M'}_{P'} )$. Then we define a $\C[\p]$-linear map $f : \mathcal{D}/\mathcal{I} \to \mathcal{A}'$ by $ f ([a]) := P'( \varphi (a)),\text{ for } [a]\in  \mathcal{D}/\mathcal{I}$. For any $[a],[b]\in \mathcal{D}/\mathcal{I}$, we have
\begin{align*}
f ([a]_{\la}[b])
=f \big( \sum_{j\geq 0}[a \dashv_{(j)}b] ~ \! \frac{\la^j}{j!} \big)
=P' (\varphi(a) \dashv_{\la} \varphi(b) )
=P'(\varphi(a))_{\la} P'( \varphi(b))
=f([a])_{\la} f([b]).
\end{align*}
This shows that $f$ is a homomorphism of associative conformal algebras. Moreover, for any $[a] \in \mathcal{D}/\mathcal{I}$ and $b \in \D$, we have
\begin{align*}
&\varphi ([a]_{\la}b)
=\varphi (a \vdash_{\la} b)
=\varphi(a) \vdash_{\la} \varphi(b)
=P' (\varphi(a))_{\la} \varphi(b)
=f([a])_{\la} \varphi(b),\\
&\varphi (b_{\la}[a])
=\varphi (b \dashv_{\la} a)
=\varphi(b) \dashv_{\la} \varphi(a)
=\varphi(b)_{\la} P' (\varphi(a))
= \varphi(b) _{\la} f([a]).
\end{align*}
Furthermore, we have $f \circ P = P' \circ \varphi$. Hence, $(f,\varphi) \in \operatorname{Hom}_{\mathbf{rAvgC}}\! ((\mathcal{D}/\mathcal{I},\D, P),\,
(\mathcal{A'},\mathcal{M'}, P') )$.

\medskip

On the other hand, if $(f,\varphi) \in \operatorname{Hom}_{\mathbf{rAvgC}}\! ((\mathcal{D}/\mathcal{I},\D, P),\,
(\mathcal{A'},\mathcal{M'}, P') )$, then for any $a,b \in \D$, we have
\begin{align*}
    &\varphi( a \dashv_\la b) = \varphi(a _{\la} [b]) = \varphi(a)_{\la} f([b])= \varphi(a)_{\la} f(P(b)) = \varphi(a)_{\la}P'(\varphi(b))= \varphi(a) \dashv_\la \varphi(b),\\
    & \varphi( a \vdash_\la b) = \varphi([a] _{\la} b) = f([a])_{\la} \varphi(b)= f(P(a))_{\la}\varphi(b) = P'(\varphi(a))_{\la}\varphi(b)= \varphi(a) \vdash_\la \varphi(b).
\end{align*}
This shows that $\varphi \in \operatorname{Hom}_{\mathbf{DiassC}}\! (\D,\, \mathcal{M'}_{P'} )$. Also, one can easily verify that these two correspondences are inverses of each other.
\end{proof}

\subsection*{Maurer-Cartan characterization and cohomology of relative averaging operators} Here we introduce the cohomology of a relative averaging operator defined on an associative conformal algebra with respect to a conformal bimodule. 

\medskip

Let $\mathcal{A}$ be an associative conformal algebra and $\M$ be a conformal $\mathcal{A}$-bimodule. We endow the direct sum $\D= \mathcal{A} \oplus \M$ with the diassociative conformal algebra structure given in Proposition~\ref{new-diass}. Next, we consider the graded Lie algebra $ \mathfrak{g} =\big( \bigoplus \limits_{n=0}^{\infty}C_{cY}^{n+1}(\D,\D),[\,,\,] \big)$ associated to the $\C [\partial]$-module $\D$.
Note that the above diassociative conformal algebra structure on $\D$ determines a Maurer-Cartan element $\pi \in C_{cY}^{2}(\D, \D)$, that is, $[\pi, \pi]=0$. Explicitly, $\pi$ is given by 
\begin{align*}
    &\pi_ \la \big (\tikz[baseline=-0.5ex,scale=0.20]{
  \draw (-1,1)--(0,0);
  \draw (1,1)--(0,0);
  \draw (0,0)--(0,-1);
  \draw (0.5,0.5)--(0,1);
}; (a,u), (b,v)\big)  := (a,u) \dashv_{\la } (b,v) = ( a _ \la b, u _\la b),\\
& \pi_ \la \big( \tikz[baseline=-0.5ex,scale=0.20]{
  \draw (-1,1)--(0,0);
  \draw (1,1)--(0,0);
  \draw (0,0)--(0,-1);
  \draw (-0.5,0.5)--(0,1);
}; (a,u), (b,v) \big):= (a,u)\vdash_{\la} (b,v) = (a _ \la b, a_\la v),
\end{align*}
for $(a,u), (b,v) \in \D$.  Moreover, it is easy to see that the graded subspace $\mathfrak{h}=\bigoplus \limits_{n = 0} ^{\infty}C_{cY}^{n+1}(\M,\mathcal{A})$ is an abelian graded Lie subalgebra of $\mathfrak{g}$. Let $p: \mathfrak{g} \rightarrow \mathfrak{g}$ be the projection onto the subspace $\mathfrak{h}$. Then $\rm{ker}(p) \subset \mathfrak{g}$ is a graded Lie subalgebra and $\rm{im}(p)=\mathfrak{h}$. Moreover, $\pi \in \rm{ker}(p)$ and hence, by Kosmann-Schwarzbach’s derived bracket construction \cite{kosmann}, the shifted graded space $s\mathfrak{h}=\bigoplus \limits_{n = 1} ^{\infty}C_{cY}^{n}(\M,\mathcal{A})$ carries a graded Lie algebra structure with the bracket 
 \begin{align*}
\llbracket f, g \rrbracket := (-1)^{m} ~ \! [[\pi, f],g],
\end{align*}
for $f \in  C_{cY} ^{m }(\M,\mathcal{A})$ and $g \in  C_{cY} ^{n }(\M,\mathcal{A})$. 
Explicitly, this bracket is given by
\begin{align} \label{equation x}
 &{\llbracket f, g \rrbracket}_{\la_{1}, \ldots ,\la_{m+n-1}} \left(y ; u_{1}, \ldots, u_{m+n} \right) \nonumber \\
& = \sum_{i=1}^{m}(-1)^{(i-1) n} f_{\la_{1}, \ldots , \la_{i-1}, \la_{i} +\cdots +\la_{i+n}, \la_{i+n+1}, \ldots, \la_{m+n-1}} \big( R_{0}^{m ; i, n+1}(y) ; u_{1}, \ldots, u_{i-1}, \nonumber \\
& \quad \pi_{\la_{i}+\cdots +\la_{i+n-1}}\big(R_{0}^{2 ; 1, n} R_{i}^{m; i, n+1}(y) ; g_{\la_{i}, \ldots, \la_{i+n-2}}\big(R_{1}^{2 ; 1, n} R_{i}^{m ; i, n+1}(y) ; u_{i}, \ldots, u_{i+n-1}\big), u_{i+n}\big), \ldots, u_{m+n} \big)  \nonumber \\
& \quad -\sum_{i=1}^{m}(-1)^{i n} ~ \! f_{\la_{1}, \ldots , \la_{i-1}, \la_{i} +\cdots +\la_{i+n}, \la_{n+i+1}, \ldots, \la_{m+n-1}} \big(R_{0}^{m ; i, n+1}(y) ; u_{1}, \ldots, u_{i-1},  \nonumber\\
& \quad \quad \pi_{\la_{i}+ \cdots +\la_{i+n-1}} \big(R_{0}^{2 ; 2, n} R_{i}^{m ; i, n+1}(y) ; u_{i}, g_{\la_{i}, \ldots ,\la_{i+n-2}} \big(R_{2}^{2 ; 2, n} R_{i}^{m ; i, n+1}(y) ; u_{i+1}, \ldots, u_{i+n}\big)\big),\ldots, u_{m+n}\big) \nonumber \\
&\quad \quad -(-1)^{m n} \sum_{i=1}^{n}(-1)^{(i-1) m} ~ \! g_{\la_{1}, \ldots , \la_{i-1},\la_{i}+\cdots+\la_{i+m},\la_{i+m+1}, \ldots ,\la_{m+n-1}} \big(R_{0}^{n ; i, m+1}(y) ; u_{1}, \ldots, u_{i-1},\nonumber \\
& \quad \pi_{\la_{i}+\cdots+\la_{i+m-1}} \big(R_{0}^{2 ; 1, m} R_{i}^{n ; i, m+1}(y) ; f_{\la_{i}, \ldots ,\la_{i+m-2}} \big(R_{1}^{2 ; 1, m} R_{i}^{n ; i, m+1}(y) ; u_{i}, \ldots, u_{i+m-1}\big), u_{i+m} \big), \ldots, u_{m+n} \big) \nonumber \\
& \quad \quad +(-1)^{m n} \sum_{i=1}^{n}(-1)^{i m} g_{\la_{1}, \ldots , \la_{i-1},\la_{i}+\cdots+\la_{i+m},\la_{i+m+1}, \ldots ,\la_{m+n-1}} \big(R_{0}^{n ; i, m+1}(y) ; u_{1}, \ldots, u_{i-1}, \nonumber \\
& \quad \quad  \pi_{\la_{i}+\cdots+\la_{i+m-1}} \big(R_{0}^{2 ; 2, m} R_{i}^{n ; i, m+1}(y) ; u_{i}, f_{\la_{i},\ldots,\la_{i+m-2}}\big(R_{2}^{2 ; 2, m} R_{i}^{n ; i, m+1}(y) ; u_{i+1}, \ldots, u_{i+m}\big)\big), \ldots, u_{m+n} \big)  \nonumber \\
&\quad \quad +(-1)^{m n} \pi_{\la_{1}+\cdots+\la_{m}} \big(R_{0}^{2 ; 1, m} R_{0}^{m+1 ; m+1, n}(y) ; \nonumber \\
& \quad \quad  f_{\la_{1},\ldots,\la_{m-1}} \big(R_{1}^{2 ; 1, m} R_{0}^{m+1 ; m+1, n}(y) ;u_{1}, \ldots, u_{m}\big), g_{\la_{m+1},\ldots,\la_{m+n-1}}\big(R_{m+1}^{m+1 ; m+1, n}(y) ; u_{m+1}, \ldots, u_{m+n}\big) \big) \nonumber \\
& \quad -\pi_{\la_{1}+\cdots+\la_{n}}\big(R_{0}^{2 ; 1, n} R_{0}^{n+1 ; n+1, m}(y) ; \nonumber \\
& \quad \quad  g_{\la_{1},\dots,\la_{n-1}} \big(R_{1}^{2 ; 1, n} R_{0}^{n+1 ; n+1, m}(y) ; u_{1}, \ldots, u_{n}\big), f_{\la_{n+1}, \ldots, \la_{m+n-1}} \big(R_{n+1}^{n+1 ; n+1, m}(y) ; u_{n+1}, \ldots, u_{m+n}\big) \big), 
\end{align}
for $y \in Y_{m+n}$ and $u_{1}, \ldots, u_{m+n} \in \M$.
Then we have the following result.

\begin{thm}\label{thm-mc-avg}
    Let $\mathcal{A}$ be an associative conformal algebra and $\M$ be a conformal $\mathcal{A}$-bimodule. Then the pair $\big( \bigoplus \limits_{n = 1} ^{\infty} C_{cY}^{n}(\M,\mathcal{A}),\llbracket \, , \,  \rrbracket\big)$ is a graded Lie algebra. Moreover, a $\C[\p]$-linear map $P : \M \to \mathcal{A}$ is a relative averaging operator if and only if $P \in  C_{cY} ^{1}(\M,\mathcal{A})$ is a Maurer-Cartan element of the graded Lie algebra $\big( \bigoplus \limits_{n = 1} ^{\infty} C_{cY}^{n}(\M,\mathcal{A}),\llbracket \, , \,  \rrbracket\big)$.
\end{thm}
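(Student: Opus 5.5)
The plan is to deduce both assertions from the derived bracket construction already set up before the statement. The graded Lie algebra part then follows once its hypotheses are checked. The Maurer-Cartan part reduces to evaluating $\llbracket P,P\rrbracket$ on the two trees of $Y_2$.

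For the first assertion I would verify the conditions of Kosmann-Schwarzbach's theorem in \cite{kosmann}, in the version for a graded Lie algebra with a complementary pair $\mathfrak{g}=\ker(p)\oplus\mathfrak{h}$ and $\mathfrak{h}$ abelian.
\begin{enumerate}
\item[(i)] $\mathfrak{h}=\bigoplus_n C^{n+1}_{cY}(\M,\mathcal{A})$ is an abelian subalgebra. Each such cochain is viewed as an element of $C_{cY}^{n+1}(\D,\D)$ by precomposing with the projection $\D\to\M$ and postcomposing with the inclusion $\mathcal{A}\hookrightarrow\D$. Since its output lies in $\mathcal{A}$ and its inputs are read only from $\M$, every partial composition $f\circ_i g$ of two such cochains vanishes. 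Hence $[f,g]=0$ by (\ref{brac-form}).
\item[(ii)] $\ker(p)$ is a graded Lie subalgebra. Decompose $C_{cY}^{n}(\D,\D)$ by which inputs come from $\mathcal{A}$ or $\M$ and where the output lands. Then $\ker(p)$ consists of all components except those with all inputs in $\M$ and output in $\mathcal{A}$. I would check by this bookkeeping that $\ker(p)$ is closed under every $\circ_i$: composing two maps, each not of type "all-$\M$ to $\mathcal{A}$", cannot produce that type, because the inner output feeds an input slot of the outer map.
\item[(iii)] $\pi\in\ker(p)$ with $[\pi,\pi]=0$. This holds because $\pi$ has no component taking $\M\otimes\M$ into $\mathcal{A}$, and $[\pi,\pi]=0$ by Proposition \ref{1-1} and Proposition \ref{new-diass}.
\end{enumerate}
With these in place, the derived bracket $\llbracket f,g\rrbracket=(-1)^m[[\pi,f],g]$ makes $s\mathfrak{h}$ a graded Lie algebra, and it lands in $\mathfrak{h}$. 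The explicit formula (\ref{equation x}) follows by expanding the two nested brackets. Only the terms in which $\pi$ receives exactly one $\mathcal{A}$-valued input survive, since $\pi$ kills $\M\otimes\M$ and its $\mathcal{A}$-component kills any $\M$ input.

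For the Maurer-Cartan statement, take $P\in C^1_{cY}(\M,\mathcal{A})$ and specialise (\ref{equation x}) to $m=n=1$, $y\in Y_2$. Then compute $\llbracket P,P\rrbracket_\la(y;u,v)$ for each of the two trees separately.
\begin{itemize}
\item On the tree corresponding to $\dashv$, the only surviving $\pi$-terms should be $u_\la P(v)$ and $P(u)_\la P(v)$. One expects, up to an overall factor of $2$ and a sign, $\llbracket P,P\rrbracket = P(u)_\la P(v)-P(u_\la P(v))$.
\item On the tree corresponding to $\vdash$, the analogous computation should give $P(u)_\la P(v)-P(P(u)_\la v)$.
\end{itemize}
Hence $\llbracket P,P\rrbracket=0$ exactly when both averaging identities hold. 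This matches the graph characterisation in Proposition \ref{prop-graph}, which serves as a consistency check.

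The main obstacle is the bookkeeping in the last step. One must track correctly which face-map composites $R^{m;i,n}_0$ and $R^{m;i,n}_i$ send each tree of $Y_2$ to which tree of $Y_1$ and $Y_2$, so that the right $\pi$-component ($\dashv$ versus $\vdash$) is selected in each term. One must also confirm that the signs make the two $\pi(P(u),P(v))$ contributions add rather than cancel. My first approach would be to compute $[\pi,P]$ directly on $Y_2$; it should be the coboundary-like expression $\pm\big(P(u)\bullet_\la v+u\bullet_\la P(v)-P(u\bullet_\la v)\big)$ restricted to the $\M$-part. I would then bracket the result with $P$ again, which avoids the full formula (\ref{equation x}).
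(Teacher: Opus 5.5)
Your proposal follows the paper's proof. The graded Lie structure is the derived-bracket construction with $\mathfrak{h}$ abelian, $\ker(p)$ a subalgebra and $\pi\in\ker(p)$; you actually verify these hypotheses, which the paper only asserts. The Maurer--Cartan part is read off from $\tfrac12\llbracket P,P\rrbracket$ on the two trees of $Y_2$, which gives $P(P(u)_\la v)-P(u)_\la P(v)$ and $P(u_\la P(v))-P(u)_\la P(v)$, just as you predict.

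One wording slip: when expanding (\ref{equation x}), the terms $\pi(f(\cdots),g(\cdots))$, where $\pi$ receives two $\mathcal{A}$-valued inputs, also survive. These are what produce $P(u)_\la P(v)$, so ``exactly one'' should read ``at least one''.
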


\begin{proof}
    The first part follows from the previous discussion. To prove the second part, we observe that $C^{1}_{cY}(\M,\mathcal{A}) \cong \mathrm{Hom}_{\C[\p]}(\M,\mathcal{A})$. Therefore, any $\C[\p]$-linear map $P: \M \to \mathcal{A}$ can be regarded as an element of $ C^1(\M,\mathcal{A})$ via the identification $P ( \tikz[baseline=-0.5ex,scale=0.20]{
  \draw (-1,1)--(0,0);
  \draw (1,1)--(0,0);
  \draw (0,0)--(0,-1);
}; u )= P(u)$, for $u \in \M$.
Then, it follows from (\ref{equation x}) that
\begin{align*}
\frac{1}{2} {\llbracket P, P \rrbracket}_{\la} (  \tikz[baseline=-0.5ex,scale=0.20]{
  \draw (-1,1)--(0,0);
  \draw (1,1)--(0,0);
  \draw (0,0)--(0,-1);
  \draw (-0.5,0.5)--(0,1);
}; u, v ) & =  P \big( \underbrace{\pi_{\la} ( \tikz[baseline=-0.5ex,scale=0.20]{
  \draw (-1,1)--(0,0);
  \draw (1,1)--(0,0);
  \draw (0,0)--(0,-1);
  \draw (-0.5,0.5)--(0,1);
} ; P(u), v )}_{=P(u)_{\la}v}\big)+P \big( \underbrace{\pi_{\la}(\tikz[baseline=-0.5ex,scale=0.20]{
  \draw (-1,1)--(0,0);
  \draw (1,1)--(0,0);
  \draw (0,0)--(0,-1);
  \draw (-0.5,0.5)--(0,1);
} ; u, P(v) )}_{=0} \big)-\underbrace{\pi_{\la} (\tikz[baseline=-0.5ex,scale=0.20]{
  \draw (-1,1)--(0,0);
  \draw (1,1)--(0,0);
  \draw (0,0)--(0,-1);
  \draw (-0.5,0.5)--(0,1);
} ; P(u), P(v) )}_{P(u) _{\la} P(v)}\\&
= P (P(u)_{\la}  v) -P(u)_{\la} P(v) ,
\end{align*}
\begin{align*}
\frac{1}{2} {\llbracket P, P \rrbracket}_{\la} ( \tikz[baseline=-0.5ex,scale=0.20]{
  \draw (-1,1)--(0,0);
  \draw (1,1)--(0,0);
  \draw (0,0)--(0,-1);
  \draw (0.5,0.5)--(0,1);
} ; u, v ) &=  P \big(\underbrace{\pi_{\la} ( \tikz[baseline=-0.5ex,scale=0.20]{
  \draw (-1,1)--(0,0);
  \draw (1,1)--(0,0);
  \draw (0,0)--(0,-1);
  \draw (0.5,0.5)--(0,1);
} ; P(u), v )}_{=0}\big) + P \big(\underbrace{\pi_{\la} ( \tikz[baseline=-0.5ex,scale=0.20]{
  \draw (-1,1)--(0,0);
  \draw (1,1)--(0,0);
  \draw (0,0)--(0,-1);
  \draw (0.5,0.5)--(0,1);
} ; u, P(v) )}_{=u_{\la} P(v)} \big) - \underbrace{\pi_{\la} (\tikz[baseline=-0.5ex,scale=0.20]{
  \draw (-1,1)--(0,0);
  \draw (1,1)--(0,0);
  \draw (0,0)--(0,-1);
  \draw (0.5,0.5)--(0,1);
} ;P(u), P(v) )}_{P(u) _{\la} P(v)} \\&
= P  (u _{\la}  P(v) ) -P(u) _{\la} P(v),
\end{align*}
for $u, v \in \M$. Hence, $P$ is a Maurer-Cartan element (i.e. $\llbracket P, P \rrbracket = 0$) if and only if $P$ is a relative averaging operator on $\mathcal{A}$ with respect to $\M$.
\end{proof}

\medskip
A relative averaging operator $P$ induces a map $\delta_P : C_{cY}^{n}(\M, \mathcal{A}) \rightarrow C_{cY}^{n+1}(\M, \mathcal{A})$, for $n \geq 1$, given by
\begin{align*}
    \delta_P (f) = \llbracket P, f \rrbracket, \text{ for } f \in  C_{cY} ^{n \geq 1} (\M,\mathcal{A}).
\end{align*}
Then it follows from the above theorem that $\delta_P^2 = 0$. Hence, $\{ C^\bullet_{cY} (\M, \mathcal{A}), \delta_P \}$ is a cochain complex. Next, we show that the map $\delta_P$ can be seen as the coboundary operator of the induced diassociative conformal algebra $\M_P$ with coefficients in a suitable representation.
We begin with the following straightforward result.

\begin{proposition}\label{ind-rep}
Let $\mathcal{A}$ be an associative conformal algebra, $\M$ be a conformal $\mathcal{A}$-bimodule, and $P: \M \to \mathcal{A}$ be a relative averaging operator. Then the induced diassociative conformal algebra $\M_P$ has a representation on $\mathcal{A}$ with the left and right $\la$-actions given by
    \begin{align*}
        &\dashv_{\la} ~ \! : \M \times \mathcal{A} \to \mathcal{A}[\la], \quad u \dashv_{\la} a := P(u)_{\la}a - P(u_{\la} a),\\
       & \vdash_{\la} ~ \!:\ M \times \mathcal{A} \to \mathcal{A}[\la], \quad u \vdash_{\la} a := P(u)_{\la}a, \\
        &\dashv_{\la} ~ \! : \mathcal{A} \times  \M \to \mathcal{A}[\la], \quad a \dashv_{\la} u := a_{\la} P(u),\\ 
        &\vdash_{\la} ~ \! :\mathcal{A} \times  \M \to \mathcal{A} [\la], \quad a \vdash_{\la} u := a_{\la} P(u) - P(a_{\la} u).
    \end{align*}
\end{proposition}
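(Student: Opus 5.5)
The plan is a direct verification, organized so that the fifteen representation identities follow from the structure already in hand.

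First, conformal sesquilinearity of the four $\la$-actions is immediate. The maps $P$ and $\C[\p]$-linear, and $\cdot_\la\cdot$ together with the bimodule actions are conformal sesquilinear, so each of the four formulas inherits sesquilinearity.

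For the compatibility identities, I would use a semidirect-product trick rather than checking fifteen identities by hand. A cleaner route is to place everything inside the diassociative conformal algebra $\mathcal{A}\oplus\M$ of Proposition \ref{new-diass}. Consider the $\C[\p]$-linear automorphism
\[
\Phi(a,u)=(a+P(u),u).
\]
Transport the structure of Proposition \ref{new-diass} along $\Phi$, or equivalently twist by $P$. Under the identification $\Phi$, the subspace $\mathcal{A}\oplus 0$ should become an ideal-like complement of the subalgebra $Gr(P)\cong\M_P$. Concretely, I would compute the $\la$-products of $(0,u)$ and $(a,0)$ in the twisted structure,
\[
(a,u)\dashv'_\la(b,v):=\Phi^{-1}\big(\Phi(a,u)\dashv_\la\Phi(b,v)\big),
\]
and similarly for $\vdash'_\la$. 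The aim is to show three things:
\begin{itemize}
\item[(i)] on $0\oplus\M$ these products give $\M_P$, using the averaging identities;
\item[(ii)] on $\mathcal{A}\oplus 0$ they give $a_\la b$ in both slots;
\item[(iii)] the mixed products have $\mathcal{A}$-components exactly the four formulas in the statement, plus possibly $\M$-components.
\end{itemize}

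For example, $\Phi(0,u)\dashv_\la\Phi(a,0)=(P(u)_\la a,u_\la a)$. Applying $\Phi^{-1}$ gives
\[
(P(u)_\la a-P(u_\la a),\,u_\la a),
\]
whose $\mathcal{A}$-part matches $u\dashv_\la a$. Similarly, $\Phi(a,0)\vdash_\la\Phi(0,u)=(a_\la P(u),a_\la u)$, which yields $a_\la P(u)-P(a_\la u)$, matching $a\vdash_\la u$.

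The main obstacle is that the mixed products have nonzero $\M$-components (e.g.\ $u_\la a$). This means $\mathcal{A}$ is not literally an abelian ideal, so the semidirect-product criterion does not apply verbatim.

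To handle this, I would first try the following. Reinterpret the four actions as the $\mathcal{A}$-component of the transported structure. Then, for each identity with exactly one entry in $\mathcal{A}$, read off the $\mathcal{A}$-component of the corresponding diassociative identity of $\Phi^{*}(\mathcal{A}\oplus\M)$. The extra $\M$-components feed back into $\mathcal{A}$ only through $P$ applied after further products, so they must be tracked carefully.

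If this bookkeeping fails, the fallback is direct computation. I would check the fifteen identities grouped by the position of the $\mathcal{A}$-entry, five for each position. Each check reduces, after expanding, to:
\begin{itemize}
\item conformal associativity in $\mathcal{A}$;
\item the three bimodule associativities;
\item the two averaging identities $P(u)_\la P(v)=P(P(u)_\la v)=P(u_\la P(v))$, used to cancel terms of the form $P(\cdot)_\la P(\cdot)$.
\end{itemize}
The two $P$-corrected actions, $u\dashv_\la a$ and $a\vdash_\la u$, are where cancellations occur. I expect the identities mixing them (e.g.\ $u\vdash_\la(v\dashv_\mu a)$ versus $(u\vdash_\la v)\dashv_{\la+\mu}a$) to be the most delicate. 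Each still reduces to $P(P(u)_\la v)=P(u)_\la P(v)$ together with the bimodule axiom $a_\la(u_\mu b)=(a_\la u)_{\la+\mu}b$.
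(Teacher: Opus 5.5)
Your proposal is correct. Your fallback, direct verification of sesquilinearity and the fifteen identities, is what the paper intends: it calls the result straightforward and gives no proof. Your sample check of $u\vdash_\la(v\dashv_\mu a)=(u\vdash_\la v)\dashv_{\la+\mu}a$ is correct.

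Your primary route is genuinely different, and it goes through without the extra bookkeeping you were worried about. The ``main obstacle'' is not an obstacle.

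\textbf{Why the $\M$-components never feed back.} In the transported structure, the kernel of the projection onto $\mathcal{A}$ is $0\oplus\M=\Phi^{-1}(Gr(P))$. By Proposition~\ref{prop-graph} this is a subalgebra; you essentially verify this in your step (i). Consequently, if $e$ lies in this subalgebra and $x$ is arbitrary, the $\mathcal{A}$-components of $e\dashv'_\la x$, $e\vdash'_\la x$, $x\dashv'_\la e$ and $x\vdash'_\la e$ depend only on the $\mathcal{A}$-component of $x$. In each of the fifteen identities (two entries from $\M$, one from $\mathcal{A}$), every product has at least one factor in $0\oplus\M$. So each required identity is just the $\mathcal{A}$-component of the corresponding diassociative identity of the transported algebra. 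The only place an $\M$-component could contribute is a product of two elements of $0\oplus\M$. Its $\mathcal{A}$-component is $P(m)_\la P(v)-P(m_\la P(v))$ or $P(m)_\la P(v)-P(P(m)_\la v)$, both of which vanish by the averaging identities.

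\textbf{The same argument without $\Phi$.} For any subalgebra $\E$ of a diassociative conformal algebra $\D$, the quotient $\C[\p]$-module $\D/\E$ is a representation of $\E$ via $e\dashv_\la[x]=[e\dashv_\la x]$, $[x]\vdash_\la e=[x\vdash_\la e]$, and so on. These actions are well defined because $\E$ is closed under both $\la$-products, and the fifteen identities are images of identities in $\D$. Now take:
\begin{itemize}
\item $\D=\mathcal{A}\oplus\M$ as in Proposition~\ref{new-diass};
\item $\E=Gr(P)\cong\M_P$;
\item the identification $\D/Gr(P)\cong\mathcal{A}$ via $(a,u)\mapsto a-P(u)$.
\end{itemize}
The four formulas then drop out immediately. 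For example, $[(P(u),u)\dashv_\la(a,0)]=[(P(u)_\la a,\,u_\la a)]\mapsto P(u)_\la a-P(u_\la a)$, and $[(a,0)\vdash_\la(P(u),u)]=[(a_\la P(u),\,a_\la u)]\mapsto a_\la P(u)-P(a_\la u)$.

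\textbf{Comparison.} This route is computation-free and explains where the $P$-corrections in $u\dashv_\la a$ and $a\vdash_\la u$ come from. The direct check is more elementary, but it requires writing out all fifteen identities.
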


\begin{proposition}\label{two-coboundary}
    Let $\mathcal{A}$ be an associative conformal algebra, $\M$ be a conformal $\mathcal{A}$-bimodule, and $P: \M \to \mathcal{A}$ be a relative averaging operator. Then for any $f \in C_{cY}^{n \geq 1} (\M, \mathcal{A})$, we have
    \begin{align*}
        \delta_P (f) = (-1)^n ~ \! \delta_{cDiass} (f),
    \end{align*}
    where $\delta_{cDiass}$ is the coboundary map of the induced diassociative conformal algebra $\M_P$ with coefficients in the representation given in Proposition \ref{ind-rep}.
\end{proposition}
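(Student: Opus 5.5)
We plan to prove the identity by expanding $\delta_P(f) = \llbracket P, f\rrbracket = (-1)^1[[\pi,P],f]$ through the explicit formula (\ref{equation x}) with $m=1$, and comparing it term by term with $\delta_{cDiass}(f)$ from (\ref{coboundary-m2}). Here $\delta_{cDiass}$ is computed for the induced algebra $\M_P$ of Proposition \ref{prop*} with values in the representation $\mathcal{A}$ of Proposition \ref{ind-rep}.

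We fix $y\in Y_{n+1}$ and $u_1,\ldots,u_{n+1}\in\M$, and sort the terms of (\ref{equation x}) into three groups.

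\textbf{First group.} These are the terms in which $P$ is applied to $\pi(\ldots f(\ldots)\ldots)$ or $\pi(\ldots u_1\ldots)$, together with the two terms $\pi(P(u_1), f(\ldots))$ and $\pi(f(\ldots),P(u_{n+1}))$. They should reassemble into the outer terms $u_1(\bullet_0^y)_{\la_1}f(d_0y;\ldots)$ and $f(d_{n+1}y;\ldots)(\bullet^y_{n+1})_{\la_1+\cdots+\la_n}u_{n+1}$ of $\delta_{cDiass}$.

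The mechanism is that $\pi$ on $\mathcal{A}\oplus\M$ kills some components, as in the proof of Theorem \ref{thm-mc-avg}. For instance, $\pi(\vdash; u, a)$ with $u\in\M$, $a\in \mathcal{A}$ gives the $\mathcal{A}$-component $0$. Consequently only the combinations appearing in Proposition \ref{ind-rep} survive. For example, with $\bullet_0^y={\vdash}$ we expect $P(u_1)_{\la_1}f(\ldots)$ to appear alone. With $\bullet_0^y={\dashv}$ we expect $P(u_1)_{\la_1}f-P(u_{1\la_1}f)$, which is exactly $u_1\dashv_{\la_1} f$ in the representation. The right end is handled symmetrically.

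To decide which tree-shaped term appears, we use the orientation maps together with the structure maps $R_0^{2;1,n}R_0^{n+1;n+1,1}$, $R^{2;2,n}_0R_0^{n+1;1,1}$, etc. The key combinatorial fact is that $R_0^{2;1,n}$ of a tree records whether the last leaf is $|$-grafted, mirroring the definitions of $\bullet_0$ and $\bullet_{n+1}$.

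\textbf{Second group.} These are the middle terms $g\circ_i(\pi\circ(P\otimes \mathrm{id}))$ and $g\circ_i(\pi\circ(\mathrm{id}\otimes P))$ with $g=f$. In them $f$ is fed $\pi(P(u_i),u_{i+1})$ or $\pi(u_i,P(u_{i+1}))$, and only the $\M$-component matters, since $f$ is defined on $\M$.

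For the tree $R_0^{2;\ast}R_i(y)$ of shape $\vdash$, the surviving term is $P(u_i)_{\la_i}u_{i+1} = u_i\vdash_{\la_i}u_{i+1}$ in $\M_P$. For shape $\dashv$ it is $u_{i\la_i}P(u_{i+1}) = u_i\dashv_{\la_i}u_{i+1}$. The complementary piece vanishes because $\pi(\dashv;(a,0),(0,v))$ has zero $\M$-part.

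We then check that the selected operation matches $\bullet^y_i$ and that the reduced tree $R_0^{n;i,2}(y)$ equals $d_iy$. After this, the sum over $i$ gives $\sum_i(-1)^if(d_iy;\ldots,u_i(\bullet_i^y)_{\la_i}u_{i+1},\ldots)$.

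\textbf{Signs.} Finally we collect signs. With $m=1$, the signs $(-1)^{mn}$, $(-1)^{(i-1)m}$, $(-1)^{im}$ in (\ref{equation x}), together with the prefactor in $\llbracket P,f\rrbracket$, should produce the uniform overall factor $(-1)^n$ relative to $\delta_{cDiass}$.

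\textbf{Main obstacle.} The hardest step is the tree bookkeeping. We must show that, for each $y\in Y_{n+1}$, the composites of face maps in (\ref{equation x}) select precisely the orientation $\bullet_i^y$, and that the reduced trees agree with $d_iy$.

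We would first verify this on $Y_2$ and $Y_3$ using the explicit list in (\ref{id}). We would then argue in general via the classical identities of \cite{Frabetti,loday1,das-cont}, since our structure maps are identical to the classical ones and the $\la$-variables simply ride along. Indeed, the $\la$-bookkeeping is forced by conformal sesquilinearity: each grouping $\la_i+\cdots+\la_{i+n}$ in (\ref{equation x}) collapses to $\la_i+\la_{i+1}$ when $m=1$, matching (\ref{coboundary-m2}).
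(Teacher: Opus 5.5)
Your proposal is correct and follows essentially the same route as the paper: you expand $\llbracket P,f\rrbracket$ via (\ref{equation x}) at $m=1$, use the vanishing components of $\pi$ on $\mathcal{A}\oplus\M$ so that the outer and middle terms assemble into the actions of Proposition \ref{ind-rep} and the products of $\M_P$, and use $R_2^{2;2,n}=d_0$, $R_0^{n;i,2}=d_i$, $R_1^{2;1,n}=d_{n+1}$. Your explicit check that $R_0^{2;2,n}(y)$, $R_i^{n;i,2}(y)$ and $R_0^{2;1,n}(y)$ encode $\bullet_0^y$, $\bullet_i^y$ and $\bullet_{n+1}^y$ is a point the paper leaves implicit. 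One caution on signs: (\ref{equation x}) already contains the prefactor $(-1)^m$ of $\llbracket\,,\,\rrbracket$, so reading off its six terms at $m=1$ gives exactly $(-1)^n$, whereas multiplying by an extra $-1$, as your phrase ``together with the prefactor'' suggests, would give $(-1)^{n+1}$.
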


\begin{proof}
For any $y\in Y_{n+1}$ and $u_1,\ldots,u_{n+1}\in \mathcal{M}$, we have
\begin{align*}
&\delta_P(f)_{\la_1, \ldots, \la_n}(y;u_1,\ldots,u_{n+1})\\
=~& \llbracket P, f \rrbracket_{\la_1, \ldots, \la_n}(y;u_1,\ldots,u_{n+1})\\
=~& P\big(\pi_{\la_1 + \cdots + \la_n} \big(R_0^{2;1,n}(y);f_{\la_1, \ldots, \la_{n-1}} (R_1^{2;1,n}(y);u_1,\ldots,u_n ),u_{n+1} \big)\big)\\
& -(-1)^n P \big(\pi_{\la_1 + \cdots + \la_n} \big( R_0^{2;2,n}(y);u_1,
f_{\la_1, \ldots, \la_{n-1}} (R_2^{2;2,n}(y);u_2,\ldots,u_{n+1} ) \big) \big)\\
& -(-1)^n\sum_{i=1}^{n}(-1)^{i-1}
f_{\la_1, \ldots, \la_{i-1}, \la_i + \la_{i+1}, \la_{i+2}, \ldots, \la_{n} }(R_0^{n;i,2}(y);u_1,\ldots,
\pi_{\la_i} (R_i^{n;i,2}(y);P(u_i),u_{i+1} ),\ldots,u_{n+1} )\\
& +(-1)^n\sum_{i=1}^{n}(-1)^i
f_{\la_1, \ldots, \la_{i-1}, \la_i + \la_{i+1}, \la_{i+2}, \ldots, \la_{n} }(R_0^{n;i,2}(y);u_1,\ldots,
\pi_{\la_i} (R_i^{n;i,2}(y);u_i,P(u_{i+1}) ),\ldots,u_{n+1} )\\
& +(-1)^n ~ \! \pi_{\la_1} (R_0^{2;2,n}(y);
P(u_1),f_{\la_2, \ldots, \la_{n}} (R_2^{2;2,n}(y);u_2,\ldots,u_{n+1} ) )\\
& -\pi_{\la_1+ \cdots+\la_n} (R_0^{2;1,n}(y);
f_{\la_1,\ldots,\la_{n-1}} (R_1^{2;1,n}(y);u_1,\ldots,u_n ),P(u_{n+1}) )\\
=~&(-1)^n \bigg\{
\pi_{\la_1}\big(R_0^{2;2,n}(y);
P(u_1),f_{\la_2, \ldots, \la_n}(R_2^{2;2,n}(y);u_2,\ldots,u_{n+1})\big)\\
&
-P \big( \pi_{\la_1 + \cdots +\la_n}(R_0^{2;2,n}(y);
u_1,f_{\la_1, \ldots, \la_{n-1}}(R_2^{2;2,n}(y);u_2,\ldots,u_{n+1}) ) \big)\\
& +\sum_{i=1}^{n}(-1)^i ~ \! f_{\la_1, \ldots, \la_{i-1}, \la_i + \la_{i+1}, \la_{i+2}, \ldots, \la_{n} } \big( R_0^{n;i,2}(y);u_1,\ldots,u_{i-1}, \pi_{\la_i} (R_i^{n;i,2}(y);P(u_i),u_{i+1} )\\
&\qquad \qquad \qquad \qquad \qquad \qquad \qquad \qquad \qquad \qquad \qquad +\pi_{\la_i} (R_i^{n;i,2}(y);u_i,P(u_{i+1}) ),
u_{i+2},\ldots,u_{n+1}\big)\\
& +(-1)^{n+1} ~  \! \pi_{\la_1+ \cdots+\la_n}\left(R_0^{2;1,n}(y);
f_{\la_1,\ldots,\la_{n-1}}\left(R_1^{2;1,n}(y);u_1,\ldots,u_n\right),P(u_{n+1})\right)\\
& -(-1)^{n+1} P \big( \pi_{\la_1 + \cdots + \la_n} (R_0^{2;1,n}(y);f_{\la_1, \ldots, \la_{n-1}} (R_1^{2;1,n}(y);u_1,\ldots,u_n) ,u_{n+1} ) \big)  \bigg\} 
\end{align*}
\begin{align*}
=~&(-1)^n \bigg\{
u_1(\bullet_0^y)_{\la_1} f_{\la_2, \ldots , \la_n}(d_{0}y;u_2,\ldots,u_{n+1})\\
& \qquad \quad +\sum_{i=1}^{n}(-1)^i
f_{ \la_{1}, \ldots , \la_{i-1}, \la_{i}+ \la_{i+1}, \la_{i+2}, \ldots , \la_{n}}\left(d_{i}y;u_1,\ldots,u_{i-1},
u_i(\bullet_i^y)_{\la_i} u_{i+1},\ldots,u_{n+1}\right)\\
&  \qquad \quad +(-1)^{n+1}f_{\la_1, \ldots, \la_{n-1}}(d_{n+1}y;u_1,\ldots,u_n)
(\bullet_{n+1}^y)_{\la_1 + \cdots + \la_n} u_{n+1}
\bigg\} \\
& \quad \qquad \quad \qquad \quad \qquad \quad  \text{(as $R_2^{2;2,n}(y)=d_0(y)$,~$R_0^{n;i,2}(y)=d_i(y)$ and $R_1^{2;1,n}(y)=d_{n+1}(y)$)}\\
=~&(-1)^n ~ \! \delta_{cDiass}(f)_{\la_1, \ldots, \la_n}(y; u_1,\ldots,u_{n+1}).
\end{align*}
This completes the proof.
\end{proof}

Let $P: \M \rightarrow \mathcal{A}$ be a relative averaging operator. Then the cohomology of the induced diassociative conformal algebra $\M_P$ with coefficients in the representation $\mathcal{A}$ (given in Proposition \ref{ind-rep}) is defined to be the {\bf cohomology} of the relative averaging operator $P$. Then it follows from Proposition \ref{two-coboundary} that, for any $n \geq 2$, the $n$-th cohomology group of the relative averaging operator $P$ is isomorphic to the $n$-th cohomology group of the cochain complex $\{ C^\bullet_{cY} (\M, \mathcal{A}), \delta_P \}$. We note that the cohomology of a relative averaging operator can be used to study the deformations of the operator. In a subsequent paper, we shall study (relative) averaging operators on free conformal algebras \cite{roitman-1,free-roit} and compute the cohomology groups.

\medskip

\noindent {\bf Acknowledgements.} 
Anupam Sahoo would like to thank PMRF, Government of India, for funding the PhD fellowship. Both authors thank the Department of Mathematics, IIT Kharagpur, for providing a beautiful academic atmosphere in which the research has been carried out.

\medskip

\noindent {\bf Data Availability Statement.} Data sharing does not apply to this article as no new data were created or analyzed in this study.

\end{document}